\documentclass[11pt,reqno]{amsart}
\usepackage{amssymb}
\usepackage{amsfonts}
\usepackage{amsmath}
\usepackage{stmaryrd}
\usepackage{physics}
\usepackage{braket}
\usepackage{dsfont}
\usepackage{bbold}
\usepackage{graphicx}
\usepackage{relsize}
\usepackage{makecell}
\usepackage[table,xcdraw]{xcolor}
\usepackage{enumerate}
\usepackage[pagebackref, colorlinks = true, linkcolor = blue, urlcolor  = blue, citecolor = red]{hyperref}
\usepackage[margin=1in]{geometry}
\usepackage{enumitem}
\usepackage{mathtools}
\usepackage{graphbox}
\usepackage{comment}
\usepackage{float}
\usepackage{hhline}
\usepackage[capitalize]{cleveref}
\usepackage{accents}

\renewcommand{\epsilon}{\varepsilon}
\renewcommand{\phi}{\varphi}
\newcommand{\overbar}[1]{\mkern 1.5mu\overline{\mkern-1.5mu#1\mkern-1.5mu}\mkern 1.5mu}

\newcommand{\DUC}{\mathsf{DUC}}
\newcommand{\CDUC}{\mathsf{CDUC}}
\newcommand{\DOC}{\mathsf{DOC}}

\newcommand{\TCP}{\mathsf{TCP}}

\newcommand{\M}[1]{\mathcal{M}_{#1}(\mathbb{C})}

\newcommand{\C}[1]{\mathbb{C}^{#1}}

\newcommand{\Msa}[1]{\mathcal{M}^{sa}_{#1}(\mathbb{C})}

\newcommand{\Wg}{\operatorname{Wg}}
\newcommand{\Mob}{\operatorname{M\ddot{o}b}}
\renewcommand{\ring}[1]{\accentset{\circ}{#1}}

\newtheorem{theorem}{Theorem}[section]

\newtheorem*{definition*}{Definition}
\newtheorem{proposition}[theorem]{Proposition}
\newtheorem{corollary}[theorem]{Corollary}
\newtheorem{lemma}[theorem]{Lemma}
\newtheorem{remark}[theorem]{Remark}

\newtheorem{conjecture}[theorem]{Conjecture}
\newtheorem*{conjecture*}{Conjecture}

\newcommand\vertarrowbox[3][6ex]{
  \begin{array}[t]{@{}c@{}} #2 \\
  \left\uparrow\vcenter{\hrule height #1}\right.\kern-\nulldelimiterspace\\
  \makebox[0pt]{\scriptsize#3}
  \end{array}
}

\theoremstyle{definition}

\definecolor{darkgreen}{rgb}{0,0.392,0}

\providecommand{\customgenericname}{}
\newcommand{\newcustomtheorem}[2]{
  \newenvironment{#1}[1]
  {
   \ifdefined\crefalias\crefalias{innercustomgeneric}{#2}\fi
   \renewcommand\customgenericname{#2}
   \renewcommand\theinnercustomgeneric{##1}
   \innercustomgeneric
  }
  {\endinnercustomgeneric}
  \ifdefined\crefname\crefname{#2}{#2}{#2s}\fi
}

\newcustomtheorem{customthm}{Theorem}
\newcustomtheorem{customlemma}{Lemma}

\newcommand{\diag}{\mathrm{diag}}
\newcommand{\id}{\mathrm{id}}
\newcommand{\la}{\langle}
\newcommand{\ra}{\rangle}
\newcommand{\Om}{\Omega}

\author{Ion Nechita}
\email{nechita@irsamc.ups-tlse.fr}
\address{Laboratoire de Physique Th\'eorique, Universit\'e de Toulouse, CNRS, UPS, France}

\author{Sang-Jun Park}
\email{spark@irsamc.ups-tlse.fr}
\address{Laboratoire de Physique Th\'eorique, Universit\'e de Toulouse, CNRS, UPS, France}

\title{Random covariant quantum channels}

\begin{document}
\begin{abstract}
The group symmetries inherent in quantum channels often make them tractable and applicable to various problems in quantum information theory. In this paper, we introduce natural probability distributions for covariant quantum channels. Specifically, this is achieved through the application of ``twirling operations'' on random quantum channels derived from the Stinespring representation that use Haar-distributed random isometries. We explore various types of group symmetries, including unitary and orthogonal covariance, hyperoctahedral covariance, diagonal orthogonal covariance (DOC), and analyze their properties related to quantum entanglement based on the model parameters. In particular, we discuss the threshold phenomenon for positive partial transpose and entanglement breaking properties, comparing thresholds among different classes of random covariant channels. Finally, we contribute to the PPT$^2$ conjecture by showing that the composition between two random DOC channels is generically entanglement breaking.
\end{abstract}

\maketitle

\tableofcontents

\section{Introduction}

Quantum channels are essential mathematical constructs in quantum theory, modeling the most general evolution open quantum systems can undergo. Generalizing the unitary quantum dynamics described by Schr\"odinger's equation, quantum channels have been studied extensively, from different perspectives \cite{nielsen2010quantum,wilde2017quantum,watrous2018theory}. In physics, quantum channels describe the evolution of quantum states as they propagate through space and/or time, and model usually the noise coming from the environment. In quantum information theory, quantum channels model the transmission of quantum information between distant parties, akin to a classical communication channel but with quantum properties. Mathematically, quantum channels are represented by completely positive, trace preserving maps. 

The study of \emph{covariant quantum evolutions} was initiated by Holevo \cite{holevo1993note,holevo1996covariant}. This type of covariance encodes usually a symmetry present in the system Hamiltonian and can be mathematically formulated as follows: 
$$\forall g \in G, \qquad \Phi\Big(\pi_A(g) \cdot \pi_A(g)^* \Big) = \pi_B(g) \Phi(\cdot) \pi_B(g)^*.$$
Above, $\pi_A$ and $\pi_B$ are unitary representations of the physical symmetry group $G$. A channel $\Phi : \M{d} \to \M{d}$ having the property above is called a \emph{$(\pi_A, \pi_B)$-covariant channel}. This notion is not only important in physically-relevant settings where symmetry is present, but also as a tool in several areas of quantum information theory, such as quantum cloning \cite{keyl1999optimal}, programmability \cite{GBW21}, quantum Shannon theory \cite{WH02, Kin03, Hol05, DHS06, KW09, DTW16, WTM17,  BCLY20}, etc. The importance of this notion stems also from the fact that group symmetries inherent in quantum channels often make them tractable and applicable to various problems in quantum information theory, especially related to quantum entanglement \cite{VW01, COS18, Christandl2018,  siudzinska2018quantum, BCS20, singh2021diagonal, singh2022ppt, PJPY23, PY23}. Their general structure and mathematical properties have been extensively studied in recent years \cite{Al14, MSD17, MS22, LY22}.

Another way to probe the structure of quantum channels is to consider their generic behavior, that is to study \emph{random quantum channels}. Broadly speaking, random quantum channels model \emph{typical} behavior of quantum operations \cite{bruzda2009random}, hence it is of crucial importance to understand their physical and mathematical properties, such as: entanglement, information transmission capacity, spectral properties, etc. A systematic study of the different models of random quantum channels and of their properties can be found in \cite{kukulski2021generating}. Moreover, random constructions for quantum channels have played a crucial role in many areas of quantum Shannon theory, mainly as a \emph{source of (counter-)examples} \cite{HW08, Has09, BCN16, CN16, CP22}; this situation is parallel to classical information theory, where random constructions of codes and channels are ubiquitous.  

\smallskip
In this work, we consider \emph{random covariant quantum channels}. We introduce and study \emph{probability distributions on the set of covariant quantum channels}, for different choices of group symmetry. 
We find in which asymptotic regimes such channels are \emph{PPT} or \emph{entanglement breaking}. In particular, we explain the \textit{threshold phenomenon} for these two properties. We show that in the case of diagonal symmetry, random covariant channels satisfy the \emph{PPT$^2$ conjecture}.
Our work sets up the framework for the study of random covariant channels, by defining the relevant probabilistic models in the most common settings of group symmetry. We expect the models and the results in this paper to find further applications in quantum information theory, in the study of problems having a specific type of symmetry and also as a rich source of examples of channels with a tractable form. 

We shall consider several cases of (random) covariant channels, corresponding to the standard representations of: 
\begin{itemize}
    \item the unitary group $\mathcal U_d = \{U \in \mathcal M_d(\mathbb C) \, : \, U^*U = I_d\}$
    \item the orthogonal group $\mathcal O_d = \{U \in \mathcal M_d(\mathbb R) \, : \, O^{\top}O = I_d\}$
    \item the hyperoctahedral group $\mathcal H_d = \{\pm 1\} \wr \mathcal S_d$ of signed permutation matrices
    \item the diagonal unitary group $\mathcal{DU}_d = \{\operatorname{diag}(e^{\mathrm i \theta_1}, \ldots, e^{\mathrm i \theta_d}) \, : \, \theta_1, \ldots, \theta_d \in \mathbb R\}$
    \item the diagonal orthogonal group $\mathcal{DO}_d = \{\operatorname{diag}(\epsilon_1 \ldots, \epsilon_d) \, : \, \epsilon_1, \ldots, \epsilon_d  = \pm 1\}$.
\end{itemize} 

\begin{figure}[htb]
    \centering
    \includegraphics[scale=1]{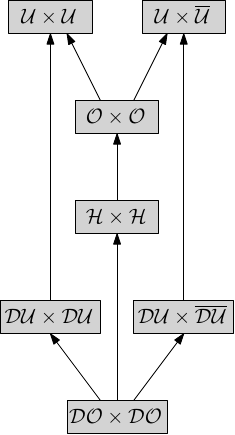} \qquad \qquad \qquad \qquad \includegraphics[scale=1]{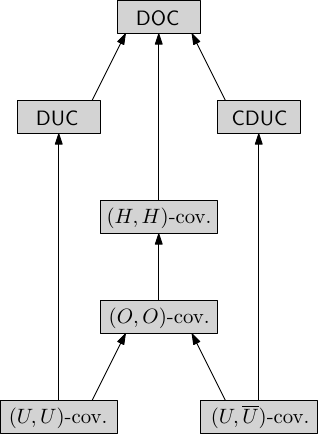}
    \caption{The groups we consider (left) and the corresponding class of covariant channels (right). Edges correspond to set inclusion relations, meaning that the source is contained in the target. Top elements correspond to the largest groups (reps.~channel sets), while bottom elements correspond to the smallest groups (resp.~channel sets).}
    \label{fig:groups-channels-inclusion}
\end{figure}

We depict in \cref{fig:groups-channels-inclusion} the groups (left) and the corresponding classes of covariant channels (right). Edges correspond to set inclusion relations, meaning that the source is contained in the target. Top elements correspond to the largest groups (reps.~channel sets), while bottom elements correspond to the smallest groups (resp.~channel sets). The notation is self-evident, with the exception of (see \cref{sec:DOC}): 
\begin{align*}
    \text{DUC channels} &\quad \leftrightarrow \quad (\mathcal{DU}_d \times \mathcal{DU}_d) \text{ covariance}\\
    \text{CDUC channels} &\quad \leftrightarrow \quad (\mathcal{DU}_d \times \overline{\mathcal{DU}}_d) \text{ covariance}\\
    \text{DOC channels} &\quad \leftrightarrow \quad (\mathcal{DO}_d \times \mathcal{DO}_d) \text{ covariance}.
\end{align*}

The contributions of our work are two-fold: 
\begin{itemize}
\item \emph{Entanglement theory}: we find the range of parameters $s$ for which random covariant channels are asymptotically PPT, respectively EB. In particular, we specify the \textit{threshold} $s_0$ such that, in the limit $d\to\infty$, the probability that random covariant channels are PPT/EB tends to $1$ (resp. $<1$) whenever whenever $s\gg s_0$ (resp. $s\ll s_0$).

\item \emph{PPT$^2$ conjecture}: we generalize the main result of \cite{singh2022ppt}. Furthermore, we show that random DOC channels satisfy the PPT$^2$ conjecture, under very general conditions. 
\end{itemize}

We recall that a quantum channel is \emph{PPT} (positive partial transpose) if the composition of the channel with the (non-physical) transpose map is still a quantum channel. On the other hand, a quantum channel is called \emph{entanglement breaking} (EB) if, when applied to one system of a bipartite quantum state, the output bipartite state is separable \cite{horodecki2003entanglement}.
Our results on the PPT and EB properties of random covariant tables are informally presented in the table below. The rows of the table are ordered, starting with the largest groups on the top, and finishing with the smallest group at the bottom. The last row of the table corresponds to no symmetry whatsoever, hence to random channels before the twirling operation. Here $s_0$ denotes the threshold for corresponding properties. For example, $s_0\sim \text{const.}$ for PPT means that we have the property with probability $1$ (resp. $<1$) whenever $d\to \infty$ and $s=s(d)\to \infty$ (resp. $s$ is fixed). Note that the two last columns are devoted to the entanglement breaking (EB) property of channels: since deciding whether a channels is EB is an $\mathsf{NP}$-hard problem, it is difficult in general to obtain sharp thresholds for this property. All the results present in the table are new, with the exception of the last row.  

\medskip 

\bgroup
\def\arraystretch{1.2}
  \begin{center}

    \label{tab-RanCov}
    \begin{tabular}{|c||c|c|c|} 
\hline
    Group symmetry & PPT property  & EB property & Not EB property \\ \hline 
\hline
    $(U,U)$-cov. & Always & Always & Never  \\
\hline
    $(U,\overline{U})$-cov. & $s_0\sim$ const. & $s_0\sim$ const. & $s_0\sim$ const. \\
\hline
    $(O,O)$-cov. & $s_0\sim$ const. & $s_0\sim$ const. & $s_0\sim$ const.  \\
\hline
    $(H,H)$-cov. & $s_0\sim$ const. & $s_0\sim$ const. & $s_0\sim$ const. \\
\hline
    DUC & $s_0\sim$ const. & $s\gtrsim d^{2+\epsilon}$ & $s=O(1)$  \\
\hline
    CDUC & $s_0\sim 4d$ & $s\gtrsim d^{2+\epsilon}$ & $s\leq (4-\epsilon)d$    \\
\hline
    DOC & $s_0\sim 4d$ & $s\gtrsim d^{2+\epsilon}$ & $s\leq (4-\epsilon)d$   \\
\hline
    No symmetry & $s_0\sim 4d^2$ \cite{Aub12} & $s \gtrsim d^3\log^2 d$ \cite{ASY14} & $s \lesssim d^3$ \cite{ASY14} \\
\hline
    \end{tabular}    
\end{center}
\egroup

\medskip

The formal statements for the unitary and orthogonal cases are given in \cref{thm-RanUUCovEB}; the ones for the hyperoctahedral case are given in \cref{thm-RanHHCovEB}. We present below the statement of a formal result, corresponding to the CDUC, DUC, and DOC rows of the table above; see \cref{thm:DOC-PPT,thm-DOCEB} for the details.

\begin{customthm}{A}
Let $\Phi_\DUC \sim \mu^\DUC_{d,s}$, $\Phi_\CDUC \sim \mu^\CDUC_{d,s}$, and $\Phi_\DOC \sim \mu^\DOC_{d,s}$ be, respectively, random DUC, CDUC, and DOC channels. Then:
\begin{enumerate}
    \item In the regime $s$ is fixed and $d\to \infty$, we have $\lim_{d\to \infty}\mathbb{P}(\Phi_{\DUC} \text{ is PPT })=0$.     
    
    \item In the regime $s\gtrsim d^\epsilon$ for some $\epsilon>0$, $\Phi_\DUC$ is almost surely PPT as $d\to \infty$.
    
    \item In the regime $s/d\to 0$, $\Phi_{\CDUC}$ and $\Phi_{\DOC}$ are almost surely \textit{not} PPT as $d\to \infty$. 
    
    \item Consider the regime $s\sim cd$. 
    \begin{itemize}
        \item If $c<4$, then $\Phi_\CDUC$ and $\Phi_\DOC$ are almost surely \textit{not} PPT as $d\to \infty$.

        \item If $c>4$, then both $\Phi_\CDUC$ and $\Phi_\DOC$ have Choi matrices whose partial transpositions converge, almost surely as $d \to \infty$, in moments, to a probability measure supported on the positive real line.
    \end{itemize}

    \item In the regime $s\gtrsim d^{1+\epsilon}$ for some $\epsilon>0$, $\Phi_\CDUC$ and $\Phi_\DOC$ are almost surely PPT as $d\to \infty$.

    \item In the regime $s\gtrsim d^{2+\epsilon}$ for some $\epsilon>0$, $\Phi_\DUC$, $\Phi_\CDUC$, and $\Phi_\DOC$ are almost surely EB as $d\to \infty$.
\end{enumerate}
\end{customthm}

Note that in item (5) above we show that the smallest eigenvalue of the partial transpose of the Choi matrix of the random channels we consider converges to a positive value. We also obtain results about the \emph{realignment criterion} \cite{chen2003matrix,rudolph2005further} applied to the Choi matrix of DUC, CDUC, and DOC channels in \cref{prop:DOC-realignment}.

\medskip

Our second contribution concerns the \emph{PPT$^2$ conjecture} for DOC channels. Motivated by the theory of quantum repeaters, this conjecture \cite{PPTsq, Christandl2018} has received a lot of attention in the recent years. The conjecture states that the composition of two arbitrary {PPT} quantum channels is entanglement breaking. Recent works establish the conjecture in several restricted scenarios, or with weaker conclusions \cite{Lami2015entanglebreak,Kennedy2017,Rahaman2018,Christandl2018,Chen2019,hanson2020eventually,girard2020convex,jin2020investigation, singh2022ppt}. Moreover, in the case of random quantum channels (without any covariance property), it has been proven \cite{collins2018ppt} that two \emph{independent} random quantum channels that are PPT satisfy the conjecture.

Regarding the quantum channels with diagonal orthogonal symmetry, it has been shown in \cite{singh2022ppt} that the PPT$^2$ conjecture holds for \emph{any} two DUC / CDUC channels, while the same question for DOC channels has been left open. We tackle this question from both \textit{deterministic} and \textit{random} perspectives. The results can be summarized as follows; see \cref{cor-PPTDOC,thm:PPTsquared-DOC} for the details.

\begin{customthm}{B} \label{thmB}
\begin{enumerate}
    \item The PPT$^2$ conjecture holds for the composition between {\emph{any}} DOC and (C)DUC channels. That is, if $\Phi_1,\Phi_2:\M{d}\to \M{d}$ are two PPT channels such that $\Phi_1$ is DOC and $\Phi_2$ is DUC or CDUC, then both $\Phi_1\circ \Phi_2$ and $\Phi_2\circ \Phi_1$ are entanglement breaking.
    
    \item Consider two \emph{random} DOC channels $\Phi_i \sim \mu^\DOC_{d,s_i}$, $i=1,2$, in the asymptotic regime where $d \to \infty$ and $s_i\gtrsim d^{t_i}$ for some constants $t_i>0$. Then, almost surely as $d\to \infty$, the composition $\Phi_1 \circ \Phi_2$ is entanglement breaking.  In particular, in the asymptotic regime above, any two random DOC channels satisfy the PPT$^2$ conjecture. 
\end{enumerate}
\end{customthm}

Note that \cref{thmB} (1) {generalizes} the main result of \cite{singh2022ppt} {by allowing one of the channels to belong to the larger class of DOC channels}. Moreover, it is worth mentioning that the hypotheses of \cref{thmB} (2) are very weak when compared to those of \cite{collins2018ppt}: the channels need not be PPT, nor independent.

\medskip

Our paper is organized as follows. In \cref{sec:background} we recall the theory of covariant quantum channels and that of random quantum channels; we also give the general recipe for sampling random covariant quantum channels. \cref{sec:conjugate-unitary-orthogonal-covariant,sec:hyperoctahedral-covariant,sec:DOC} deal, respectively, with the three main classes of symmetry groups: unitary/orthogonal, hyperoctahedral, and diagonal unitary/orthogonal. These sections present the general structure theorems for the corresponding covariant channels, and then gather our main results regarding the PPT and EB properties of the random covariant quantum channels. Finally, \cref{sec:PPT2} contains our result about DOC channels satisfying the PPT$^2$ conjecture.

\section{Random covariant quantum channels}\label{sec:background}

In this section we recall two of the fundamental notions discussed in this paper: that of symmetric (covariant) quantum channels, and that of random quantum channels. In the rest of the paper, we shall combine these two points of view by analyzing the properties of \emph{random covariant quantum channels}.

\subsection{Covariant quantum channels}

For a compact group $G$, consider (finite-dimensional) {\it unitary representations} $\pi:G\to \mathcal U_{d}$, $\pi_A:G\to \mathcal U_{d_A}$, and $\pi_B:G\to \mathcal U_{d_B}$ of $G$. Then we call
\begin{enumerate}
    \item a matrix $X\in \M{d}$ \textit{$\pi$-invariant} if $\pi(x)X\pi(x)^*=X$ for all $x\in G$,

    \item a linear map $\Phi:\M{d_A}\to \M{d_B}$ \textit{$(\pi_A,\pi_B)$-covariant} if $\Phi(\pi_A(x)Z\pi_A(x)^*)=\pi_B(x)\Phi(Z)\pi_B(x)^*$ for all $x\in G$ and $Z\in \M{d_A}$.
\end{enumerate}
Let us denote by ${\rm Inv}(\pi)$ (resp. ${\rm Cov}(\pi_A,\pi_B)$) the set of all $\pi$-invariant matrices (resp. $(\pi_A,\pi_B)$-covariant linear maps). $\pi$ is called \textit{irreducible} if ${\rm Inv}(\pi)=\C{}I_d$ (which coincides with usual definition of irreducibility in representation theory, thanks to Schur's lemma). If $\pi$ is irreducible, so is the \textit{contragredient representation} $\overline{\pi}:G\to \mathcal U_{d}$ of $\pi$ which is defined by $\overline{\pi}(x)=\overline{\pi(x)}$ for $x\in G$. For unitary representations $\pi_A:G\rightarrow \mathcal{U}_{d_A}$ and $\pi_B:G\rightarrow \mathcal{U}_{d_B}$, the \textit{direct sum representation $\pi_A\oplus \pi_B:G\to \mathcal U_{d_A+d_B}$} (resp. {\it tensor representation} $\pi_A\otimes \pi_B:G\rightarrow \mathcal U_{d_Ad_B}$) is given by $(\pi_A\oplus \pi_B)(x)=\pi_A(x)\oplus \pi_B(x)$ (resp. $(\pi_A\otimes \pi_B)(x)=\pi_A(x)\otimes \pi_B(x)$) for $x\in G$. Two unitary
representations $\pi_1:G\to \mathcal U_{d_1}$ and $\pi_2:G\to \mathcal U_{d_2}$ are said to be \textit{unitarily equivalent} if there exists a unitary matrix $U:\C{d_1}\to \C{d_2}$ such that $\pi_2(x)=U\pi_1(x)U^*$ for all $x\in G$, where we denote by $\pi_1\cong \pi_2$.

The following is a well-known fact in representation theory.

\begin{proposition} [\cite{Simon95}] \label{prop-irrdecomp}
Suppose that a unitary representation $\pi:G\rightarrow \mathcal U_{d}$ has an irreducible decomposition of the form 
\begin{equation}\label{eq-fusion}
    \pi\cong \bigoplus_{i=1}^l \sigma_i\otimes I_{m_i},
\end{equation}
where $\sigma_i:G\to \mathcal U_{n_i}$ ($i=1,2,\ldots, l$) are mutually inequivalent irreducible sub-representations of $\pi$ with multiplicity $m_i$, respectively. Then we have
\begin{equation}\label{eq-fusion2}
    {\rm Inv}(\pi)\cong  \bigoplus_{i=1}^l I_{n_i}\otimes \M{m_i} \subseteq \M{n_i}\otimes \M{m_i}.
\end{equation}
\end{proposition}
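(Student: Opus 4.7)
The plan is to identify ${\rm Inv}(\pi)$ with the commutant $\pi(G)'$ inside $\M{d}$, transport the problem to the canonical form $\bigoplus_i \sigma_i \otimes I_{m_i}$ via the given unitary equivalence, and then apply Schur's lemma block by block. First, since $\pi(x)$ is unitary, the defining condition $\pi(x) X \pi(x)^* = X$ is equivalent to $\pi(x) X = X \pi(x)$, so ${\rm Inv}(\pi) = \pi(G)'$. Conjugating by the unitary intertwiner that witnesses $\pi \cong \bigoplus_i \sigma_i \otimes I_{m_i}$ maps commutant to commutant, so it suffices to compute the commutant of $\pi' := \bigoplus_{i=1}^l \sigma_i \otimes I_{m_i}$ acting on $\bigoplus_{i=1}^l (\C{n_i} \otimes \C{m_i})$.

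Next, I would decompose any $X \in \M{d}$ into blocks $X_{ij} : \C{n_j} \otimes \C{m_j} \to \C{n_i} \otimes \C{m_i}$ along the direct-sum decomposition. The commutation relation $X \pi'(x) = \pi'(x) X$, read block by block, becomes $(\sigma_i(x) \otimes I_{m_i}) X_{ij} = X_{ij} (\sigma_j(x) \otimes I_{m_j})$ for all $x \in G$. Identifying the space of linear maps $\C{n_j} \otimes \C{m_j} \to \C{n_i} \otimes \C{m_i}$ with $\mathcal{L}(\C{n_j}, \C{n_i}) \otimes \mathcal{L}(\C{m_j}, \C{m_i})$, this says that, under the $G$-action, $X_{ij}$ behaves as an intertwiner between $\sigma_j$ and $\sigma_i$ in the first tensor factor and is completely free in the second.

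Schur's lemma then finishes the argument. For $i \neq j$, the irreducibles $\sigma_i$ and $\sigma_j$ are inequivalent, so the only intertwiner $\C{n_j} \to \C{n_i}$ is zero; hence $X_{ij} = 0$. For $i = j$, every self-intertwiner of the irreducible $\sigma_i$ is a scalar multiple of the identity, so $X_{ii}$ lies in $\mathbb{C} I_{n_i} \otimes \M{m_i} = I_{n_i} \otimes \M{m_i}$. Summing over $i$ yields ${\rm Inv}(\pi) \cong \bigoplus_{i=1}^l I_{n_i} \otimes \M{m_i}$, as claimed. There is no real obstacle here; this is a textbook application of Schur's lemma. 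The only point requiring minor care is the tensor factorization of the morphism space, so that the multiplicity labels $\C{m_i}$ are correctly identified as the ``free'' directions producing the $\M{m_i}$ factor, while the intertwiner argument is applied to the $\sigma_i$-factor.
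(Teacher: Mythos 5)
Your proof is correct and is the standard Schur's-lemma/commutant argument; the paper itself does not provide a proof but simply cites \cite{Simon95}, where this is a textbook fact, so there is no alternative route in the paper to compare against. The one place to tighten the write-up is the step ``$X_{ij}$ behaves as an intertwiner in the first factor and is free in the second'': to make it airtight, expand $X_{ij} = \sum_k A_k \otimes B_k$ with $\{B_k\}$ a basis of $\mathcal L(\C{m_j},\C{m_i})$, so that the relation $(\sigma_i(x)\otimes I)X_{ij} = X_{ij}(\sigma_j(x)\otimes I)$ forces each coefficient $A_k$ individually to satisfy $\sigma_i(x)A_k = A_k\sigma_j(x)$, at which point Schur's lemma kills the off-diagonal blocks and scalarizes the diagonal ones exactly as you conclude.
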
 

In this section, we are interested in two types of twirling operations: for unitary representations $\pi,\pi_A,\pi_B$ of $G$,
\begin{enumerate}
    \item \textit{$\pi$-twirling} $\mathcal{T}_{\pi}X:=\mathbb{E}_{G}[{\rm Ad}_{\pi(\cdot)}(X)]=\mathbb{E}_{G}[\pi(\cdot)X\pi(\cdot)^*]$ for a matrix $X\in \M{d}$,

    \item \textit{$(\pi_A,\pi_B)$-twirling} $\mathcal{T}_{\pi_A,\pi_B}\Phi:=\mathbb{E}_G[{\rm Ad}_{\pi_B(\cdot)^*}\circ \Phi\circ {\rm Ad}_{\pi_A(\cdot)}]$ for a linear map $\Phi:\M{d_A}\to \M{d_B}$,
\end{enumerate}
where ${\rm Ad}_U: Z\mapsto UZU^*$ and the expectation is defined with respect to the (normalized) Haar measure of the compact group $G$.

Let us collect some useful properties of the twirling operations for the next section. First of all, $\mathcal{T}_{\pi}$ is a \textit{conditional expectation} onto the $*$-subalgebra ${\rm Inv}(\pi)$ of $\M{d}$. Note that for any (finite-dimensional) von Neumann subalgebra $\mathcal{M}$ of $\M{d}$, there is a unique TP conditional expectation of $\M{d}$ onto $\mathcal{M}$ \cite[Lemma 1.5.11]{BrOz}. For example, the map $X\in \M{n}\otimes \M{m} \mapsto \frac{1}{n}I_n\otimes {\rm Tr}_n(X)$ is the unique TP conditional expectation onto $\mathcal{M}=I_n\otimes \M{m}$. This observation allow us to get the following explicit formula of the twirling map $\mathcal{T}_{\pi}$ for the case $\mathcal{M}={\rm Inv}(\pi)$.

\begin{proposition} \label{prop-twirlformula}
In Proposition \ref{prop-irrdecomp}, let $\Pi_i$ be the orthogonal projection from $\C{d}$ onto $\C{n_i}\otimes \C{m_i}$. Then the twirling $\mathcal{T}_{\pi}(X)$ of $X\in \M{d}$ is given by
\begin{equation}
    \mathcal{T}_{\pi}(X)=\bigoplus_{i=1}^l \frac{1}{n_i}I_{n_i}\otimes {\rm Tr}_{n_i}(\Pi_iX\Pi_i).
\end{equation}
In particular, if the irreducible decomposition of $\pi$ is multiplicity-free, i.e., if $m_i\equiv 1$ for all $i=1,2,\cdots,l$, then 
\begin{equation}
    \mathcal{T}_{\pi}(X)=\sum_{i=1}^l \frac{{\rm Tr}(\Pi_i X)}{n_i}\Pi_i.
\end{equation}
\end{proposition}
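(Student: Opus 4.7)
My plan is to exploit the uniqueness statement recalled just before the proposition: there is a unique trace-preserving conditional expectation of $\mathcal{M}_d(\mathbb{C})$ onto any finite-dimensional von Neumann subalgebra. The strategy is therefore to verify that \emph{both} $\mathcal{T}_\pi$ and the right-hand side of the claimed formula are TP conditional expectations onto $\mathrm{Inv}(\pi)$; uniqueness then identifies them.

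For $\mathcal{T}_\pi$, complete positivity and trace preservation are immediate, as it is an average of unitary conjugations $\mathrm{Ad}_{\pi(x)}$. Its image lies in $\mathrm{Inv}(\pi)$ by left-invariance of the Haar measure,
$$\pi(y)\, \mathcal{T}_\pi(X)\, \pi(y)^* = \mathbb{E}_G\bigl[\pi(yx)\, X\, \pi(yx)^*\bigr] = \mathcal{T}_\pi(X) \quad \text{for every } y \in G,$$
and if $X \in \mathrm{Inv}(\pi)$ the integrand is constantly $X$, giving $\mathcal{T}_\pi(X) = X$. Hence $\mathcal{T}_\pi$ is a TP conditional expectation onto $\mathrm{Inv}(\pi)$.

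For the candidate map $\Psi(X) := \bigoplus_{i} \frac{1}{n_i}\, I_{n_i} \otimes \mathrm{Tr}_{n_i}(\Pi_i X \Pi_i)$, I work in the basis delivered by \cref{prop-irrdecomp}, so that $\mathbb{C}^d = \bigoplus_i \mathbb{C}^{n_i} \otimes \mathbb{C}^{m_i}$ and $\mathrm{Inv}(\pi) = \bigoplus_i I_{n_i} \otimes \mathcal{M}_{m_i}(\mathbb{C})$. By construction $\Psi$ takes values in $\mathrm{Inv}(\pi)$, and it is CP as a composition of CP building blocks (compressions $X \mapsto \Pi_i X \Pi_i$, partial traces, tensorisation with scaled identities). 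Trace preservation follows from $\sum_i \Pi_i = I_d$: $\mathrm{Tr}(\Psi(X)) = \sum_i \mathrm{Tr}(\Pi_i X \Pi_i) = \mathrm{Tr}(X)$. Finally, if $X = \bigoplus_i I_{n_i} \otimes A_i \in \mathrm{Inv}(\pi)$ then $\Pi_i X \Pi_i = I_{n_i} \otimes A_i$ and $\mathrm{Tr}_{n_i}(I_{n_i} \otimes A_i) = n_i A_i$, so $\Psi(X) = X$. Thus $\Psi$ is also a TP conditional expectation onto $\mathrm{Inv}(\pi)$, and uniqueness forces $\mathcal{T}_\pi = \Psi$. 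The multiplicity-free corollary drops out immediately: when $m_i = 1$ one has $\mathrm{Tr}_{n_i}(\Pi_i X \Pi_i) = \mathrm{Tr}(\Pi_i X)$ (a scalar), while $I_{n_i}$ viewed as an operator on $\mathbb{C}^d$ supported on the $i$-th block is exactly $\Pi_i$.

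There is no serious obstacle; the only point needing care is bookkeeping around the identification between operators on each isotypic component $\mathbb{C}^{n_i} \otimes \mathbb{C}^{m_i}$ and their lifts to $\mathbb{C}^d$ through the embeddings $\Pi_i$. Once this identification is pinned down, both verifications above are completely routine, and uniqueness of the TP conditional expectation does the rest of the work.
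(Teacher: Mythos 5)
Your proof is correct and follows exactly the route the paper intends: the paragraph immediately preceding the proposition sets up the uniqueness of the trace-preserving conditional expectation onto a finite-dimensional von Neumann subalgebra, and the paper (which gives no explicit proof) clearly means for the reader to verify, as you do, that both $\mathcal{T}_\pi$ and the displayed formula are TP conditional expectations onto $\mathrm{Inv}(\pi)$ and then invoke uniqueness. No gaps beyond the routine observation (Tomiyama) that a unital CP idempotent onto a subalgebra is automatically a conditional expectation.
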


\begin{proposition} \label{prop-twirlcomposition}
Let $\pi_0,\pi_1:G\to \mathcal U_{d}$ be two unitary representations of $G$ such that ${\rm Ran\,}\pi_1 \subset {\rm Ran\,}\pi_0$. Then
    $$\mathcal{T}_{\pi_0}\circ\mathcal{T}_{\pi_1}=\mathcal{T}_{\pi_1}\circ \mathcal{T}_{\pi_0}=\mathcal{T}_{\pi_0}.$$
\end{proposition}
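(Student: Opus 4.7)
The plan is to leverage the fact, recalled just before the statement, that each twirling operator $\mathcal{T}_{\pi_i}$ is the unique trace-preserving (TP) conditional expectation of $\M{d}$ onto the invariant subalgebra ${\rm Inv}(\pi_i)$. The key reduction is the following observation: the hypothesis ${\rm Ran}\,\pi_1 \subseteq {\rm Ran}\,\pi_0$ forces the \emph{reverse} inclusion ${\rm Inv}(\pi_0) \subseteq {\rm Inv}(\pi_1)$, since any $X$ satisfying $\pi_0(y) X \pi_0(y)^* = X$ for every $y\in G$ automatically satisfies $\pi_1(x) X \pi_1(x)^* = X$ for every $x$, simply by writing $\pi_1(x) = \pi_0(z)$ for some $z\in G$.

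With this in hand, the identity $\mathcal{T}_{\pi_1}\circ \mathcal{T}_{\pi_0}=\mathcal{T}_{\pi_0}$ is immediate: for any $X\in \M{d}$, the output $\mathcal{T}_{\pi_0}(X)$ belongs to ${\rm Inv}(\pi_0) \subseteq {\rm Inv}(\pi_1)$, and $\mathcal{T}_{\pi_1}$ restricts to the identity on ${\rm Inv}(\pi_1)$ (this is a general property of conditional expectations, once we know they fix their image subalgebra pointwise).

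For the remaining identity $\mathcal{T}_{\pi_0}\circ \mathcal{T}_{\pi_1}=\mathcal{T}_{\pi_0}$, I would give a direct Fubini-type computation. Expand
$$(\mathcal{T}_{\pi_0}\circ \mathcal{T}_{\pi_1})(X) = \int_G \int_G \pi_0(y)\pi_1(x) X \pi_1(x)^* \pi_0(y)^* \, dy\, dx;$$
for each fixed $x$, pick any $z\in G$ with $\pi_1(x)=\pi_0(z)$, so the inner integral becomes $\int_G \pi_0(yz) X \pi_0(yz)^*\, dy$, which equals $\mathcal{T}_{\pi_0}(X)$ by left-invariance of the Haar measure on $G$. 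Since this value is independent of $x$, the outer integral does not change it. Equivalently—and this also sidesteps any pointwise selection of $z$—one can note that $\mathcal{T}_{\pi_0}\circ \mathcal{T}_{\pi_1}$ is a TP unital CP map whose image lies in ${\rm Inv}(\pi_0)$ and which fixes ${\rm Inv}(\pi_0) \subseteq {\rm Inv}(\pi_1)$ pointwise; by the uniqueness of the TP conditional expectation onto ${\rm Inv}(\pi_0)$ invoked via \cite[Lemma 1.5.11]{BrOz}, it must coincide with $\mathcal{T}_{\pi_0}$. I do not expect a real obstacle here; the only minor technical point is the choice of $z=z(x)$, but since $z$ enters only inside an integral whose value is independent of the choice, no measurability issue arises.
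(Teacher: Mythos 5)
Your proof is correct. The first identity, $\mathcal{T}_{\pi_1}\circ \mathcal{T}_{\pi_0}=\mathcal{T}_{\pi_0}$, is handled exactly as in the paper, via the inclusion ${\rm Inv}(\pi_0)\subseteq {\rm Inv}(\pi_1)$. For the second identity you offer two routes. Your fallback — observing that $\mathcal{T}_{\pi_0}\circ\mathcal{T}_{\pi_1}$ is a unital CPTP map whose range lies in ${\rm Inv}(\pi_0)$ and which fixes ${\rm Inv}(\pi_0)$ pointwise, hence is the unique TP conditional expectation onto ${\rm Inv}(\pi_0)$ — is essentially the paper's argument; the paper establishes idempotence by the explicit algebraic computation $(\mathcal{T}_{\pi_0}\circ\mathcal{T}_{\pi_1})^2=\mathcal{T}_{\pi_0}\circ(\mathcal{T}_{\pi_1}\circ\mathcal{T}_{\pi_0})\circ\mathcal{T}_{\pi_1}=\mathcal{T}_{\pi_0}\circ\mathcal{T}_{\pi_1}$ rather than by ``fixes ${\rm Inv}(\pi_0)$ pointwise,'' but these two are equivalent ways to see that the composition is a projection onto ${\rm Inv}(\pi_0)$. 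Your primary route — the Fubini/Haar-invariance computation — is genuinely different from the paper's and is more elementary: it sidesteps the appeal to uniqueness of TP conditional expectations entirely and makes the identity transparent from the translation-invariance of the Haar measure. This buys you a self-contained argument that does not rely on \cite[Lemma 1.5.11]{BrOz}, at the cost of having to address (as you correctly do) the non-issue of choosing $z=z(x)$, since the inner integral is constant in $x$. One tiny imprecision: the substitution $y\mapsto yz$ uses right-invariance of the Haar measure rather than left-invariance, though for compact groups this is immaterial since the measure is unimodular.
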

\begin{proof}
Recall that each $\mathcal{T}_{\pi_i}$ is a TP conditional expectation onto ${\rm Inv}(\pi_i)$. Since ${\rm Inv}(\pi_0)\subset {\rm Inv}(\pi_1)$, we obviously have $\mathcal{T}_{\pi_1}\circ \mathcal{T}_{\pi_0}=\mathcal{T}_{\pi_0}$. Next, observe that
    $$(\mathcal{T}_{\pi_0}\circ\mathcal{T}_{\pi_1})\circ (\mathcal{T}_{\pi_0}\circ\mathcal{T}_{\pi_1})=\mathcal{T}_{\pi_0}\circ (\mathcal{T}_{\pi_1}\circ \mathcal{T}_{\pi_0})\circ\mathcal{T}_{\pi_1}=\mathcal{T}_{\pi_0}\circ\mathcal{T}_{\pi_1}\circ \mathcal{T}_{\pi_1}=\mathcal{T}_{\pi_0}\circ\mathcal{T}_{\pi_1}.$$
Since $\mathcal{T}_{\pi_0}\circ\mathcal{T}_{\pi_1}$ is a unital CPTP map and ${\rm Ran}(\mathcal{T}_{\pi_0}\circ\mathcal{T}_{\pi_1})={\rm Inv}(\pi_0)$, $\mathcal{T}_{\pi_0}\circ\mathcal{T}_{\pi_1}$ is a TP conditional expectation onto ${\rm Inv}(\pi_0)$. By the uniqueness, we conclude that $\mathcal{T}_{\pi_0}\circ\mathcal{T}_{\pi_1}=\mathcal{T}_{\pi_0}$.
\end{proof}

Recall that the (unnormalized) Choi-Jamio{\l}kowski matrix $J(\Phi)$ of a linear map $\Phi:\M{d_A}\to \M{d_B}$ is defined by
\begin{equation}\label{eq:def-Choi-matrix}
    J(\Phi):=d_A [\Phi \otimes \id](\ketbra{\Omega_{d_A}}{\Omega_{d_A}}) = \sum_{i,j=1}^{d_A}\Phi(|i\ra\la j|)\otimes |i\ra\la j|\in \M{d_B}\otimes \M{d_A},
\end{equation}
where $\ket{\Omega_d}$ is the \emph{maximally entangled state}
\begin{equation}\label{eq:def-max-ent-state}
    \ket{\Omega_d} := \frac{1}{\sqrt d} \sum_{i=1}^d \ket{ii} \in \C{d} \otimes \C{d},
\end{equation}
and, for future reference, $F_d$ is the \emph{flip} (or swap) unitary operator
\begin{equation}\label{eq:def-flip}
    F_d := \sum_{i,j=1}^d \ketbra{ij}{ji} \in \mathcal U_{d^2}.
\end{equation}
Importantly, the PPT and entanglement breaking (EB) properties of a channel can be read on its Choi matrix:
\begin{align*}
    \Phi \text{ is PPT } &\iff J(\Phi) \text{ is PPT}\\
    \Phi \text{ is EB } &\iff J(\Phi) \text{ is separable.}
\end{align*}
We recall that a positive semidefinite matrix $X \in \M{d_A}\otimes \M{d_B}$ is called 
\begin{align*}
    \text{ PPT } &\quad \text{ if } \quad X^{\Gamma} = [\id \otimes \top](X) \geq 0\\
    \text{ separable } &\quad \text{ if } \quad X = \sum_i A_i \otimes B_i \quad \text{for positive semidefinite matrices } A_i \in \M{d_A}, \, B_i \in \M{d_B}.    
\end{align*}

The following Proposition \ref{prop-twirling} establishes the connection between group symmetries of quantum channels and those of their Choi matrices.

\begin{proposition} [\cite{PJPY23}] \label{prop-twirling}
Let $\pi_A:G\to \mathcal U_{d_A}$ and $\pi_B:G\to \mathcal U_{d_B}$ be unitary representations of $G$. Then, for any linear map $\Phi:\M{d_A}\to \M{d_B}$,
    $$J(\mathcal{T}_{\pi_A,\pi_B}\Phi)=\mathcal{T}_{{\pi_B}\otimes \overline{\pi_A}}\left ( J(\Phi)\right ).$$
In particular, $\Phi\in {\rm Cov}(\pi_A,\pi_B)$ if and only if $J(\Phi)\in {\rm Inv}(\pi_B \otimes \overline{\pi_A})$. Moreover, if $\Phi$ is PPT / EB, then so is $\mathcal{T}_{\pi_A,\pi_B}\Phi$.

\end{proposition}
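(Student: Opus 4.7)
The plan is to deduce everything from a single Choi-matrix conjugation identity and then read off the two consequences as easy corollaries. First I would establish the pointwise identity that, for any unitaries $U\in\mathcal U_{d_B}$, $V\in\mathcal U_{d_A}$ and any linear map $\Phi:\M{d_A}\to \M{d_B}$,
\[
J({\rm Ad}_U \circ \Phi \circ {\rm Ad}_V) \;=\; (U \otimes V^T)\, J(\Phi)\, (U \otimes V^T)^*.
\]
This is a direct computation from the definition $J(\Psi)=\sum_{i,j}\Psi(\ket{i}\bra{j})\otimes \ket{i}\bra{j}$, using the ``ricochet'' identity $(V\otimes I)\ket{\Omega_{d_A}} = (I\otimes V^T)\ket{\Omega_{d_A}}$ to push the input-side conjugation by $V$ onto the second tensor factor as $V^T$, while the output-side conjugation by $U$ stays on the first leg. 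Specializing to $U=\pi_B(g)^*$ and $V=\pi_A(g)$, and using $\pi_A(g)^T=\overline{\pi_A(g^{-1})}=\overline{\pi_A}(g^{-1})$ (valid because $\pi_A(g)$ is unitary), this becomes
\[
J\bigl({\rm Ad}_{\pi_B(g)^*}\circ \Phi \circ {\rm Ad}_{\pi_A(g)}\bigr) = (\pi_B\otimes \overline{\pi_A})(g^{-1})\, J(\Phi)\, (\pi_B\otimes \overline{\pi_A})(g^{-1})^*.
\]
Interchanging $J$ (a finite-dimensional linear operation) with the Haar integral and invoking invariance of the Haar measure under $g\mapsto g^{-1}$ then yields the main identity $J(\mathcal{T}_{\pi_A,\pi_B}\Phi) = \mathcal{T}_{\pi_B\otimes \overline{\pi_A}}(J(\Phi))$.

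The covariance equivalence is then immediate from a fixed-point argument: $\Phi\in {\rm Cov}(\pi_A,\pi_B)$ iff the integrand defining $\mathcal{T}_{\pi_A,\pi_B}\Phi$ is identically $\Phi$, iff $\mathcal{T}_{\pi_A,\pi_B}\Phi=\Phi$; since $J$ is a linear bijection and intertwines the two twirlings via the main identity, this is equivalent to $J(\Phi)$ being fixed by $\mathcal{T}_{\pi_B\otimes \overline{\pi_A}}$, i.e.\ to $J(\Phi)\in {\rm Inv}(\pi_B\otimes \overline{\pi_A})$.

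For preservation of PPT and EB I would argue directly on the Choi side. Each conjugation $X \mapsto (\pi_B\otimes \overline{\pi_A})(g)\, X\, (\pi_B\otimes \overline{\pi_A})(g)^*$ is a \emph{local} unitary on the bipartite system $\M{d_B}\otimes \M{d_A}$, and both the PPT cone and the separable cone are invariant under local-unitary conjugation and under convex combinations (hence under Haar averages, which are limits of convex combinations in a closed cone). Therefore if $J(\Phi)$ lies in either cone, then so does its Haar average $J(\mathcal{T}_{\pi_A,\pi_B}\Phi)$, and $\mathcal{T}_{\pi_A,\pi_B}\Phi$ inherits the PPT, respectively EB, property. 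The only real obstacle is the transpose/complex-conjugate bookkeeping inside the derivation of the main identity: it is exactly this step that forces $\overline{\pi_A}$ (rather than $\pi_A$ or $\pi_A^T$) to appear on the second tensor factor, and keeping the sides of the conjugation consistent is the one subtlety that has to be handled with care.
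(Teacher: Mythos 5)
Your argument is correct, and it is the standard/canonical proof of this fact. The paper does not give a proof here --- it cites Proposition~\ref{prop-twirling} directly from \cite{PJPY23} --- so there is no in-paper proof to compare against, but your derivation (ricochet identity $\to$ pointwise Choi conjugation formula $\to$ substitution $U=\pi_B(g)^*$, $V=\pi_A(g)$ with $\pi_A(g)^{\top}=\overline{\pi_A}(g^{-1})$ $\to$ swap $J$ with the Haar integral and use $g\mapsto g^{-1}$ invariance) is exactly what one would expect that reference to do, and each step checks out. Two small remarks on presentation rather than substance: (i) in the fixed-point step, the implication ``$\mathcal{T}_{\pi_A,\pi_B}\Phi=\Phi\Rightarrow$ integrand $\equiv\Phi$'' is not literally immediate --- it relies on the standard fact that the output of a twirl is always covariant (proved by left-translation invariance of the Haar measure), which you use implicitly; and (ii) similarly, ``fixed by $\mathcal{T}_{\pi_B\otimes\overline{\pi_A}}$ $\iff$ in ${\rm Inv}(\pi_B\otimes\overline{\pi_A})$'' uses the same fact on the matrix side. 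Neither is a gap, just worth making explicit. The PPT/EB preservation argument via local-unitary invariance of the two closed convex cones plus convexity of the Haar average is exactly right, and you correctly identified the one genuine bookkeeping subtlety, namely that the second tensor factor carries $\overline{\pi_A}$ and not $\pi_A$ or $\pi_A^{\top}$.
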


\subsection{Random quantum channels}\label{sec:random-quantum-channels}

We shall now endow the set of all quantum channels with a probability measure, laying down the foundation for the study of symmetric (covariant) channels in the subsequent sections. The set of quantum channels (i.e.~completely positive, trace preserving maps) $\Phi:\M{d}\to \M{d}$ is a compact convex subset of the set of linear maps $\M{d}\to \M{d}$. Actually, since quantum channels preserve Hermitian matrices, one can see it as a subset of hermiticity preserving maps $\Msa{d}\to \Msa{d}$. The trace preservation condition adds a number of linear restrictions, easily seen in the Choi-Jamio{\l}kowski picture from Eq.~\eqref{eq:def-Choi-matrix}: a linear map $\Phi$ is trace preserving iff
$$[\Tr \otimes \id](J(\Phi)) = I_d.$$

Let $V$ be a Haar-distributed random isometry of size $ds\times d$ and define the random Stinespring channel $\Phi_V:\M{d}\to \M{d}$ by
\begin{equation}\label{eq:def-Phi_V}
    \Phi_V(\rho):=(\id_d\otimes {\rm Tr}_s)(V\rho V^*).
\end{equation}

The probability distribution of $\Phi_{V}$ depends on the choice of dimension parameters $d$ and $s$, and we shall denote it by $\mu_{d,s}$. Let us recall that a Haar-distributed random isometry $V: \C{d} \to \C{ds}$ can be obtained by truncating a random Haar-distributed unitary matrix $U \in \mathcal U_{ds}$.

According to \cite{KNPPZ21}, we can give an equivalent definition for a random Stinespring channel $\Phi_V$. We first consider a random Wishart matrix $W=GG^*$ of parameter $(d^2, s)$ and set $H:=(\Tr_d\otimes \id_d)W$. Then $H$ is almost surely invertible, and the probability distribution of $J(\Phi_V)$ is the same as that of
\begin{equation} \label{eq-RanChoi}
    \tilde{J}:=(I_d\otimes H^{-1/2})W(I_d\otimes H^{-1/2}).
\end{equation}

Note that the matrix $\tilde J$ is PPT (resp. separable) if and only if $W$ is PPT (resp. separable). We gather below some of the foundational results from \cite{Aub12, ASY14, collins2018ppt}. In our work, we shall prove similar results in the case of covariant channels.

\begin{proposition}
Consider the random Stinespring channel $\Phi_{d,s}$ with parameter $(d,s)$.
\begin{enumerate}
    \item (Threshold phenomenon of PPT property for $\Phi$) For any $\epsilon>0$
    \begin{itemize}
        \item if $s<(4-\epsilon)d^2$, $\lim_{d \to \infty} \mathbb P(\Phi_{d,s} \text{ is PPT})=0$,

        \item if $s>(4+\epsilon)d^2$, $\lim_{d \to \infty} \mathbb P(\Phi_{d,s} \text{ is PPT})=1$.
    \end{itemize}

    \item (Threshold phenomenon of EB property for $\Phi$) There exist constants $C_1,C_2>0$ and a sequence $s_d$ satisfying 
        $$C_1d^3\leq s_d \leq C_2d^3 (\log d)^2$$ 
    such that for any $\epsilon>0$,
    \begin{itemize}
        \item if $s< (1-\epsilon) s_d$, $\lim_{d \to \infty} \mathbb P(\Phi_{d,s} \text{ is EB})=0$,

        \item if $s> (1+\epsilon) s_d$, $\lim_{d \to \infty} \mathbb P(\Phi_{d,s} \text{ is EB})=1$.
    \end{itemize}

    \item (PPT$^2$ conjecture for $\Phi$) Let $\widetilde{\Phi}_{d,s}$ be a random Stinespring channel independent to $\Phi_{d,s}$. For any $\epsilon>0$, if
        $$(4+\epsilon)d^2<s<(1-\epsilon)s_d$$
    so that $\Phi_{d,s}$ and $\widetilde{\Phi}_{d,s}$ are generically PPT and not EB, then $\lim_{d \to \infty} \mathbb P(\Phi_{d,s}\circ \widetilde{\Phi}_{d,s} \text{ is EB})=1$.
\end{enumerate}
\end{proposition}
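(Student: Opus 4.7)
The plan is to reduce all three parts to random-matrix questions about the Choi matrix of $\Phi_V$, which by \eqref{eq-RanChoi} is distributed as the normalized Wishart matrix $\tilde J = (I_d \otimes H^{-1/2}) W (I_d \otimes H^{-1/2})$. Since $H$ is the partial trace of a Wishart matrix of large parameter $s$, one has $H \approx (s/d) I_d$ with overwhelming probability, so the normalization by $H^{-1/2}$ affects neither the PPT nor the separability properties, and one can work directly with $W$ (appropriately rescaled) throughout.

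For part (1), I would compute the moments $\frac{1}{d^2}\mathbb{E}\Tr\bigl((W^\Gamma/s)^p\bigr)$ by applying the Wick formula to the complex Gaussian entries of $G$ (where $W=GG^*$). Each moment expands as a sum over pair partitions, and in the regime $d\to\infty$ with $s/d^2\to c$ the surviving diagrams produce a shifted semicircular limit law for the empirical spectral distribution of $W^\Gamma$. The lower edge of the support crosses zero precisely at $c=4$; combining the moment limit with almost-sure convergence of the extremal eigenvalues (via a Haagerup--Thorbj\o rnsen-type estimate or a direct high-power trace bound) yields the sharp threshold $s_0\sim 4d^2$.

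For part (2), the strategy exploits a Gurvits--Barnum ball-inclusion: there exists a Hilbert--Schmidt radius $r_d \sim d^{-3/2}$ such that every state in the ball of radius $r_d$ around the maximally mixed state $I_{d^2}/d^2$ is separable. To prove the EB direction $s\gtrsim d^3\log^2 d$, I would bound the operator-norm deviation of $J(\Phi_V)/d$ from $I_{d^2}/d^2$ using $\varepsilon$-net arguments for Haar isometries, and then plug the resulting concentration estimate into the ball criterion; the $\log^2 d$ factor is the price paid for uniform control on the net. For the converse direction $s\lesssim d^3$, one evaluates an entanglement witness -- typically derived from the realignment criterion -- against the random Choi matrix and shows its expectation exceeds the separability threshold with high probability. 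The main obstacle here is the residual $\log^2 d$ gap, which reflects a genuine limitation of the net method rather than a missing idea.

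For part (3), I would write the Choi matrix of $\Phi_{d,s}\circ\widetilde\Phi_{d,s}$ as a tensor-network contraction of the two independent Stinespring isometries $V,\widetilde V$ and their conjugates, and then reapply the Gurvits--Barnum criterion. The decisive point is that the composition involves one extra partial trace over an $s$-dimensional auxiliary system, contributing an additional factor of order $d/s$ to the concentration rate of the composed Choi matrix around $I_{d^2}/d^2$. In the window $(4+\varepsilon)d^2 < s < (1-\varepsilon)s_d$, this gain is just enough to push the composed Choi matrix inside the separability ball, even though neither individual Choi matrix sits inside it. The hardest step is the combinatorial bookkeeping of the composed moment computation: the Wick pairings for $V$ and $\widetilde V$ must be treated independently while the planar structure of the composition is tracked, and the resulting tail estimate must be transported through a suitable $\varepsilon$-net for the witnesses.
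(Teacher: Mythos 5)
This proposition is not proven in the paper: it is stated verbatim as a collection of ``foundational results from \cite{Aub12, ASY14, collins2018ppt}'', so there is no internal proof to compare against --- your task is implicitly to reconstruct the arguments of those references. With that understood, your sketch of part (1) is faithful to Aubrun \cite{Aub12}: he indeed computes moments of the partial transpose of a Wishart matrix, identifies a shifted semicircle in the regime $s\sim cd^2$ with lower edge crossing zero at $c=4$, and upgrades moment convergence to convergence of the extreme eigenvalue via high-power trace estimates. That part of your proposal is sound.

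The gap is in part (2). Your plan is to invoke a Gurvits--Barnum separability ball and then show the Choi matrix concentrates inside it. But this is quantitatively far off. For a state on $\C{d}\otimes\C{d}$ the Gurvits--Barnum Hilbert--Schmidt radius around $I_{d^2}/d^2$ is of order $d^{-2}$ (not $d^{-3/2}$ as you write), while the HS fluctuation of the normalized Choi matrix with environment $s$ is of order $s^{-1/2}$; matching these forces $s\gtrsim d^4$, a full power of $d$ above the claimed threshold. The $\log^2 d$ factor is not merely ``the price paid for a net'' on top of a correct $d^3$: the GB ball method does not reach $d^3$ at all. The actual proof of \cite{ASY14} is considerably more delicate --- it estimates the \emph{mean width} (Gaussian width) of the set of separable states, showing it is of order $\sqrt{d^3}$, and compares this to the spectral distribution of the random state; the $\log^2 d$ loss comes from a chaining/Dudley-type entropy bound in that comparison, not from an $\epsilon$-net over isometries. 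Similarly, part (3) in \cite{collins2018ppt} does not proceed by pushing the composed Choi matrix into a GB ball either; it exploits the specific structure of the composed Stinespring dilation together with a separability criterion adapted to the extra partial trace. So while your reduction to the Wishart picture via \eqref{eq-RanChoi} and your treatment of part (1) are correct, the convex-geometric engine you propose for parts (2) and (3) is both numerically miscalibrated and strategically different from what is needed to obtain the stated $d^3$-scale thresholds.
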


Let us end this introductory section by providing the general recipe that shall be used to produce \textit{random covariant channels} in the following sections:
\begin{enumerate}
    \item first, we sample a random quantum channel $\Phi_V$, without any symmetry, from the ensemble $\mu_{d,s}$, see \eqref{eq:def-Phi_V}
    \item we then symmetrize $\Phi$ using the twirling operation below, to obtain a random $(\pi_A, \pi_B)$-covariant channel $\Phi_{\pi_A,\pi_B}$:
    \begin{equation} \label{eq-random-twirling}
        \Phi_{\pi_A,\pi_B}(X):=\mathcal T_{\pi_A, \pi_B}(\Phi_V)(X)= \int_G \pi_B(g)^* \Phi_V\Big(\pi_A(g) X \pi_A(g)^* \Big) \pi_B(g) \, \mathrm{d}\mathfrak h(g),
    \end{equation}
    where $\mathfrak h$ is the Haar measure on the group $G$.
\end{enumerate}

\section{Random orthogonal covariant channels}\label{sec:conjugate-unitary-orthogonal-covariant}

In this section we study three basic and important classes of linear maps in quantum information theory, those having a unitary or orthogonal covariance group. Unitary covariance is a very strong form of symmetry, since input states can be diagonalized, the effect of the unitary rotation being transferred to the output of the channel. Such strong symmetry makes very suitable for analysis in various settings related to quantum Shannon theory. We recall below their definition. 

\begin{enumerate}
    \item The \emph{depolarizing maps} characterized by $(U,U)$-covariance:
        $$\Phi(UZU^*)=U\Phi(Z)U^*, \;\;  Z\in \M{d},\;U\in \mathcal U_d,$$

    \item The \emph{transpose depolarizing maps}
    characterized by $(U,\overline{U})$-covariance:
        $$\Phi(UZU^*)=\overline{U}\Phi(Z)U^{\top}, \;\; Z\in \M{d},\; U\in \mathcal U_{d},$$

    \item The \textit{orthogonal covariant maps} 
    characterized by $(O,O)$-covariance:
        $$\Phi(OZO^{\top})=O\Phi(Z)O^{\top}, \;\;  Z\in \M{d},\;O\in \mathcal{O}_d,$$
\end{enumerate}

\begin{proposition}\label{prop:UUO-covariant}
    Let $\Phi : \M{d} \to \M{d}$ be a quantum channel. Then, the $(U,U)$, $(U, \overline U)$, $(O,O)$ twirlings of $\Phi$ are given, respectively, by: 
    \begin{align}
        \mathcal T_{U,U}(\Phi) &= p \cdot \id + (1-p) \cdot \Delta \nonumber\\
        \mathcal T_{U, \overline U}(\Phi) &= q \cdot \top + (1-q) \cdot \Delta \label{eq-OOtwirl}\\
        \mathcal T_{O,O}(\Phi) &=   p'\cdot \id + q' \cdot \top + (1-p'-q')\cdot \Delta, \nonumber
    \end{align}
    where $\id, \top, \Delta$ are, respectively, the identity, the transposition, and the completely depolarizing linear maps. Here the coefficients $p,q,p',q'$ are explicitly related to $\Phi$ by
    \begin{align*}
        p&=\frac{\lambda_1-1}{d^2-1},\\
        q&=\frac{\lambda_2-1}{d^2-1},\\
        p'&=\frac{(d+1)\lambda_1-\lambda_2-d}{d(d+2)(d-1)},\\
        q'&=\frac{(d+1)\lambda_2-\lambda_1-d}{d(d+2)(d-1)},
    \end{align*}
    where the parameters $\lambda_{1,2}$ associated to the channel are defined (see also Figure \ref{fig:lambda-1-2-Phi}):
 \begin{align*}
     \lambda_1(\Phi) &:= d\la \Om_d|J(\Phi)|\Om_d\ra \\
     \lambda_2(\Phi) &:= \Tr(J(\Phi)F_d).
 \end{align*}
 Above we have used maximally entangled state $\ket{\Omega_d}$ and the flip operator $F_d$ defined respectively in \cref{eq:def-max-ent-state} and \cref{eq:def-flip}.

\end{proposition}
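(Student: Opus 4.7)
My plan is to invoke \cref{prop-twirling} to transfer the problem from the level of maps to the level of Choi matrices: in each case it suffices to compute
\[
J(\mathcal{T}_{\pi_A,\pi_B}\Phi) = \mathcal{T}_{\pi_B\otimes \overline{\pi_A}}(J(\Phi))
\]
for the tensor product representations $U\otimes\overline U$, $\overline U\otimes\overline U$, and $O\otimes O$ respectively, and then translate the resulting matrix back into a combination of $\id,\top,\Delta$ using the Choi identifications $J(\id) = d\ketbra{\Omega_d}{\Omega_d}$, $J(\top)=F_d$, and $J(\Delta)=I_{d^2}/d$.

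The irreducible decompositions of the relevant representations are classical. Writing $P_\Omega := \ketbra{\Omega_d}{\Omega_d}$ and $P_\pm := (I_{d^2}\pm F_d)/2$: the representation $U\otimes\overline U$ splits as trivial $\oplus$ adjoint with projections $P_\Omega$ and $I_{d^2}-P_\Omega$ of dimensions $1$ and $d^2-1$; the representation $\overline U\otimes\overline U$ splits as symmetric $\oplus$ antisymmetric with projections $P_\pm$ of dimensions $d(d\pm 1)/2$; and $O\otimes O$ splits into trivial $\oplus$ symmetric-traceless $\oplus$ antisymmetric with projections $P_\Omega$, $P_+-P_\Omega$, $P_-$ of dimensions $1$, $(d-1)(d+2)/2$, and $d(d-1)/2$. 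Since each decomposition is multiplicity-free, the second part of \cref{prop-twirlformula} writes each twirled Choi matrix as a linear combination of these projectors with scalar coefficients $\Tr(\Pi_i J(\Phi))/n_i$.

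To compute those inner products I would use only three data: trace preservation $\Tr J(\Phi)=d$, together with the definitions $\Tr(J(\Phi)P_\Omega)=\lambda_1/d$ and $\Tr(J(\Phi)F_d)=\lambda_2$, which together give $\Tr(J(\Phi)P_\pm)=(d\pm\lambda_2)/2$. Equating the resulting matrix with $J(p\id+(1-p)\Delta)$, $J(q\top+(1-q)\Delta)$, and $J(p'\id+q'\top+(1-p'-q')\Delta)$ respectively yields, for each case, a linear system in the unknown coefficients. The $(U,U)$ and $(U,\overline U)$ cases reduce to single scalar equations that immediately deliver $p=(\lambda_1-1)/(d^2-1)$ and $q=(\lambda_2-1)/(d^2-1)$.

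The main obstacle is purely computational and concerns the orthogonal case: one must perform a $3\times 3$ change of basis between the projector basis $\{P_\Omega, P_+-P_\Omega, P_-\}$ and the map basis $\{J(\id),J(\top),J(\Delta)\}$, keeping track of the fact that $P_\Omega$ sits inside $P_+$ and hence overlaps with the symmetric-traceless component. I expect the denominator $d(d-1)(d+2)$ in $p'$ and $q'$ to emerge naturally from the dimensions of the two higher-dimensional irreducibles, and the symmetry $\lambda_1 \leftrightarrow \lambda_2$ in the formulas for $p' \leftrightarrow q'$ to reflect the symmetry $\id \leftrightarrow \top$. Since no conceptual subtlety arises once the representation-theoretic decomposition is in hand, this step reduces to a routine linear inversion.
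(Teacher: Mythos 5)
Your proposal is correct and follows essentially the same route as the paper: invoke \cref{prop-twirling} to pass to Choi matrices, decompose $U\otimes\overline U$, $\overline U\otimes\overline U$, and $O\otimes O$ into irreducibles with the projectors $P_\Omega$, $P_\pm$ (and $P_+ - P_\Omega$, $P_-$ for the orthogonal case), apply the multiplicity-free case of \cref{prop-twirlformula}, and then solve the resulting linear system in the $\{J(\id),J(\top),J(\Delta)\}$ basis. This is exactly what the paper does, with the same traces $\Tr J(\Phi)=d$, $\Tr(J(\Phi)P_\Omega)=\lambda_1/d$, and $\Tr(J(\Phi)P_\pm)=(d\pm\lambda_2)/2$ as the only inputs.
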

\begin{proof}
All the formulas follow from \cref{prop-twirlformula}. Regarding the $(U,U)$-twirling $\mathcal{T}_{U,U}$, note that the representation $U\in \mathcal{U}_d \mapsto U\otimes \overline{U}$ can be decomposed into two irreducible sub-representations whose corresponding projections are
    $$|\Om_d\ra\la \Om_d|, \quad I_{d^2}-|\Om_d\ra\la \Om_d|.$$
Therefore, \cref{prop-twirlformula} and \ref{prop-twirling} implies that
\begin{align*}
    J(\mathcal{T}_{U,U}\Phi)&=\mathcal{T}_{U\otimes \overline{U}}(J(\Phi))\\
    &=\Tr(|\Om_d\ra\la \Om_d|J(\Phi))|\Om_d\ra\la \Om_d|+\Tr((I_{d^2}-|\Om_d\ra\la \Om_d|)J(\Phi))\frac{I_{d^2}-|\Om_d\ra\la \Om_d|}{d^2-1}\\
    &=\frac{\lambda_1}{d}|\Om_d\ra\la \Om_d|+\left(d-\frac{\lambda_1}{d}\right)\frac{I_{d^2}-|\Om_d\ra\la \Om_d|}{d^2-1}\\
    &=\frac{\lambda_1-1}{d^2-1}d|\Om_d\ra\la \Om_d|+ \frac{d^2-\lambda_1}{d^2-1}\frac{I_{d^2}}{d},
\end{align*}
where we used $\Tr(J(\Phi))=d$ since $\Phi$ is TP. Therefore, we have the formula $\mathcal{T}_{U,U}(\Phi)$ as in \cref{eq-OOtwirl}.

The remaining formulas can be derived analogously by noting that the representations $U\in \mathcal{U}_d\mapsto U\otimes U$ and $O\in \mathcal{O}_d\mapsto O\otimes O$ have the irrep decompositions corresponding to the projections (see \cite[Appendix A]{tura2016characterizing} and \cite{Bra37})
    $$\Pi_{\rm sym}:=\frac{I_{d^2}+F_d}{2},\quad \Pi_{\rm anti}:=\frac{I_{d^2}-F_d}{2},$$
and
    $$|\Om_d\ra\la \Om_d|, \quad \Pi_{\rm sym}-|\Om_d\ra\la \Om_d|,\quad \Pi_{\rm anti},$$
respectively.
 \end{proof}

  \begin{figure}
     \centering
     \includegraphics{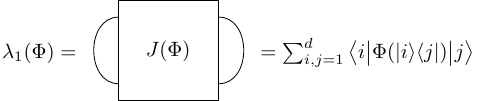} \\ \medskip
     \includegraphics{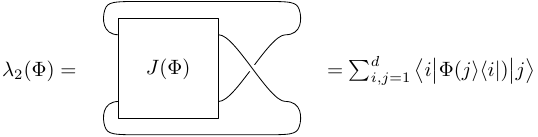}
     \caption{The quantities $\lambda_{1,2}(\Phi)$ associated to a quantum channel $\Phi : \M{d} \to \M{d}$.}
     \label{fig:lambda-1-2-Phi}
 \end{figure}

 The channel parameters $\lambda_{1,2}(\Phi)$ can take values in specific intervals, as follows. 

 \begin{proposition}
     For a channel $\Phi : \M{d} \to \M{d}$, we have
     \begin{align*}
        0 \leq &\lambda_1(\Phi) \leq d^2 \\
        -d \leq &\lambda_2(\Phi) \leq d,
     \end{align*}
     with the inequalities being tight. 
 \end{proposition}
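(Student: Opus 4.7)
The plan is to exploit the two defining properties of the Choi matrix $J(\Phi)$ for a quantum channel $\Phi$, namely positive semidefiniteness (from complete positivity) and the trace condition $\Tr J(\Phi) = d$ (from trace preservation, as recalled right after \cref{eq:def-Choi-matrix}). Both bounds should then reduce to the general fact that, for any orthogonal projection $P$ and any PSD matrix $X$, one has $0 \leq \Tr(PX) \leq \Tr(X)$.

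For $\lambda_1(\Phi) = d\langle\Omega_d|J(\Phi)|\Omega_d\rangle = \Tr\bigl(|\Omega_d\rangle\langle\Omega_d|\cdot d J(\Phi)/d\bigr)$, I would rewrite the quantity as $d\cdot\Tr\bigl(|\Omega_d\rangle\langle\Omega_d| \, J(\Phi)\bigr)$; since $|\Omega_d\rangle\langle\Omega_d|$ is a rank-one projection and $J(\Phi)\geq 0$ with $\Tr J(\Phi)=d$, this trace lies in $[0,d]$, so $0\leq \lambda_1(\Phi)\leq d^2$. For $\lambda_2(\Phi) = \Tr(J(\Phi) F_d)$, I would use the decomposition $F_d = \Pi_{\mathrm{sym}} - \Pi_{\mathrm{anti}}$ into the symmetric/antisymmetric projectors already mentioned in the proof of \cref{prop:UUO-covariant}. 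Writing $\lambda_2(\Phi) = 2\Tr(J(\Phi)\Pi_{\mathrm{sym}}) - \Tr J(\Phi) = 2\Tr(J(\Phi)\Pi_{\mathrm{sym}}) - d$ and again using $0\leq \Tr(J(\Phi)\Pi_{\mathrm{sym}})\leq d$, one obtains $-d\leq \lambda_2(\Phi)\leq d$.

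For tightness, I would exhibit explicit channels saturating each bound. The identity channel $\id$ has $J(\id)=d|\Omega_d\rangle\langle\Omega_d|$, giving $\lambda_1(\id)=d^2$ and $\lambda_2(\id)=d$ (since $F_d|\Omega_d\rangle=|\Omega_d\rangle$). The completely depolarizing channel $\Delta$ has $J(\Delta)=I_{d^2}/d$, which with a careful computation could serve as an intermediate example but is not extremal. The key example is the Werner--Holevo channel $\Phi(\rho)=\frac{1}{d-1}(\Tr(\rho)I - \rho^{\top})$, whose Choi matrix is $\frac{2}{d-1}\Pi_{\mathrm{anti}}$ (a direct computation); since $|\Omega_d\rangle$ is symmetric and $F_d\Pi_{\mathrm{anti}} = -\Pi_{\mathrm{anti}}$, this yields $\lambda_1=0$ and $\lambda_2=-d$, saturating both lower bounds.

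I do not anticipate a serious obstacle: the entire argument is a one-line Cauchy--Schwarz / projection estimate applied to the PSD matrix $J(\Phi)$, plus naming the extremal channels. The only mild care needed is to observe that $|\Omega_d\rangle\langle\Omega_d|$ and $\Pi_{\mathrm{sym}},\Pi_{\mathrm{anti}}$ are indeed orthogonal projections (the first by definition; the latter because $F_d$ is a self-adjoint unitary with eigenvalues $\pm 1$), which legitimizes the sandwich $0\leq \Tr(P J(\Phi))\leq \Tr J(\Phi)=d$.
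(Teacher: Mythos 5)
Your proof is correct, but takes a genuinely different route from the paper's. The paper argues via the twirling maps: it notes that $\Phi \mapsto \mathcal T_{U,U}(\Phi)$ surjects onto the set of depolarizing channels $p\cdot\id + (1-p)\cdot\Delta$ and then cites the known admissible ranges $p\in[-1/(d^2-1),1]$ and $q\in[-1/(d-1),1/(d+1)]$, translating back to $\lambda_1,\lambda_2$ through the affine formulas of Proposition~\ref{prop:UUO-covariant}. You instead read the bounds directly off the Choi matrix, using only that $J(\Phi)\geq 0$ and $\Tr J(\Phi)=d$, together with the projection sandwich $0\leq\Tr(PX)\leq\Tr X$ for $P\in\{|\Omega_d\rangle\langle\Omega_d|,\,\Pi_{\rm sym}\}$ and the identity $F_d = 2\Pi_{\rm sym}-I_{d^2}$. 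Your approach is more elementary and self-contained: it avoids invoking the (external) classification of admissible depolarizing-noise parameters and it exhibits explicit extremal channels (the identity channel for the upper bounds, the Werner--Holevo channel $\rho\mapsto\frac{1}{d-1}(\Tr(\rho)I-\rho^\top)$ with Choi matrix $\frac{2}{d-1}\Pi_{\rm anti}$ for the lower bounds), so tightness is established rather than implicitly inherited. The paper's argument is shorter and leverages the framework it has just set up, but yours makes the mechanism behind the bounds transparent. Both are valid.
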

 \begin{proof}
     The map $\Phi \mapsto \mathcal T_{U,U}(\Phi)$ is onto the set of channels of the form $ p \cdot \id + (1-p) \cdot \Delta$. The range of $p$ for which the latter maps are quantum channels is known to be 
     $$p \in \Big[ \frac{-1}{d^2-1}, 1 \Big],$$
     hence the claim for $\lambda_1(\Phi)$ follows. The second claim can be proven in a similar manner, since the allowed range of $q$ is
     $$q \in \Big[ \frac{-1}{d-1}, \frac{1}{d+1} \Big].$$
 \end{proof}

 The PPT and entanglement breaking properties of (conjugate) unitary and orthogonal covariant channels are fully characterized by the parameters $\lambda_{1,2}(\Phi)$, as follows. 

 \begin{proposition}[\cite{VW01}]\label{prop:conditions-PPT-EB-UUOO}
 For the parameters $\lambda_1(\Phi)$ and $\lambda_2(\Phi)$ defined above,
 \begin{enumerate}
     \item $\mathcal{T}_{U,U}(\Phi)$ is PPT iff it is EB iff $0\leq \lambda_1(\Phi)\leq d$;

     \item $\mathcal{T}_{U,\overline{U}}(\Phi)$ is PPT iff it is EB iff $0\leq \lambda_2(\Phi)\leq d$;

     \item $\mathcal{T}_{O,O}(\Phi)$ is PPT iff it is EB iff $0\leq \lambda_1(\Phi),\lambda_2(\Phi)\leq d$.
 \end{enumerate}
 \end{proposition}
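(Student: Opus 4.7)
The plan is to transfer the PPT and EB properties to the Choi matrix via \cref{prop-twirling}: for a $(\pi_A,\pi_B)$-covariant channel, the Choi matrix lies in $\operatorname{Inv}(\pi_B\otimes \overline{\pi_A})$, which in our three cases is respectively the set of isotropic states, Werner states, and a three-dimensional family of $(O\otimes O)$-invariant states. On each of these symmetric families, the classical characterizations of separability/PPT due to the Horodeckis and to Werner (cf.~\cite{VW01}) coincide, and can be written down explicitly in terms of the two scalar invariants $\lambda_1(\Phi)$ and $\lambda_2(\Phi)$.

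For case (1), the Choi matrix of $\mathcal{T}_{U,U}(\Phi)$ has already been computed in the proof of \cref{prop:UUO-covariant} as an isotropic state whose maximally entangled fraction is $\lambda_1/d^2$. The Horodecki criterion for isotropic states yields PPT $\iff$ separable $\iff \lambda_1/d^2 \leq 1/d$, i.e.~$\lambda_1\leq d$; the lower bound $\lambda_1\geq 0$ is automatic from $J(\Phi)\geq 0$. For case (2), an analogous computation gives $J(\mathcal{T}_{U,\overline U}\Phi)$ as a Werner state $\alpha \Pi_{\rm sym} + \beta \Pi_{\rm anti}$ with $\operatorname{Tr}(J F_d)=\lambda_2$, and the Werner criterion gives PPT $\iff$ separable $\iff \operatorname{Tr}(JF_d)\geq 0$, i.e.~$\lambda_2\geq 0$, while positivity of $J$ forces the upper bound $\lambda_2\leq d$.

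For case (3), the invariant subspace $\operatorname{Inv}(O\otimes O)$ is three-dimensional and decomposes into the mutually orthogonal PSD extreme rays $\ketbra{\Omega_d}{\Omega_d}$, $\Pi_{\rm sym}-\ketbra{\Omega_d}{\Omega_d}$, and $\Pi_{\rm anti}$, appearing in \cref{prop:UUO-covariant}. I would expand $J(\mathcal{T}_{O,O}\Phi)$ in this basis with coefficients read off from $\lambda_1,\lambda_2$, then apply the partial transposition (using $F_d^\Gamma = d\ketbra{\Omega_d}{\Omega_d}$ and $\ketbra{\Omega_d}{\Omega_d}^\Gamma = F_d/d$) and check positivity of the result. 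The PPT inequalities would decouple into the two previous conditions $\lambda_1\leq d$ and $\lambda_2\geq 0$. For the EB direction, I would split such a PPT $(O,O)$-invariant state as a convex combination of an isotropic and a Werner component, each of which is PPT separately and hence separable by cases (1) and (2).

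The main obstacle is the EB direction in case (3): while the PPT direction is an eigenvalue computation in a three-dimensional subspace, the implication PPT $\Rightarrow$ EB requires building an explicit separable decomposition of an arbitrary PPT $(O,O)$-invariant state. The trick is to choose the split into isotropic and Werner parts so that non-negativity of all coefficients is preserved; this is precisely the content of the symmetry analysis in \cite{VW01}, which I would invoke as the key input to close the argument.
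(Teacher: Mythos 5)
The paper does not supply a proof for this proposition; it simply cites \cite{VW01}, and your proposal correctly identifies \cite{VW01} as the ultimate source. Your reconstructions of cases (1) and (2) are accurate: after twirling, the Choi matrix becomes isotropic (resp.\ Werner-type) with maximally entangled fraction $\lambda_1/d^2$ (resp.\ flip expectation $\lambda_2/d$), and the classical Horodecki/Werner criteria deliver exactly the stated inequalities, with the remaining bounds $\lambda_1 \geq 0$ and $\lambda_2 \leq d$ automatic from $J(\Phi)\geq 0$ and $\Tr J(\Phi)=d$.

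There is, however, a genuine flaw in your proposed technique for the EB direction of case (3). You claim one can write any PPT $(O,O)$-invariant $J$ as a sum of a PSD isotropic component and a PSD Werner component. This is false. Write $J = a\Pi_0 + b\Pi_1 + c\Pi_2$ with $\Pi_0 = \ketbra{\Omega_d}{\Omega_d}$, $\Pi_1 = \Pi_{\rm sym}-\Pi_0$, $\Pi_2 = \Pi_{\rm anti}$. A PSD isotropic matrix has the form $a_1\Pi_0 + b_1(\Pi_1+\Pi_2)$ with $a_1,b_1\geq 0$, and a PSD Werner matrix has the form $a_2(\Pi_0+\Pi_1)+c_2\Pi_2$ with $a_2,c_2\geq 0$. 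Matching coefficients forces $a=a_1+a_2$, $b=b_1+a_2$, $c=b_1+c_2$. Now take $a=c=0$ and $b>0$ (normalized so $\Tr J = d$); this gives $\lambda_1=0$ and $\lambda_2=d$, so $J$ is PPT according to the proposition. But $a=0$ forces $a_1=a_2=0$, hence $b_1=b>0$, and then $c_2 = -b<0$, a contradiction. One can check the obstruction persists even if one allows the isotropic/Werner pieces to have the slightly negative coefficients permitted by positivity. So the two-term split is not the argument used in \cite{VW01}; the actual separable decomposition of the PPT $(O,O)$-invariant region (equivalently, the Brauer-invariant states) is obtained there by a more careful direct construction. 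Since you ultimately defer to \cite{VW01} for this step, the overall logical structure is sound, but the suggested mechanism should be discarded.
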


 We define now random covariant quantum channels by twirling the random quantum channels introduced in Section \ref{sec:random-quantum-channels}. More precisely, we introduce the following image distributions: 
 \begin{align*}
     \mu^{U,U}_{d,s} &:= (\mathcal T_{U,U})_\# \mu_{d,s}\\
     \mu^{U,\overline U}_{d,s} &:= (\mathcal T_{U,\overline U})_\# \mu_{d,s}\\
     \mu^{O,O}_{d,s} &:= (\mathcal T_{O,O})_\# \mu_{d,s}.
 \end{align*}

Recall that $\mu_{d,s}$ is the probability distribution of the random channel $\Phi_V$ associated to the Haar random isometry $V:\C{d}\to \C{d}\otimes \C{s}$. The distributions above are characterized by the parameters $p,q,p',q'$ from Proposition \ref{prop:UUO-covariant}. In turn, these are related to the isometry $V$ defining the channel $\Phi$ via the following quantities: 
 \begin{align*}
     \lambda_1^{(d,s)} &:= \lambda_1(\Phi_V) = d\la \Om_d|J(\Phi_V)|\Om_d\ra \\
     \lambda_2^{(d,s)} &:= \lambda_2(\Phi_V) = \Tr(J(\Phi_V)F_d),
 \end{align*}
 see Figure \ref{fig:lambda-1-2} for a graphical representations of these scalars. Above, the random channel $\Phi_V$ is defined via the Stinespring dilation from the Haar-distributed random isometry $V : \C{d} \to \C{d} \otimes \C{s}$: 
 $$\Phi_V ( \rho ) = \Tr_s \big[ V\rho V^* \big].$$

 \begin{figure}
     \centering
     \includegraphics{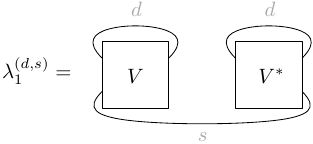} \qquad \qquad \includegraphics{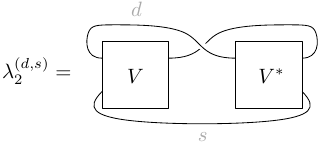}
     \caption{The quantities $\lambda_{1,2}^{(d,s)}$ associated to a Haar-distributed random isometry $V : \mathbb C^d \to \mathbb C^d \otimes \mathbb C^s$.}
     \label{fig:lambda-1-2}
 \end{figure}

 In order to understand the PPT and the entanglement breaking behavior of random (conjugate) unitary and orthogonal covariant channels, we need first to understand the large $d$ limit of the random variables $\lambda_{1,2}^{(d,s)}$. To start with, it follows from \cref{lem:moments-lambda-123} that the random variables $\lambda_{1,2}^{(d,s)}$ have average and variance given by
 $$\mathbb E \Big[ \lambda_{1,2}^{(d,s)} \Big] = 1 \qquad \text{and} \qquad \operatorname{Var}\Big[ \lambda_{1,2}^{(d,s)} \Big] =  \frac{s(d^2-1)}{(ds)^2-1}.$$
 Note that although $\lambda_{1,2}^{(d,s)}$ have the same first two moments, they do not have the same distribution, see \cref{lem:moments-lambda-123} for the exact moments of all orders. In fact, they have very different behavior as $d \to \infty$, as we shall see next. We first recall the definitions of the \emph{normal} (or \emph{Gaussian}) distribution of mean $\mu$ and variance $\sigma^2$:
 $$X \sim \mathcal N(m, \sigma^2) \qquad \iff \qquad \mathrm{d} X  = \frac{1}{\sqrt{2\pi \sigma^2}} \exp\left( \frac{-(x-m)^2}{2\sigma^2} \right) \mathrm{d} x,$$
 and that of the \emph{Gamma} distribution of parameters $(\alpha, \beta)$
 $$Y \sim \Gamma(\alpha, \beta) \qquad \iff \qquad \mathrm{d} Y  = \frac{\beta^\alpha}{\Gamma(\alpha)}y^{\alpha-1}\exp(-\beta y) \mathbf{1}_{y >0}\mathrm{d} y.$$

Note that the Gamma distribution of parameters $(1,\lambda)$ is also called the \textit{exponential} distribution of parameter $\lambda$, for which we denote by $\mathsf{Exp}(\lambda):=\Gamma(1,\lambda)$.

\begin{proposition} \label{prop-UUparaconv}
As $d \to \infty$, the random variables $\lambda_{1,2}^{(d,s)}$ behave as follows:

\begin{enumerate}
    \item If $s$ is fixed, then
    \begin{itemize}
        \item the law of the random variable $s\lambda_1^{(d,s)}$ converges in moments to  the {\rm Gamma distribution} $\Gamma(s,1)$
    
        \item the law of the random variable $s\lambda_2^{(d,s)}$ converges in moments to the {\rm normal distribution} $\mathcal{N}(s,s)$.
    \end{itemize}
    \item If $s=s(d) \to \infty$, then 
        $$\lambda_{1,2}^{(d,s)} \to 1 \text{ in $L^p$ for all $p<\infty$.}$$
    Furthermore, the same convergence holds almost surely if $s\gtrsim d^t$ for some $t>0$.
    \item Moreover, for any $t>0$, we have 
        $$d^{-t}(\lambda_1^{(d,s)}-1)\to 0 \qquad \text{ and } \qquad d^{-t}(\lambda_2^{(d,s)}-1)\to 0$$
    almost surely as $d\to \infty$ or $s\to \infty$ (more precisely, this means that ``$d\to \infty$ and $s=s(d)$ arbitrary'' or ``$s\to \infty$ and $d=d(s)$ arbitrary'').
    \end{enumerate}
    \end{proposition}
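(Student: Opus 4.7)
The plan is to combine the explicit moment formulas of $\lambda_{1,2}^{(d,s)}$ from \cref{lem:moments-lambda-123} with the method of moments (Part 1) and with Borel--Cantelli arguments (Parts 2 and 3).

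\textbf{Part (1).} With $s$ fixed, both $\lambda_1^{(d,s)}$ and $\lambda_2^{(d,s)}$ are polynomials in the entries of the Haar-distributed isometry $V : \C{d} \to \C{d}\otimes \C{s}$, so their $k$-th moments admit Weingarten expansions as sums over pairs $(\sigma,\tau) \in S_k \times S_k$ weighted by products of Kronecker deltas and $\Wg(ds,\sigma^{-1}\tau) \sim (ds)^{-k}$. In the limit $d \to \infty$ at fixed $s$, only the leading contractions with $\sigma = \tau$ contribute. Writing $V = \sum_b V_b \otimes \ket{b}$ with $V_b \in \M{d}$, one finds $\lambda_1^{(d,s)} = \sum_b |\Tr V_b|^2$ and a similar trace-bilinear expression for $\lambda_2^{(d,s)}$. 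For $\lambda_1$, each $\sigma \in S_k$ contributes the weight $s^{\#\mathrm{cycles}(\sigma)}$, and summation over $\sigma$ yields the Pochhammer value $s(s+1)\cdots(s+k-1)$ --- exactly the $k$-th moment of $\Gamma(s,1)$. For $\lambda_2$, the pairing constraints force $\sigma$ to be an involution and one recognises the resulting sum over pair partitions as the $k$-th moment of $\mathcal N(s,s)$. Since both limit laws are moment-determined, the method of moments concludes.

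\textbf{Parts (2) and (3).} The variance formula $\operatorname{Var}\lambda_{1,2}^{(d,s)} = s(d^2-1)/((ds)^2-1) \sim 1/s$ recalled just before the proposition, together with the higher central moments extracted from \cref{lem:moments-lambda-123}, gives $\mathbb{E}|\lambda_{1,2}^{(d,s)} - 1|^p = O(s^{-p/2})$ uniformly in $d$, for every $p \geq 1$. This yields the $L^p$ convergence as $s \to \infty$; under the stronger assumption $s \gtrsim d^t$, combining Markov's inequality with Borel--Cantelli (for $p$ large enough that $d^{-tp/2}$ is summable) upgrades this to almost-sure convergence. For Part (3), Markov applied at order $2k$ yields
\[
\mathbb P\!\left( d^{-t}|\lambda_{1,2}^{(d,s)} - 1| > \varepsilon \right) \;\leq\; \frac{\mathbb{E}[(\lambda_{1,2}^{(d,s)} - 1)^{2k}]}{\varepsilon^{2k} d^{2tk}} \;\leq\; \frac{C_k}{\varepsilon^{2k}\, d^{2tk}},
\]
using that the central moments of $\lambda_{1,2}^{(d,s)}$ are at worst $O(1)$ in $d$ by \cref{lem:moments-lambda-123}. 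Choosing $k$ with $2tk > 1$ makes the right-hand side summable in $d$, and Borel--Cantelli concludes. The regime ``$s \to \infty$ with $d$ arbitrary'' is handled by the symmetric estimate.

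\textbf{Main obstacle.} The only genuinely technical step is the Weingarten bookkeeping in Part (1): correctly isolating the dominant index contractions as $d \to \infty$ at fixed $s$ and matching them, for $\lambda_1$, to the rising-factorial moments of $\Gamma(s,1)$, and, for $\lambda_2$, to the Gaussian pair-partition moments of $\mathcal N(s,s)$. Parts (2) and (3) are then routine moment-based concentration arguments once the explicit moment formulas of \cref{lem:moments-lambda-123} are in hand.
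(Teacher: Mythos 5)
Your argument here is essentially the paper's: you correctly identify $\lambda_1^{(d,s)} = \sum_b |\Tr V_b|^2$, expand the moments via Weingarten, and match the dominant contributions to the $\Gamma(s,1)$ and $\mathcal N(s,s)$ moments. The paper does the same thing, with the $\lambda_2$ moment limit identified via the moment-cumulant formula for the normal law. This part is fine.

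\textbf{Parts (2) and (3): there is a genuine gap.} You assert that ``the higher central moments extracted from \cref{lem:moments-lambda-123}, give $\mathbb{E}|\lambda_{1,2}^{(d,s)}-1|^p = O(s^{-p/2})$ uniformly in $d$, for every $p\geq 1$,'' and you treat this as routine. It is not. The moment formula of \cref{lem:moments-lambda-123} yields the raw moments $\mathbb{E}[(\lambda_1^{(d,s)})^p] = s(s+1)\cdots(s+p-1)\sum_\beta d^{\#\beta}\Wg_{ds}(\beta)$. Expanding the central moment $\mathbb{E}[(\lambda_1 - 1)^{2N}] = \sum_p (-1)^{2N-p}\binom{2N}{p}\mathbb{E}[\lambda_1^p]$ naively produces terms of order $s^{-1}, s^{-2}, \ldots$ with no obvious reason for the $s^{-1}, \ldots, s^{-(N-1)}$ terms to cancel. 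The cancellation is the core of the paper's argument: \cref{lem-polycoeff} shows that the coefficient $a_{k,p}$ of $s^{-k}$ in the expansion of $s^{1-p}(s+1)\cdots(s+p-1)$ is a \emph{polynomial in $p$ of degree $2k$}, and \cref{lem-binomformula} (the binomial cancellation identity) then kills every $s^{-k}$ term with $k<N$ in the alternating sum over $p \in \{0,\ldots,2N\}$. Without this structural observation, the claimed $O(s^{-N})$ decay of the $2N$-th central moment does not follow from the moment formulas alone. You have skipped the part of the proof that carries the actual technical weight.

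\textbf{Two further issues.} First, the bound you claim is $O(s^{-p/2})$ ``uniformly in $d$,'' but the Weingarten truncation error introduces an additive $O(d^{-2})$: the paper obtains $\mathbb{E}[(\lambda_1-1)^{2N}] = O(s^{-N}) + O(d^{-2})$. In the regime $d, s(d) \to \infty$ both terms vanish so Part (2) is fine, but stating the estimate as uniform in $d$ is not what the computation gives directly. Second, your handling of the ``$s\to\infty$ with $d$ arbitrary'' case in Part (3) as ``the symmetric estimate'' is misleading: the two regimes require \emph{different} asymptotic expansions of the Weingarten sum (the paper uses the raw moment bound $\mathbb{E}[(\lambda_1)^N]=O(1)$ with Markov at order $N$ for $d\to\infty$, but a fresh expansion $\mathbb{E}[(\lambda_1)^p]=1+\tfrac{p(p-1)}{2}(1-d^{-2})s^{-1}+O(s^{-2})$ together with fourth-moment Markov for $s\to\infty$). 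These are not symmetric, and the second one again relies on \cref{lem-binomformula}. You would need to supply these expansions explicitly to close the proof.
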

\begin{proof}
    See \cref{app:scalar}.
\end{proof}

We plot numerical realizations of the random variables $\lambda_{1,2}^{(d,s)}$ against the theoretical curves in \cref{fig:lambda-1-hist,fig:lambda-2-hist}.

\begin{figure}[htb]
    \centering
    \includegraphics[width=.45\textwidth]{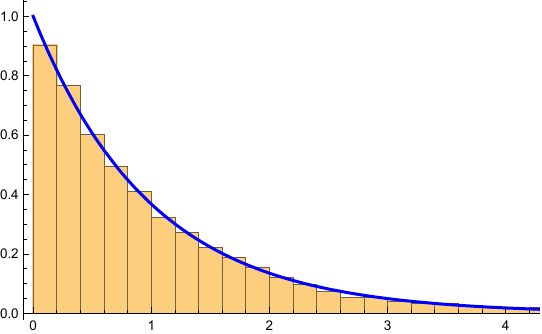} \qquad \includegraphics[width=.45\textwidth]{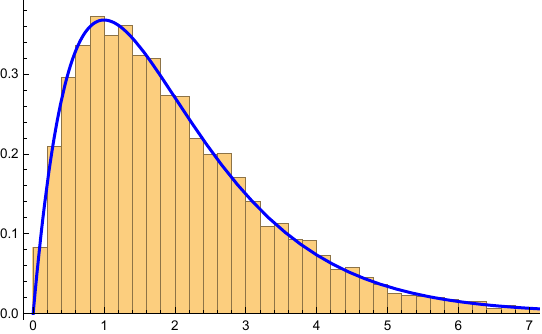} \\
    \includegraphics[width=.45\textwidth]{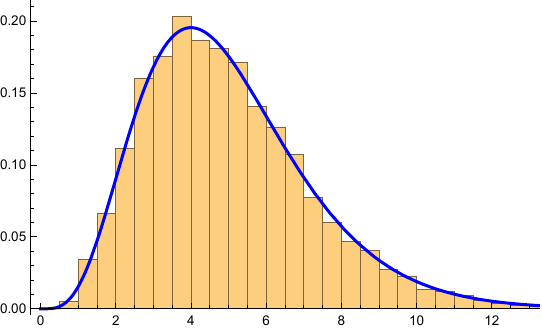} \qquad \includegraphics[width=.45\textwidth]{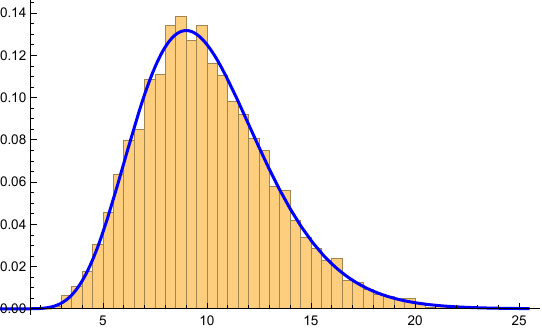}
    \caption{Histograms of $10^4$ realizations of the random variable $s \lambda_1^{(d,s)}$ for $d=100$ and, respectively, $s=1,2,5,10$. In blue, the probability density of the $\Gamma(s,1)$ distribution.}
    \label{fig:lambda-1-hist}
\end{figure}

\begin{figure}[htb]
    \centering
    \includegraphics[width=.45\textwidth]{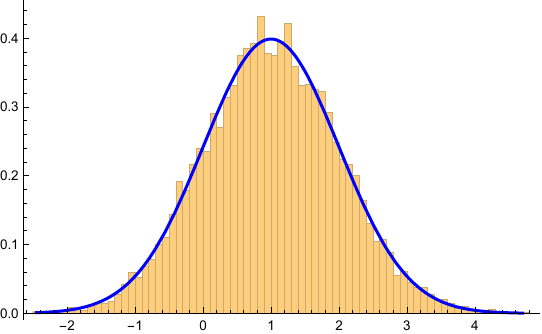} \qquad \includegraphics[width=.45\textwidth]{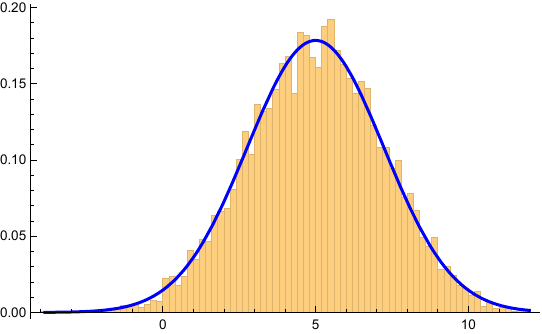}
    \caption{Histograms of $10^4$ realizations of the random variable $s \lambda_2^{(d,s)}$ for $d=100$ and, respectively, $s=1,5$. In blue, the probability density of the $\mathcal N(s,s)$ distribution.}
    \label{fig:lambda-2-hist}
\end{figure}

\begin{theorem} \label{thm-RanUUCovEB}
Let $\Phi_{U,U} \sim \mu^{U,U}_{d,s}$, $\Phi_{U,\overline{U}} \sim \mu^{U,\overline{U}}_{d,s}$, and $\Phi_{O,O}\sim \mu_{d,s}^{O,O}$ be, respectively, random $(U,U)$-covariant, $(U,\overline{U})$-covariant, and $(O,O)$-covariant quantum channels.
\begin{enumerate}
    \item Almost surely, as $d\to \infty$, $\Phi_{U,U}$ is PPT and EB.

    \item If $s$ is fixed, then
        $$\lim_{d \to \infty} \mathbb{P}(\Phi_{U,\overline{U}}\text{ is PPT $\Leftrightarrow$ EB}) =  \lim_{d \to \infty} \mathbb{P}(\Phi_{O,O}\text{ is PPT $\Leftrightarrow$ EB}) = \mathbb{P}(Z \geq -\sqrt{s})<1,$$
    where $Z \sim \mathcal{N}(0,1)$ is a standard normal random variable.
    \item If $s=s(d) \to \infty$ as $d\to \infty$, then 
        $$\lim_{d \to \infty} \mathbb{P}(\Phi_{U,\overline{U}}\text{ is PPT $\Leftrightarrow$ EB}) =  \lim_{d \to \infty} \mathbb{P}(\Phi_{O,O}\text{ is PPT $\Leftrightarrow$ EB}) = 1,$$
    Furthermore, if $s\gtrsim d^t$ for some $t>0$, then almost surely as $d\to \infty$, $\Phi_{U,\overline{U}}$ and $\Phi_{O,O}$  are PPT and EB.
\end{enumerate}
\end{theorem}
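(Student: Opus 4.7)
The plan is to translate the PPT and EB properties into conditions on the scalar statistics $\lambda_{1,2}^{(d,s)}$ via Proposition \ref{prop:conditions-PPT-EB-UUOO}, and then apply the asymptotic analysis of these statistics provided in Proposition \ref{prop-UUparaconv}. Since in each of the three classes the PPT and EB properties are equivalent, and since $\lambda_1(\Phi) \geq 0$ holds deterministically for any channel $\Phi$ (as $J(\Phi) \geq 0$), the relevant events reduce to $\{\lambda_1^{(d,s)} \leq d\}$ for $(U,U)$, to $\{0 \leq \lambda_2^{(d,s)} \leq d\}$ for $(U,\overline{U})$, and to the intersection of these for $(O,O)$.

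For item (1), I would appeal directly to part (3) of Proposition \ref{prop-UUparaconv}: the estimate $d^{-t}(\lambda_1^{(d,s)} - 1) \to 0$ a.s.\ (for any $t>0$) forces $\lambda_1^{(d,s)} \leq d$ for all sufficiently large $d$, which gives the almost sure PPT/EB conclusion for $\Phi_{U,U}$.

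For item (2), with $s$ fixed, the key input is the convergence of $s\lambda_2^{(d,s)}$ in moments to $\mathcal{N}(s,s)$ from Proposition \ref{prop-UUparaconv}(1). Since the normal law is moment-determined, this upgrades to convergence in distribution, so writing the limit as $s + \sqrt{s}\, Z$ with $Z\sim \mathcal{N}(0,1)$, the probability of $\lambda_2^{(d,s)} \geq 0$ tends to $\mathbb{P}(Z \geq -\sqrt{s})$. The upper bound $\lambda_2^{(d,s)} \leq d$ holds with probability going to $1$ because the limit law is finite while $d \to \infty$. For $\Phi_{O,O}$ one additionally needs $\lambda_1^{(d,s)} \leq d$, but this is almost sure by item (1) and therefore contributes no probability loss in the limit. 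The small subtlety is to ensure that the limit law of $\lambda_2^{(d,s)}$ assigns no mass to the boundary point $0$, which holds since the Gaussian is continuous.

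Item (3) follows from Proposition \ref{prop-UUparaconv}(2): when $s = s(d)\to \infty$, we have $\lambda_{1,2}^{(d,s)} \to 1$ in $L^p$ and hence in probability, so both the bounds $0$ and $d$ are satisfied with probability tending to $1$. When $s \gtrsim d^t$ for some $t>0$, the convergence is almost sure, giving the a.s.\ PPT and EB conclusion. I do not expect a serious obstacle in any step; the only mildly nontrivial point is the upgrade from moment to distributional convergence in item (2), which however is immediate once one invokes moment determinacy of the Gaussian target.
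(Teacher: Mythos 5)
Your proof is correct and follows essentially the same route as the paper: translate PPT/EB into the scalar conditions of Proposition \ref{prop:conditions-PPT-EB-UUOO}, then apply the asymptotics of $\lambda_{1,2}^{(d,s)}$ from Proposition \ref{prop-UUparaconv}, with item (1) handling $\lambda_1$, item (2) the Gaussian fluctuation of $\lambda_2$ for fixed $s$, and item (3) the concentration when $s\to\infty$. The only cosmetic difference is that you justify $\lambda_2^{(d,s)}\leq d$ in item (2) via tightness of the limit law, whereas the paper reuses the almost-sure bound from Proposition \ref{prop-UUparaconv}(3); both are valid, and your explicit invocation of moment-determinacy of the Gaussian is a small clarification the paper leaves implicit.
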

We remark that by the statement `$\Phi=\Phi_d$ is \textit{almost surely} PPT/EB as $d\to \infty$', we mean the eventual property of $\Phi_d$: there exists an event set $\Om$ of probability 1 such that for every $\omega\in \Om$, there exists a number $N=N(\omega)\geq 1$ such that $\Phi_d(\omega)$ is PPT/EB for all $d\geq N$. In particular, this implies that $\mathbb{P}(\Phi \text{ is PPT/EB})\to 1$ as $d\to \infty$.
\begin{proof}
We start with the first item, related to the unitary covariant channel $\Phi_{U,U}$ and the random variable $\lambda_1^{(d,s)} \geq 0$. It follows from \cref{prop-UUparaconv}, (3) with $t=1$, that, almost surely,
$$\lim_{d \to \infty} \frac{\lambda_1^{(d,s)}}{d} = 0,$$
which yields (1) by \cref{prop:conditions-PPT-EB-UUOO}.  

Let us now move to the second point, corresponding to the regime $s$ fixed and the channels $\Phi_{U, \overline U}$ and $\Phi_{O,O}$. From the previous point, we have that, almost surely as $d \to \infty$, $\lambda_1^{(d,s)} \leq d$; similarly for $\lambda_2^{(d,s)}$. The only constraint in \cref{prop:conditions-PPT-EB-UUOO} that we need to ensure is $\lambda_2^{(d,s)} \geq 0$, and the conclusion follows from \cref{prop-UUparaconv} (1): if $\tilde Z \sim \mathcal N(s,s)$, then 
$$\mathbb P (\tilde Z \geq 0) = \mathbb P( \sqrt{s}Z+s \geq 0 ) =  \mathbb P( Z \geq -\sqrt s),$$
for a standard normal random variable $Z \sim \mathcal N(0,1)$.

Finally, regarding the proof of (3), the only constraint again is whether $\lambda_2^{(d,s)}\geq 0$. Then the assertions follow directly from \cref{prop-UUparaconv} (2): for the general regime $s(d)\to \infty$, $\lambda_2^{(d,s)}\to 1$ in probability, so we have
    $$\lim_{d\to \infty}\mathbb{P}(\Phi_{O,O} \text{ is PPT})=\lim_{d\to \infty}\mathbb{P}(\lambda_2^{(d,s)}\geq 0)=1,$$
and analogously $\lim_{d\to \infty}\mathbb{P}(\Phi_{U,\overline{U}} \text{ is PPT})=1$. Furthermore, when $s\gtrsim d^t$ for some $t>0$, we have $\lambda_{1,2}^{(d,s)} \to 1$ hence $0 \leq \lambda_{1,2}^{(d,s)} \leq d$ almost surely. 
\end{proof}

\begin{remark}
    One can also show, by slightly modifying the previous arguments, that if $s \to \infty$, then, independently of the behavior of $d=d(s)$, all the three random channels $\Phi_{U,U}$, $\Phi_{U,\overline{U}}$, and $\Phi_{O,O}$ are almost surely PPT and entanglement breaking.
\end{remark}

In conclusion, we have shown that the equivalent PPT and EB properties for large ($d \to \infty$) random (conjugate) unitary and orthogonal covariant quantum channels are as follows: 
\begin{itemize}
    \item random unitary covariant channels are asymptotically PPT and EB (for any $s$)
    \item random conjugate unitary and orthogonal covariant channels admit a phase transition between the regimes $s$ fixed and $s \to \infty$.
\end{itemize}

\section{Random hyperoctahedral covariant channels}\label{sec:hyperoctahedral-covariant}

Here we consider a random behavior of the channels which are covariant with respect to \emph{hyperoctahedral group}. Recall that one can realize the hyperoctahedral group of degree $d$ as the subgroup $\mathcal{H}_d$ of the orthogonal group $\mathcal{O}_d$ which is generated by diagonal orthogonal matrices and permutation matrices. The group $\mathcal H_d$ is also known as the group of \emph{signed permutations} and it has order $2^d\cdot d!$. It is recently shown \cite[Proposition 4.4]{PJPY23} that a linear map $\Phi:\M{d}\to \M{d}$ is covariant with respect to the fundamental representation $\pi_{\mathcal{H}_d}:H\in \mathcal{H}_d\mapsto H\in \mathcal{O}_d$ if and only if
\begin{equation} \label{eq-HHCov}
    \Phi(Z)=p\frac{\rm Tr(Z)}{d}I_d+qZ+r Z^{\top}+s\, {\rm diag}(Z),\;\; Z\in \M{d},
\end{equation}
for some $p,q,r,s\in \C{}$. In other words, the vector space of hyperoctahedral covariant linear maps is spanned by three quantum channels (the completely depolarizing, the identity, and the diagonal conditional expectation) and the transposition map. Note that the formula above covers the (conjugate) unitary and orthogonal covariant quantum channels discussed in Section \ref{sec:conjugate-unitary-orthogonal-covariant}.

Let us simply denote ${\rm Cov}(H,H):={\rm Cov}(\pi_{\mathcal{H}_d},\pi_{\mathcal{H}_d})$ and ${\rm Inv}(H,H):={\rm Inv}(\pi_{\mathcal{H}_d}\otimes \pi_{\mathcal{H}_d})=J({\rm Cov}(H,H))$ (by \cref{prop-twirling}). 
Note that the Choi matrix of the map $\Phi$ given in \eqref{eq-HHCov} is
\begin{equation} \label{eq-HHInv}
    J(\Phi)=p\frac{I_{d^2}}{d}+dq|\Om_d\ra\la \Om_d|+rF_d+ s\Pi_{\rm diag}\in {\rm Inv}(H\otimes H),
\end{equation}
where $|\Om_d\ra=\frac{1}{\sqrt{d}}\sum_{i}|ii\ra$ and $F_d$ are defined in \cref{eq:def-max-ent-state,eq:def-flip}, and 
    $$\Pi_{\diag}:=\sum_{i=1}^d|ii\ra\la ii|$$ 
is the Choi matrix of the ``diagonalization'' (or completely dephasing) map $Z\mapsto \diag(Z)$. From this we can check that the space ${\rm Inv}(H,H)=J({\rm Cov}(H,H))$ is spanned by four mutually orthogonal projections
\begin{align*}
    \Pi_0&=|\Om_d\ra\la \Om_d|,\\
    \Pi_1&=\Pi_{\rm sym}-\Pi_{\rm diag}=\frac{I_{d^2}+F}{2}-\Pi_{\rm diag},\\
    \Pi_2&=\Pi_{\rm anti}=\frac{I_{d^2}-F}{2},\\
    \Pi_3&=\Pi_{\rm diag}-|\Om_d\ra\la \Om_d|.
\end{align*}
Therefore, Proposition \ref{prop-twirlformula} gives the exact formula of $\mathcal{T}_{H\otimes H}:=\mathcal{T}_{\pi_{\mathcal{H}_d}\otimes \pi_{\mathcal{H}_d}}$,
\begin{equation}\label{eq-HHtwirl}
    \mathcal{T}_{H\otimes H}X=\sum_{i=0}^3 {\Tr}(\Pi_iX)\frac{\Pi_i}{d_i},\;\; X\in \M{d}\otimes \M{d},
\end{equation}
where the dimensions read $d_0=1$, $d_1=d_2=\frac{d^2-d}{2}$, and $d_3=d-1$. We can now give the formula for the $(H,H)$-twirling of a quantum channel.

\begin{proposition}\label{prop:HH-covariant}
    Let $\Phi : \M{d} \to \M{d}$ be a quantum channel. Then, the $(H,H)$ twirling of $\Phi$ is given by: 
    $$\mathcal T_{H,H}(\Phi) = p \cdot \Delta + q \cdot \id + r \cdot \top + (1-p-q-r) \cdot \diag $$
    where $\id, \top, \operatorname{diag}, \Delta$ are, respectively, the identity, the transposition, the diagonalization, and the completely depolarizing linear maps. Here the coefficients $p,q,r$ are explicitly related to $\Phi$ by    
    \begin{align*}
        p&=\frac{d-\lambda_3(\Phi)}{d-1}\\
        q&=\frac{\lambda_1(\Phi)-\lambda_3(\Phi)}{d^2-d}\\
        r&=\frac{\lambda_2(\Phi)-\lambda_3(\Phi)}{d^2-d},
    \end{align*}
    where the parameters $\lambda_{1,2,3}(\Phi)$ associated to the channel are defined by (see also \cref{fig:lambda-1-2-Phi,fig:lambda-3}):
 \begin{align*}
     \lambda_1(\Phi) &= d\la \Om_d|J(\Phi)|\Om_d\ra \\
     \lambda_2(\Phi) &= \Tr(J(\Phi)F_d)\\ 
     \lambda_3(\Phi) &:= \Tr(J(\Phi)\Pi_{\diag}).
 \end{align*}
 \end{proposition}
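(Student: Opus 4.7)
The plan is to move to the Choi picture and then read off the coefficients by projecting onto the four minimal projections $\Pi_0,\Pi_1,\Pi_2,\Pi_3$ spanning ${\rm Inv}(H,H)$.

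First, I would invoke \cref{prop-twirling} to get $J(\mathcal T_{H,H}(\Phi))=\mathcal T_{H\otimes H}(J(\Phi))$, and then apply the explicit multiplicity-free formula from \cref{prop-twirlformula} (which the paper records in \eqref{eq-HHtwirl}). This reduces the task to computing the four numbers $\Tr(\Pi_i J(\Phi))$ for $i=0,1,2,3$ in terms of $\lambda_1(\Phi),\lambda_2(\Phi),\lambda_3(\Phi)$, while exploiting the trace-preservation identity $\Tr J(\Phi)=d$. A direct computation gives
\begin{align*}
\Tr(\Pi_0 J(\Phi)) &= \tfrac{\lambda_1}{d},\\
\Tr(\Pi_1 J(\Phi)) &= \tfrac{d+\lambda_2}{2}-\lambda_3,\\
\Tr(\Pi_2 J(\Phi)) &= \tfrac{d-\lambda_2}{2},\\
\Tr(\Pi_3 J(\Phi)) &= \lambda_3-\tfrac{\lambda_1}{d},
\end{align*}
where the last two lines use $\Pi_{\rm sym}=\tfrac{I+F_d}{2}$, $\Pi_{\rm anti}=\tfrac{I-F_d}{2}$, and $\Tr(J(\Phi))=d$.

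Next, I would compute the Choi matrices of the four generating channels, obtaining $J(\Delta)=I_{d^2}/d$, $J(\id)=d\Pi_0$, $J(\top)=F_d$, $J(\diag)=\Pi_{\diag}$, and re-expand these in the orthogonal basis $\{\Pi_0,\Pi_1,\Pi_2,\Pi_3\}$ via the identities $I_{d^2}=\Pi_0+\Pi_1+\Pi_2+\Pi_3$, $F_d=\Pi_0+\Pi_1-\Pi_2+\Pi_3$, and $\Pi_{\diag}=\Pi_0+\Pi_3$. Then the Choi matrix of $p\,\Delta+q\,\id+r\,\top+s\,\diag$ equals
\begin{align*}
\Big(\tfrac{p}{d}+qd+r+s\Big)\Pi_0+\Big(\tfrac{p}{d}+r\Big)\Pi_1+\Big(\tfrac{p}{d}-r\Big)\Pi_2+\Big(\tfrac{p}{d}+r+s\Big)\Pi_3.
\end{align*}

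Finally, since $\{\Pi_i/\sqrt{d_i}\}$ is an orthonormal family in Hilbert--Schmidt inner product, matching the coefficients of $\Pi_i$ in $\mathcal T_{H\otimes H}(J(\Phi))=\sum_i \Tr(\Pi_i J(\Phi))\,\Pi_i/d_i$ with those of the ansatz above yields a decoupled linear system. Adding and subtracting the equations coming from $\Pi_1$ and $\Pi_2$ isolates $p$ and $r$ immediately, giving $p=(d-\lambda_3)/(d-1)$ and $r=(\lambda_2-\lambda_3)/(d^2-d)$; subtracting the $\Pi_0$ and $\Pi_3$ equations isolates $q=(\lambda_1-\lambda_3)/(d^2-d)$; and the remaining coefficient $s=1-p-q-r$ follows from trace preservation (equivalently, from the $\Pi_3$ equation, after a short verification). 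I do not expect any conceptual obstacle: the heart of the argument is \cref{prop-twirling} together with the multiplicity-free decomposition \eqref{eq-HHInv}, and the remainder is the bookkeeping above.
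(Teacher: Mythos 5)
Your proposal is correct and follows essentially the same route as the paper's proof: pass to the Choi picture via \cref{prop-twirling}, apply the multiplicity-free twirling formula \eqref{eq-HHtwirl}, compute the four traces $\Tr(\Pi_i J(\Phi))$ in terms of $\lambda_{1,2,3}$ (your values agree with the paper's, using $\Tr J(\Phi)=d$), and then read off the coefficients. The only cosmetic difference is that the paper expands $\sum_i \Tr(\Pi_i J(\Phi))\Pi_i/d_i$ directly and regroups into the basis $\{I_{d^2}/d,\,d|\Om_d\ra\la\Om_d|,\,F_d,\,\Pi_{\diag}\}$, whereas you set up and solve the equivalent linear system by first writing the Choi matrices of the generators in the $\{\Pi_i\}$ basis; both yield the same bookkeeping.
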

 \begin{proof}
    By Proposition \ref{prop-twirling} and Equation \eqref{eq-HHtwirl}, we have
\begin{align*}
    J(\mathcal{T}_{H,H}\Phi)&=\mathcal{T}_{H\otimes H}(J(\Phi))=\sum_{i=0}^3 \Tr(\Pi_i J(\Phi))\frac{\Pi_i}{d_i}\\
    &=\frac{\lambda_1}{d}|\Om_d\ra\la \Om_d|+\left(\frac{d+\lambda_2}{2}-\lambda_3\right)\frac{I_{d^2}+F_d-2\Pi_{\diag}}{d^2-d}\\
     &\quad\quad\quad +\frac{d-\lambda_2}{2}\frac{I_{d^2}-F_d}{d^2-d}+\left(\lambda_3-\frac{\lambda_1}{d}\right)\frac{\Pi_{\diag}-|\Om_d\ra\la \Om_d|}{d-1}\\
    &=\frac{d-\lambda_3}{d-1}\frac{I_{d^2}}{d}+\frac{\lambda_1-\lambda_3}{d^2-d}d|\Om_d\ra\la \Om_d|+\frac{\lambda_2-\lambda_3}{d^2-d}F_d+\frac{(d+2)\lambda_3-\lambda_1-\lambda_2-d}{d^2-d}\Pi_{\diag}.
\end{align*}
Here we used the fact $\Tr(J(\Phi))=d$ since $\Phi$ is TP.
 \end{proof}

  \begin{figure}
     \centering
     \includegraphics{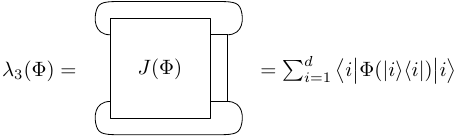} \qquad\qquad
     \includegraphics{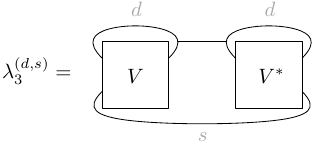}
     \caption{The quantity $\lambda_3(\Phi)$ associated to a quantum channel $\Phi : \M{d} \to \M{d}$ and the corresponding random variable $\lambda_3^{(d,s)} = \lambda_3(\Phi_V)$ for a random channel obtained from a Haar-distributed random isometry $V : \C{d} \to \C{d} \otimes \C{s}$.}
     \label{fig:lambda-3}
 \end{figure}

 One can now state the precise conditions for a $(H,H)$-twirling of a quantum channel to be PPT; the PPT property has been shown in \cite[Theorem 4.1]{PJPY23} to be equivalent to the entanglement breaking property for hyperoctahedral covariant channels.

\begin{proposition} \label{prop-HHCovEB}
Given a quantum channel $\Phi:\M{d} \to \M{d}$, its $(H,H)$-twirling $\mathcal{T}_{H,H}(\Phi)$ is PPT iff it is EB iff
\begin{align} 
    0&\leq \lambda_3(\Phi)\leq d,\nonumber\\
    0&\leq \lambda_1(\Phi),\lambda_2(\Phi) \leq d\lambda_3(\Phi), \label{eqeq:HH-PPT-EB}\\
    2\lambda_3(\Phi)-d&\leq \lambda_1(\Phi),\lambda_2(\Phi) \leq d. \nonumber
\end{align}
\end{proposition}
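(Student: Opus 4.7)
The plan is to exploit the multiplicity-free decomposition $\mathrm{Inv}(H\otimes H)=\bigoplus_{i=0}^{3}\mathbb{C}\Pi_i$ introduced right before \eqref{eq-HHtwirl}: since the $\Pi_i$ are mutually orthogonal projections summing to $I_{d^2}$, every element of $\mathrm{Inv}(H\otimes H)$ admits a unique ``diagonal'' expansion $\sum_i \alpha_i \Pi_i$, and is positive semidefinite if and only if all four coefficients $\alpha_i$ are non-negative. Because $\mathcal T_{H,H}\Phi$ is PPT if and only if both $J(\mathcal T_{H,H}\Phi)$ and its partial transpose are PSD, the entire PPT characterization reduces to two sets of four inequalities each.

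First I would expand $J(\mathcal T_{H,H}\Phi)=\sum_{i=0}^{3}\alpha_i \Pi_i$ by computing $\alpha_i=\Tr(\Pi_i J(\mathcal T_{H,H}\Phi))/d_i$ directly from the formula supplied by \cref{prop:HH-covariant}; each $\alpha_i$ is an affine function of $\lambda_{1,2,3}(\Phi)$, and the four automatic (since $\Phi$ is CPTP) inequalities $\alpha_i\geq 0$ yield one half of \eqref{eqeq:HH-PPT-EB}. Next I would turn to the partial transpose. Crucially, $\mathcal H_d$ is a group of \emph{real} matrices, so $\mathrm{Inv}(H\otimes H)$ is preserved by the partial transposition $\Gamma$, while on the natural spanning set $\{I_{d^2},F_d,d|\Om_d\ra\la\Om_d|,\Pi_{\diag}\}$ the map $\Gamma$ fixes $I_{d^2}$ and $\Pi_{\diag}$ and swaps $F_d\leftrightarrow d|\Om_d\ra\la\Om_d|$. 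Consequently $J(\mathcal T_{H,H}\Phi)^{\Gamma}$ has exactly the same shape as $J(\mathcal T_{H,H}\Phi)$ but with the roles of $\lambda_1(\Phi)$ and $\lambda_2(\Phi)$ swapped, so the conditions $\alpha_i^{\Gamma}\geq 0$ read off as the remaining inequalities after $\lambda_1\leftrightarrow \lambda_2$. Combining the two lists produces precisely \eqref{eqeq:HH-PPT-EB}, while the range $0\leq \lambda_3(\Phi)\leq d$ drops out for free from $\lambda_3(\Phi)=\Tr(J(\Phi)\Pi_{\diag})$ together with $J(\Phi)\geq 0$, $\Pi_{\diag}\leq I_{d^2}$, and trace preservation.

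For the ``PPT iff EB'' equivalence, I would invoke \cite[Theorem~4.1]{PJPY23}, which asserts precisely that within the hyperoctahedral covariant class the PPT and entanglement-breaking properties coincide. \textbf{This is the main obstacle} of the statement: the PPT half of the argument is a transparent linear-algebra exercise in the $\Pi_i$ basis, but the EB half requires an explicit separable decomposition of $J(\mathcal T_{H,H}\Phi)$ whenever the stated parameter constraints hold. Reproving it from scratch would amount to splitting the Choi matrix into pieces supported on $\Pi_{\mathrm{sym}}-\Pi_{\diag}$, $\Pi_{\mathrm{anti}}$, $\Pi_{\diag}$ and $|\Om_d\ra\la\Om_d|$, and then writing each surviving piece as an explicit convex combination of product states once the stated inequalities are in force---this is where the genuinely non-algebraic content of the proposition lies, and is the reason we cite \cite{PJPY23} rather than re-derive this part directly.
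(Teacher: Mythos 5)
Your argument is correct and, on the PPT side, genuinely more self-contained than the paper's. The paper proves the proposition by citing the explicit PPT/EB characterization from \cite[Eq.~(4.17)]{PJPY23} and translating those inequalities (stated there in the $(p,q,r)$ parametrization) into $\lambda_{1,2,3}$ via \cref{prop:HH-covariant}. You instead re-derive the PPT conditions from first principles, using that $J(\mathcal T_{H,H}\Phi)=\sum_i\alpha_i\Pi_i$ with $\alpha_i=\Tr(\Pi_iJ(\Phi))/d_i$ so PSD $\iff \alpha_i\geq0$, and that partial transposition preserves the commutant while fixing $I_{d^2},\Pi_{\diag}$ and swapping $F_d\leftrightarrow d\ket{\Om_d}\!\bra{\Om_d}$, hence simply exchanges $\lambda_1\leftrightarrow\lambda_2$. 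This correctly yields $0\leq\lambda_1\leq d\lambda_3$, $2\lambda_3-d\leq\lambda_2\leq d$ and the $\lambda_1\leftrightarrow\lambda_2$ mirror images, and your observation that $0\leq\lambda_3\leq d$ is automatic for CPTP $\Phi$ is also right (in fact $\lambda_3\geq0$ already follows from $\lambda_1\geq 0$ and $\lambda_1\leq d\lambda_3$, and $\lambda_3\leq d$ from $2\lambda_3-d\leq d$). Where both proofs agree is the essential external input: the PPT $\iff$ EB equivalence for hyperoctahedral covariant channels, which neither proof reproduces and both take from \cite[Theorem~4.1]{PJPY23}; the paper additionally outsources the PPT inequalities themselves, whereas your version keeps that part elementary and visible.
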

\begin{proof}
    The conditions in the statement are simply the following inequalities from \cite[Eq.~(4.17)]{PJPY23} reformulated in terms of the parameters $\lambda_{1,2,3}(\Phi)$: 
\begin{align*} 
    0&\leq \,\,\, p\,\,\,\leq \frac{d}{d-1},\\
    \frac{p}{d}-\frac{1}{d-1}&\leq q,r \leq 1-\frac{d-1}{d}p,\\
    -\frac{p}{d}&\leq q,r \leq \frac{p}{d}.
\end{align*}
\end{proof}

Now we consider a random Stinespring channel $\Phi:\M{d}\to \M{d}$ with environment dimension $s$ and define the \emph{random hyperoctahedral covariant channel} 
$$\Phi_{H,H} := \mathcal{T}_{H,H}(\Phi) \sim \mu^{H,H}_{d,s}.$$ 

Since the PPT (and therefore the EB) property of the twirled channel $\mathcal{T}_{H,H}(\Phi)$ depends only on the three parameters $\lambda_{1,2,3}(\Phi)$, we shall consider the corresponding random variables associated to the random channel: 
    $$\lambda_1^{(d,s)}=d\la \Om_d|J(\Phi)|\Om_d\ra,\quad \lambda_2^{(d,s)}=\Tr(J(\Phi)F_d), \quad \lambda_3^{(d,s)}:=\Tr(J(\Phi)\Pi_{\diag}).$$
The asymptotic behavior of the first two variables has been studied in the previous section, \cref{prop-UUparaconv}; we now turn to the study of $\lambda_3^{(d,s)}$. The average and the variance of this random variable are given by (\cref{lem:moments-lambda-123})
$$\mathbb E \Big[ \lambda_3^{(d,s)} \Big] = 1 \qquad \text{and} \qquad \operatorname{Var}\Big[ \lambda_3^{(d,s)} \Big] =  \frac{1}{ds+1}.$$

\begin{proposition}\label{prop:convergence-lambda3}
    The following almost sure convergence holds, independently of the scaling of the parameter $s$:
    $$\lim_{d \to \infty} \lambda_3^{(d,s)} = 1.$$
\end{proposition}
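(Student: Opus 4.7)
The plan is to imitate the template of the proof of \cref{prop-UUparaconv}(3) for $\lambda_{1,2}^{(d,s)}$: write $\lambda_3^{(d,s)}$ explicitly in terms of a Haar isometry, deduce convergence in $L^2$ from the stated moments, and then upgrade to almost sure convergence by controlling higher centered moments via Weingarten calculus. Expanding the Choi matrix in the Stinespring form gives
\[
\lambda_3^{(d,s)} \,=\, \Tr\!\bigl(J(\Phi_V)\Pi_{\diag}\bigr) \,=\, \sum_{i=1}^d \sum_{k=1}^s \bigl|V_{(i,k),i}\bigr|^2,
\]
where $V$ is a Haar-distributed isometry $\C{d} \to \C{ds}$, realised as the first $d$ columns of a Haar unitary on $\mathcal U_{ds}$. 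Since $\mathbb E[\lambda_3^{(d,s)}] = 1$ and $\operatorname{Var}(\lambda_3^{(d,s)}) \to 0$ as $d\to\infty$ uniformly in $s$, Chebyshev's inequality immediately yields convergence in probability (and in $L^2$); the real task is to boost this to an almost sure statement that is robust under any scaling of $s$.

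The key step is to bound the even centered moments $\mathbb E\bigl[(\lambda_3^{(d,s)}-1)^{2q}\bigr]$ via the Weingarten formula on $\mathcal U_{ds}$. Schematically,
\[
\mathbb E\bigl[(\lambda_3^{(d,s)})^p\bigr] \,=\, \sum_{\sigma, \tau \in S_p} d^{\#\mathrm{orbits}(\langle \sigma, \tau\rangle)} \cdot s^{\#\mathrm{cycles}(\sigma)} \cdot \Wg(\sigma\tau^{-1}, ds),
\]
where the first factor comes from the constraint that $(i_l)_l$ is constant on orbits of $\langle \sigma, \tau\rangle$ and the second from $(k_l)_l$ being constant on cycles of $\sigma$. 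Using the asymptotics $\Wg(\pi, N) = \Mob(\pi) N^{-p-|\pi|}(1 + O(N^{-2}))$ with $|\pi| = p - \#\mathrm{cycles}(\pi)$, and tracking the cancellations produced by centering around the mean $1$, one expects a bound of the form
\[
\mathbb E\bigl[(\lambda_3^{(d,s)} - 1)^{2q}\bigr] \,\leq\, \frac{C_q}{d^{q}}
\]
with a constant $C_q$ depending only on $q$. Markov's inequality then yields $\mathbb P(|\lambda_3^{(d,s)} - 1| > \varepsilon) \leq C_q \varepsilon^{-2q} d^{-q}$; for $q \geq 2$ this is summable in $d$, and Borel--Cantelli delivers almost sure convergence as $d \to \infty$. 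The symmetric regime $s \to \infty$ with arbitrary $d$ is handled in the same way by exchanging the roles of $d$ and $s$ in the above analysis.

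The main obstacle is showing that the constant $C_q$ in the centered moment bound is genuinely independent of $s$. Individual Weingarten terms carry prefactors $s^{\#\mathrm{cycles}(\sigma)}$ which grow with $s$, so the uncentered moments themselves diverge as $s \to \infty$; it is only after subtracting the mean that the $s$-dependence cancels, stratum by stratum in the cycle type of $\sigma$. Carrying out this cancellation requires a careful combinatorial bookkeeping on pairs $(\sigma, \tau) \in S_{2q}^2$, organised by $(\#\mathrm{cycles}(\sigma), \#\mathrm{cycles}(\sigma\tau^{-1}))$. A tempting alternative would be L\'evy's concentration on $\mathcal U_{ds}$, but the na\"ive Lipschitz constant $\lesssim \sqrt{d}$ of the map $V \mapsto \lambda_3^{(d,s)}$ gives tails of the form $\exp(-cst^2)$, which are \emph{not} summable in $d$ when $s$ is fixed; a sharper self-bounding analysis would be needed to recover the desired summable tail from that route, making the moment approach the more direct option.
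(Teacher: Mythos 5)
Your overall framework (moment formula, centered moments, Markov, Borel--Cantelli) is the same as the paper's, and the schematic formula $\sum_{\sigma,\tau\in S_p} d^{\#\text{orbits}(\langle\sigma,\tau\rangle)} s^{\#\text{cycles}(\sigma)}\Wg_{ds}(\sigma\tau^{-1})$ is precisely \cref{lem:moments-lambda-123}, since the orbit partition of $\langle\alpha,\beta\rangle$ is $\pi(\alpha)\vee\pi(\beta)$. However, your analysis of the $s$-dependence rests on a mistaken premise. You assert that ``the uncentered moments themselves diverge as $s\to\infty$'' and that the $s$-dependence ``cancels only after subtracting the mean.'' This is false: writing $\#\alpha = p - |\alpha|$ and using $\Wg_{ds}(\pi)\sim \Mob(\pi)(ds)^{-p-|\pi|}$, each individual term equals $s^{-|\alpha|-|\alpha\beta^{-1}|}\, d^{\#(\alpha\vee\beta)-p-|\alpha\beta^{-1}|}\,\Mob(\alpha\beta^{-1})\big(1+O((ds)^{-2})\big)$, and the exponent of $s$ is already $\leq 0$. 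The uncentered moments are bounded uniformly in $s$ and converge to $1$ as $d\to\infty$.

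The elementary fact you never state, and which is the actual key to uniformity in $s$, is $|\alpha\vee\beta|\geq\max(|\alpha|,|\beta|)$. After absorbing $d^p$, the $d$-exponent is $-|\alpha\vee\beta|-|\alpha\beta^{-1}|$, which is at least as negative as the $s$-exponent $-|\alpha|-|\alpha\beta^{-1}|$; since $s\geq 1$, every term with $(\alpha,\beta)\neq(\id,\id)$ is $O(d^{-1})$ uniformly in $s$. The paper then computes $\mathbb E[(\lambda_3^{(d,s)})^p] = 1 + \tfrac{p(p-1)}{2}(ds)^{-1} + O(d^{-2})$ uniformly in $s$, observes that $1$ and $\tfrac{p(p-1)}{2}$ are polynomials in $p$ of degree $< 4$, and kills them via \cref{lem-binomformula} to get $\mathbb E[(\lambda_3^{(d,s)}-1)^4] = O(d^{-2})$, then applies Markov and Borel--Cantelli. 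Your proposal's concluding plan---``careful combinatorial bookkeeping on pairs $(\sigma,\tau)$ organised by cycle type''---gestures in the right direction but never produces this inequality, so the proof has an unfilled gap; and the imagined obstacle of $s$-divergence would send you hunting for cancellations that are never needed.
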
 
\begin{proof}
    See \cref{app:scalar}.
\end{proof}

We can now prove the main result of this section, giving the PPT and EB properties of a hyperoctahedral covariant random channel, in the $d \to \infty$ regime.
    
\begin{theorem} \label{thm-RanHHCovEB}
Let $\Phi_{H,H} \sim \mu^{H,H}_{d,s}$ be a random $(H,H)$-covariant quantum channel. Then:
\begin{enumerate}
    \item If $s$ is fixed, 
        $$\lim_{d \to \infty} \mathbb{P}(\Phi_{H,H}\text{ is PPT $\Leftrightarrow$ EB}) = \mathbb{P}(Z \geq -\sqrt{s})<1,$$
    where $Z \sim \mathcal{N}(0,1)$ is a standard normal random variable.
    \item If $s=s(d) \to \infty$ as $d\to \infty$, then 
        $$\lim_{d \to \infty} \mathbb{P}(\Phi_{H,H}\text{ is PPT $\Leftrightarrow$ EB}) = 1.$$
    Furthermore, if $s\gtrsim d^t$ for some $t>0$, then almost surely as $d\to \infty$, $\Phi_{H,H}$ is PPT and EB. 
\end{enumerate}
\end{theorem}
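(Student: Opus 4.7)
The plan is to combine the algebraic PPT/EB characterization of \cref{prop-HHCovEB} with the asymptotic estimates for the three scalar parameters $\lambda_{1,2,3}^{(d,s)}$ provided by \cref{prop-UUparaconv,prop:convergence-lambda3}. By \cref{prop-HHCovEB}, the random channel $\Phi_{H,H}$ is PPT (equivalently EB) iff the six scalar inequalities in \eqref{eqeq:HH-PPT-EB} hold simultaneously for $\lambda_{1,2,3}^{(d,s)}$, so the entire proof reduces to identifying which of these constraints is binding in the limit $d\to\infty$.

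First I would dispose of all the ``easy'' inequalities. The bounds $\lambda_1^{(d,s)}\geq 0$ and $\lambda_3^{(d,s)}\geq 0$ hold deterministically, since $J(\Phi)$ is positive semidefinite and both $|\Omega_d\rangle\langle\Omega_d|$ and $\Pi_{\diag}$ are PSD. By \cref{prop:convergence-lambda3}, $\lambda_3^{(d,s)}\to 1$ almost surely, independently of the scaling of $s$, and by item~(3) of \cref{prop-UUparaconv}, one has the sub-polynomial bound $\lambda_{1,2}^{(d,s)} = 1 + o(d^t)$ a.s.\ for every $t>0$. Together, these imply that the upper bounds $\lambda_3\leq d$, $\lambda_{1,2}\leq d$, $\lambda_{1,2}\leq d\lambda_3$, as well as the lower bounds $2\lambda_3 - d \leq \lambda_{1,2}$, all hold almost surely for $d$ large, in both the $s$ fixed and $s\to\infty$ regimes. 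Hence the only possibly non-trivial constraint is $\lambda_2^{(d,s)}\geq 0$.

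For part~(1), when $s$ is fixed, \cref{prop-UUparaconv}~(1) gives that $s\lambda_2^{(d,s)}$ converges in moments to $\mathcal N(s,s)$; since the normal distribution is determined by its moments, this upgrades to convergence in distribution, and evaluating at the continuity point $0$ yields
$$\mathbb P(\lambda_2^{(d,s)}\geq 0) = \mathbb P(s\lambda_2^{(d,s)}\geq 0) \xrightarrow[d\to\infty]{} \mathbb P(\mathcal N(s,s)\geq 0) = \mathbb P(Z \geq -\sqrt{s}),$$
where $Z\sim\mathcal N(0,1)$. Combined with the almost sure fulfilment of all other inequalities, this gives the stated limiting probability. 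For part~(2), when $s=s(d)\to\infty$, \cref{prop-UUparaconv}~(2) gives $\lambda_2^{(d,s)}\to 1$ in $L^p$, hence in probability, so $\mathbb P(\lambda_2^{(d,s)}\geq 0)\to 1$; under the stronger assumption $s\gtrsim d^t$ the convergence is almost sure, so $\lambda_2^{(d,s)}\geq 1/2$ eventually and the channel is a.s.\ PPT and EB.

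The main obstacle, though not severe, is keeping track of the different modes of convergence between the three parameters (a.s.\ for $\lambda_3$, moment/$L^p$/a.s.\ for $\lambda_{1,2}$ depending on the regime) and controlling them jointly; the key tool is the sub-polynomial a.s.\ bound of \cref{prop-UUparaconv}~(3), which handles all the upper-bound constraints uniformly and reduces the entire problem to the single scalar event $\{\lambda_2^{(d,s)}\geq 0\}$, whose limiting probability is then read off from the known distributional limit of $\lambda_2^{(d,s)}$.
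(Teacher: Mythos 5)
Your proof is correct and follows essentially the same route as the paper: use \cref{prop-HHCovEB} to reduce to scalar inequalities, use \cref{prop:convergence-lambda3} to eliminate $\lambda_3^{(d,s)}$, use \cref{prop-UUparaconv}~(3) to dispose of the upper-bound constraints, and then read off the limiting probability of the single non-trivial event $\{\lambda_2^{(d,s)}\geq 0\}$ from the distributional limits in \cref{prop-UUparaconv}~(1)--(2). The paper's proof is terser (it simply cites that the argument ``is now the same as \cref{thm-RanUUCovEB}~(2) and (3)''), whereas you spell out the deterministic non-negativity of $\lambda_1,\lambda_3$ and the moment-to-distribution upgrade explicitly, but there is no substantive difference.
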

\begin{proof}
Since $\lambda_3^{(d,s)} \to 1$ as $d \to \infty$, the equations \eqref{eqeq:HH-PPT-EB} reduce to the conditions from \cref{prop:conditions-PPT-EB-UUOO} (3). The proof is now the same as \cref{thm-RanUUCovEB} (2) and (3).
\end{proof}

Note that
    $$\mathcal{T}_{U,\overline{U}}\circ \mathcal{T}_{H,H}=\mathcal{T}_{U,\overline{U}}, \quad \mathcal{T}_{U,U}\circ \mathcal{T}_{H,H}=\mathcal{T}_{U,U}, \quad \text{and} \quad \mathcal{T}_{O,O}\circ \mathcal{T}_{H,H}=\mathcal{T}_{O,O},$$
by \cref{prop-twirlcomposition} and \cref{prop-twirling}. Since the twirling operations preserve both PPT and entanglement breaking properties, \cref{thm-RanHHCovEB} directly implies the asymptotic PPT and EB behavior of random depolarizing / transpose depolarizing / orthogonal covariant channels, in the regime where $s \gtrsim d^t$ with $d \to \infty$, recovering part of \cref{thm-RanUUCovEB}.

\begin{corollary} \label{cor-RanOOCovEB}
Suppose $s\gtrsim d^t$ for some constants $t>0$. Then almost surely, the random depolarizing channel $\mathcal{T}_{U,\overline{U}}\Phi$, the random transpose depolarizing channel $\mathcal{T}_{U,U}\Phi$, and the random orthogonal covariant channel $\mathcal{T}_{O,O}\Phi$ are PPT and EB, as $d\to \infty$.
\end{corollary}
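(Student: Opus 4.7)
The plan is to derive this corollary as a direct consequence of \cref{thm-RanHHCovEB} (2), using the composition identities for twirling operations stated immediately before the corollary.

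First, I would invoke \cref{thm-RanHHCovEB} (2) in the regime $s \gtrsim d^t$: this tells us that, almost surely as $d \to \infty$, the random hyperoctahedral covariant channel $\mathcal{T}_{H,H}(\Phi)$ is both PPT and entanglement breaking. The work for this part is already done, so no reproof is needed.

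Next, I would use the fact that PPT and EB properties are preserved under twirling. By \cref{prop-twirling}, for any group $G$ with representations $\pi_A, \pi_B$, if a channel $\Psi$ is PPT (resp.~EB), then so is $\mathcal{T}_{\pi_A,\pi_B}(\Psi)$. Applying this with $\Psi = \mathcal{T}_{H,H}(\Phi)$ and using the three composition identities
$$\mathcal{T}_{U,\overline{U}} = \mathcal{T}_{U,\overline{U}}\circ \mathcal{T}_{H,H}, \quad \mathcal{T}_{U,U} = \mathcal{T}_{U,U}\circ \mathcal{T}_{H,H}, \quad \mathcal{T}_{O,O} = \mathcal{T}_{O,O}\circ \mathcal{T}_{H,H}$$
(which follow from \cref{prop-twirlcomposition} because $\mathcal{H}_d \subseteq \mathcal{O}_d \subseteq \mathcal{U}_d$), one immediately concludes that each of $\mathcal{T}_{U,\overline{U}}(\Phi)$, $\mathcal{T}_{U,U}(\Phi)$, and $\mathcal{T}_{O,O}(\Phi)$ inherits the PPT and EB properties of $\mathcal{T}_{H,H}(\Phi)$, almost surely as $d \to \infty$.

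There is essentially no obstacle here; the corollary is a bookkeeping consequence of the group inclusion $\mathcal{H}_d \subseteq \mathcal{O}_d \subseteq \mathcal{U}_d$ combined with \cref{thm-RanHHCovEB} and the PPT/EB-monotonicity of twirling. The only thing worth pointing out explicitly in the write-up is the almost-sure (as opposed to merely in-probability) nature of the conclusion, which follows because \cref{thm-RanHHCovEB} (2) already provides the almost-sure statement under the assumption $s \gtrsim d^t$, and passing through the deterministic twirling maps $\mathcal{T}_{U,\overline{U}}$, $\mathcal{T}_{U,U}$, $\mathcal{T}_{O,O}$ preserves the event of full probability on which the hyperoctahedral-twirled channel is PPT and EB.
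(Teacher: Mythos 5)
Your proof is correct and takes essentially the same route as the paper: invoke \cref{thm-RanHHCovEB} (2) for the a.s.\ PPT/EB of $\mathcal{T}_{H,H}(\Phi)$, then push forward via the composition identities $\mathcal{T}_{U,\overline{U}}\circ \mathcal{T}_{H,H}=\mathcal{T}_{U,\overline{U}}$, etc., using the PPT/EB-monotonicity of twirling from \cref{prop-twirling}. The only small caveat is that the composition identities hold at the channel level after passing to Choi matrices (combining \cref{prop-twirlcomposition} with \cref{prop-twirling}), a step you compress into the parenthetical about group inclusions, but the argument is the paper's.
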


In conclusion, we have shown that the equivalent PPT and EB properties for large ($d \to \infty$) random hyperoctahedral covariant quantum channels admit a phase transition between the regimes $s$ fixed and $s \to \infty$, which is completely parallel with random conjugate unitary and orthogonal covariant channels.

\section{Random diagonal orthogonal covariant channels}\label{sec:DOC}

In this section, we consider random quantum channels which are covariant with respect to the actions of the \emph{diagonal unitary group}
$$\mathcal{DU}_d := \big\{ \diag(z) \, : \, z \in \C{d}, \, |z_i| =1 \quad \forall i \in [d]\big\}$$
and the \emph{diagonal orthogonal group}
$$\mathcal{DO}_d := \big\{ \diag(\epsilon) \, : \, \epsilon \in  \{\pm 1\}^d \big\}.$$

Bipartite quantum states that are invariant with respect to the actions of the groups above, as well as channels that are covariant for these actions have appeared in the literature in the last 20 years \cite{DVSS+00,chruscinski2006class,liu2015unitary,lopes2015generic, BSGS22}, but have been studied in detail in \cite{singh2021diagonal}. Since, they have found many applications, see e.g.~\cite{singh2022diagonal,singh2022ppt,singh2022ergodic}. We recall from \cite[Definition 6.3]{singh2021diagonal} that a quantum channel $\Phi:\M{d} \to \M{d}$ is called: 
\begin{itemize}
    \item \emph{conjugate diagonal unitary covariant} ({CDUC}) if 
    $$ \forall Z \in \M{d} \text{ and } U \in \mathcal{DU}_d: \qquad \Phi(UZU^*) = U^*\Phi(Z)U,$$
    \item \emph{diagonal unitary covariant} ({DUC}) if 
    $$ \forall Z \in \M{d} \text{ and } U \in \mathcal{DU}_d: \qquad \Phi(UZU^*) = U\Phi(Z)U^*,$$
    \item \emph{diagonal orthogonal covariant} ({DOC}) if 
    $$ \forall Z \in \M{d} \text{ and } O \in \mathcal{DO}_d: \qquad \Phi(OZO) = O\Phi(Z)O.$$
\end{itemize}
Note that we employ a different naming convention than the one in \cite{singh2021diagonal}, by interchanging the DUC and CDUC terms. In the present work, the emphasis is on the covariance property, hence we prefer to use the ``conjugate'' adjective for the case where $\pi_B = \overline{\pi_A}$.

The structure of linear maps satisfying the covariance conditions above has been obtained in \cite[Eqs.~(36)-(38)]{singh2021diagonal}: a linear map $\Phi:\M{d} \to \M{d}$ is 
\begin{itemize}
    \item {CDUC} if there exists a pair $A,C \in \M{d}$ such that $\diag A = \diag C$ and 
    $$ \forall Z \in \M{d} \qquad \Phi(Z) = \Phi^{(1)}_{A,C}(Z) = \diag(A\ket{\diag(Z)}) + \ring{C}\odot Z^\top.$$
    \item {DUC} if there exists a pair $A,B \in \M{d}$ such that $\diag A = \diag B$ and 
    $$ \forall Z \in \M{d} \qquad \Phi(Z) = \Phi^{(2)}_{A,B}(Z) = \diag(A\ket{\diag(Z)}) + \ring{B}\odot Z.$$
    \item {DOC} if there exists a triple $A,B,C \in \M{d}$ such that $\diag A = \diag B = \diag C$ and 
    $$ \forall Z \in \M{d} \qquad \Phi(Z) = \Phi^{(3)}_{A,B,C}(Z) = \diag(A\ket{\diag(Z)}) + \ring{B}\odot Z + \ring{C}\odot Z^\top.$$
\end{itemize}

Above, the notation $\ring X$, for a matrix $X \in \M{d}$, denotes the matrix $X$ without its diagonal, i.e.
$$\ring X := X - \diag X \qquad \text{ or } \qquad (\ring X)_{ij} = \begin{cases}
    X_{ij} \qquad&\text{ if } i \neq j\\
    0 \qquad&\text{ if } i=j.
\end{cases}$$

Note that the maps $\Phi^{(1,2)}$ can be seen as special cases of the \textsf{DOC} map $\Phi^{(3)}$ by setting, respectively, $B = \diag B$ and $C = \diag C$. A linear \textsf{DOC} map $\Phi^{(3)}_{A,B,C}$ is completely positive (CP) iff 
\begin{itemize}
    \item $A$ is entrywise nonnegative: $A_{ij} \geq 0$ for all $i,j \in [d]$
    \item $B$ is positive semidefinite: $B \geq 0$
    \item $C$ is selfadjoint and $|C_{ij}|^2 \leq A_{ij} A_{ji}$ for all $i,j \in [d]$. 
\end{itemize}
A linear \textsf{DOC} map $\Phi^{(3)}_{A,B,C}$ is trace preserving (TP) iff $A$ is \emph{column stochastic}, i.e.
$$\forall j\in [d] \qquad \sum_{i=1}^d A_{ij}=1.$$

It is important to note at this point that the sets of {CDUC}, {DUC}, and {DOC} linear maps or quantum channels are parametrized by a pair or a triple of $d \times d$ matrices satisfying certain conditions; the number of parameters in these models is growing with the dimension $d$. This fact is in stark contrast with the symmetries investigated in \cref{sec:conjugate-unitary-orthogonal-covariant} (full-unitary and full-orthogonal) and in \cref{sec:hyperoctahedral-covariant} (hyperoctahedral). It is the richness of these models that allows for the existence of PPT symmetric channels that are not entanglement breaking, as we show next. Let us first discuss the PPT property for diagonal symmetric channels. 

\begin{proposition}[\cite{singh2021diagonal}]\label{prop:DOC-PPT}
    For any triple of matrices $A,B,C \in \M{d}$ having the same diagonal, we have
    $$\Phi^{(1)}_{A,C} \circ \top = \Phi^{(2)}_{A,C}, \qquad \Phi^{(2)}_{A,B} \circ \top = \Phi^{(1)}_{A,B}, \qquad \Phi^{(3)}_{A,B,C} \circ \top = \Phi^{(3)}_{A,C,B}.$$
    Hence, we have the following characterizations of the PPT property: 
    \begin{itemize}
        \item A {CDUC} quantum channel $\Phi^{(1)}_{A,C}$ is PPT iff $C$ is positive semidefinite ($C \geq 0$).
        \item A {DUC} quantum channel $\Phi^{(2)}_{A,B}$ is PPT iff $|B_{ij}|^2 \leq A_{ij}A_{ji}$ for all $i,j\in [d]$.
        \item A {DOC} quantum channel $\Phi^{(3)}_{A,B,C}$ is PPT iff $C$ is positive semidefinite ($C \geq 0$) and $|B_{ij}|^2 \leq A_{ij}A_{ji}$ for all $i,j\in [d]$.        
    \end{itemize}
\end{proposition}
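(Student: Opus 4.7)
The plan is to reduce the PPT classification to the CP classification already recalled in the excerpt, via the three intertwining identities $\Phi^{(\bullet)} \circ \top = \Phi^{(\bullet)}$ combined with the standard fact that a map $\Phi : \M{d} \to \M{d}$ is PPT if and only if $\Phi \circ \top$ is completely positive.

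First, I would verify the three composition identities by direct substitution. The key observations are that $\diag(Z^\top) = \diag(Z)$ and $(Z^\top)^\top = Z$ for every $Z \in \M{d}$. Plugging $Z^\top$ into $\Phi^{(1)}_{A,C}(W) = \diag(A\ket{\diag W}) + \ring C \odot W^\top$ therefore leaves the first summand unchanged and turns the second into $\ring C \odot Z$, which by definition equals $\Phi^{(2)}_{A,C}(Z)$. The identity $\Phi^{(2)}_{A,B} \circ \top = \Phi^{(1)}_{A,B}$ is established symmetrically, and $\Phi^{(3)}_{A,B,C} \circ \top = \Phi^{(3)}_{A,C,B}$ amounts to noting that the substitution $Z \mapsto Z^\top$ simply swaps the two off-diagonal Hadamard terms $\ring B \odot Z$ and $\ring C \odot Z^\top$. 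The diagonal compatibility $\diag A = \diag B = \diag C$ is manifestly preserved under swapping $B$ and $C$, so $\Phi^{(3)}_{A,C,B}$ is again a legitimate $\DOC$ map.

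Next, I would invoke the equivalence ``$\Phi$ is PPT iff $\Phi \circ \top$ is CP'', which follows from a short Choi-matrix computation showing that $J(\Phi \circ \top)$ is the partial transpose of $J(\Phi)$ on the second tensor factor. Combined with the identities from Step~1, this translates the three PPT questions into three CP questions: $\Phi^{(1)}_{A,C}$ is PPT iff $\Phi^{(2)}_{A,C}$ is CP; $\Phi^{(2)}_{A,B}$ is PPT iff $\Phi^{(1)}_{A,B}$ is CP; and $\Phi^{(3)}_{A,B,C}$ is PPT iff $\Phi^{(3)}_{A,C,B}$ is CP.

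Finally, I would apply the CP characterization of $\DOC$ maps from the excerpt: $\Phi^{(3)}_{A,B,C}$ is CP iff $A$ is entrywise nonnegative, $B \geq 0$, and $C$ is self-adjoint with $|C_{ij}|^2 \leq A_{ij}A_{ji}$ for all $i,j$. Since the three original maps are already assumed to be quantum channels, the CP constraints on the ``unswapped'' parameters (in particular entrywise nonnegativity of $A$) come for free, and only the constraints on the swapped arguments appear as genuine new restrictions: $C \geq 0$ in the $\CDUC$ case, $|B_{ij}|^2 \leq A_{ij}A_{ji}$ in the $\DUC$ case, and both simultaneously in the $\DOC$ case, matching the three bullets of the statement. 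I do not anticipate any real obstacle; the argument is essentially a corollary of the CP classification, and the only care required is bookkeeping to separate which constraints were already absorbed into CP-ness of the original channel from those imposed afresh by the $B \leftrightarrow C$ swap.
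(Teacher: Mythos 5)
Your argument is correct. The paper states \cref{prop:DOC-PPT} with a citation to \cite{singh2021diagonal} and gives no proof of its own, so there is no in-paper argument to compare against; your reconstruction via the three transpose-intertwining identities, the standard equivalence ``$\Phi$ is PPT $\Leftrightarrow$ $\Phi\circ\top$ is CP'' (i.e.\ $J(\Phi\circ\top)=J(\Phi)^{\Gamma}$), and the CP criterion for $\DOC$ maps is exactly the expected route, and your bookkeeping correctly identifies which constraints are new after the $B\leftrightarrow C$ swap versus already absorbed into CP-ness of the original channel.
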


To characterize the entanglement breaking property. we recall the following two important definitions from \cite{johnston2019pairwise} and \cite{nechita2021graphical}:

\begin{itemize}
    \item A pair of matrices $A,B \in \M{d}$ having the same diagonal is said to be \emph{pairwise completely positive} (\textsf{PCP}) if there exist matrices $V,W \in \mathcal{M}_{d\times k}(\mathbb{C})$, for some $k \in \mathbb{N}$, such that the following decomposition holds:
    $$A = (V \odot \overbar{V})(W \odot \overbar{W})^*, \qquad B = (V \odot W) (V \odot W)^*.$$
    
    \item A triple of matrices $A,B,C \in \M{d}$ having the same diagonal is said to be \emph{triplewise completely positive} (\textsf{TCP}) if there exist matrices $V,W \in \mathcal{M}_{d \times k}(\mathbb{C})$, for some $k \in \mathbb{N}$, such that the following decomposition holds:
    $$A = (V \odot \overbar{V})(W \odot \overbar{W})^*, \qquad B = (V \odot W) (V \odot W)^*, \qquad C = (V \odot \overbar{W})(V \odot \overbar{W})^*.$$
\end{itemize}

\begin{proposition}[\cite{singh2021diagonal}]\label{prop:DOC-EB}
    For any triple of matrices $A,B,C \in \M{d}$ having the same diagonal, we have
    \begin{itemize}
        \item A CDUC quantum channel $\Phi^{(1)}_{A,C}$ is EB iff the pair $(A,C)$ is pairwise completely positive (\textsf{PCP}).
        \item A DUC quantum channel $\Phi^{(2)}_{A,B}$ is EB iff the pair $(A,B)$ is pairwise completely positive (\textsf{PCP}).
        \item A DOC quantum channel $\Phi^{(3)}_{A,B,C}$ is EB iff the triple $(A,B,C)$ is triplewise completely positive (\textsf{TCP}).    
    \end{itemize}
\end{proposition}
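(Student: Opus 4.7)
The plan is to invoke the Choi--Jamio{\l}kowski isomorphism: $\Phi$ is entanglement breaking iff $J(\Phi)$ is separable. For a DOC channel $\Phi^{(3)}_{A,B,C}$, \cref{prop-twirling} implies that $J(\Phi^{(3)}_{A,B,C})$ is invariant under the conjugation action of $\mathcal{DO}_d\otimes \mathcal{DO}_d$ on $\M{d}\otimes \M{d}$, and analogously $J(\Phi^{(2)}_{A,B})$ is invariant under $\mathcal{DU}_d\otimes\overline{\mathcal{DU}}_d$ and $J(\Phi^{(1)}_{A,C})$ is invariant under $\mathcal{DU}_d\otimes\mathcal{DU}_d$. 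The standard symmetrization argument then applies: any separable decomposition $J(\Phi) = \sum_k p_k |v_k\ra\la v_k|\otimes |w_k\ra\la w_k|$ of a $G$-invariant state can be $G$-twirled termwise without changing its value, so it suffices to identify the $G$-twirl of an arbitrary product state.

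For the DOC case, I would expand
\[
\tau_{\mathcal{DO}}(v,w) := \mathbb{E}_{O\in\mathcal{DO}_d}\bigl[(O\otimes O)(|v\ra\la v|\otimes |w\ra\la w|)(O\otimes O)\bigr] = \sum_{i,j,k,l} v_i \overbar{v_j} w_k \overbar{w_l}\, \mathbb{E}[O_i O_j O_k O_l]\, |ik\ra\la jl|,
\]
and use the elementary identity that for i.i.d.\ Rademacher signs, $\mathbb{E}[O_iO_jO_kO_l]$ is supported on the three pairings $(i{=}j,k{=}l)$, $(i{=}k,j{=}l)$, $(i{=}l,j{=}k)$, with an inclusion--exclusion correction at $i{=}j{=}k{=}l$. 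These three pairings correspond respectively to the blocks $|ik\ra\la ik|$, $|ii\ra\la jj|$, and $|ij\ra\la ji|$, which are exactly the three block families appearing in the explicit Choi matrix of $\Phi^{(3)}_{A,B,C}$ (computed directly from its defining formula). The coefficient matching then identifies $\tau_{\mathcal{DO}}(v,w)$ as the DOC Choi matrix with rank-one parameters
\[
A(v,w) = (v \odot \overbar v)(w \odot \overbar w)^*, \quad B(v,w) = (v \odot w)(v \odot w)^*, \quad C(v,w) = (v \odot \overbar w)(v \odot \overbar w)^*,
\]
matching exactly the TCP building blocks. A separable decomposition $J(\Phi^{(3)}_{A,B,C}) = \sum_k \tau_{\mathcal{DO}}(v_k, w_k)$ (weights absorbed by rescaling the vectors) then assembles into the TCP factorization with $V=[v_1|v_2|\cdots]$ and $W=[w_1|w_2|\cdots]$; conversely, each $\tau_{\mathcal{DO}}(v,w)$ is manifestly separable, proving the TCP $\Rightarrow$ EB direction.

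The CDUC and DUC cases follow by the same recipe with the $\mathcal{DU}_d$-twirl in place of $\mathcal{DO}_d$. The phase averages over independent uniform angles kill one of the three real pairings: for DUC only $(i{=}j,k{=}l)$ and $(i{=}k,j{=}l)$ survive, producing a Choi matrix with only the $A$- and $B$-blocks and yielding the PCP condition on $(A,B)$; for CDUC only $(i{=}j,k{=}l)$ and $(i{=}l,j{=}k)$ survive, yielding PCP on $(A,C)$ after a harmless $W \mapsto \overbar{W}$ reparameterization. The main technical obstacle is the careful bookkeeping of the pairing expansion and its identification with the block structure of each Choi matrix, including the diagonal overcounting at $i{=}j{=}k{=}l$; once this identification is established, the equivalence between Choi separability and the PCP/TCP factorizations reduces to pure rewriting.
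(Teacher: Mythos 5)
The paper does not prove this proposition: it is quoted verbatim from \cite{singh2021diagonal}, so there is no internal proof to compare against. Your reconstruction is nonetheless correct and follows the standard route for diagonal-symmetric separability criteria: (i) $\Phi$ is EB iff $J(\Phi)$ is separable; (ii) since $J(\Phi)$ lies in the invariant algebra of the relevant diagonal group, any rank-one separable decomposition of $J(\Phi)$ can be twirled termwise without changing the sum; (iii) the twirl of a product state is computed via the sign/phase moment identity, the surviving pairings are matched to the $A,B,C$ blocks of the invariant Choi matrix, and the matrices $V,W$ assembled from the separable vectors realize exactly the PCP/TCP factorization. A few checks confirm your bookkeeping is right: the Rademacher identity is $\mathbb{E}[\epsilon_i\epsilon_j\epsilon_k\epsilon_l]=\delta_{ij}\delta_{kl}+\delta_{ik}\delta_{jl}+\delta_{il}\delta_{jk}-2\delta_{ijkl}$, and at $i=j=k=l$ the three surviving blocks plus the $-2$ correction give exactly the $|v_i|^2|w_i|^2$ demanded by $\diag A = \diag B = \diag C$; for the uniform-phase cases the relevant average $\mathbb{E}[e^{\mathrm{i}(\theta_i-\theta_j\mp\theta_k\pm\theta_l)}]$ kills the right pairing, leaving only the $A,B$ blocks (DUC) or the $A,C$ blocks (CDUC); and your observation that $(V\odot\overline V)(W\odot\overline W)^*$ is invariant under $W\mapsto\overline W$ makes that reparameterization harmless, as claimed. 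In short: there is nothing in the paper to compare against, but your argument is a correct (and essentially the intended) proof of the cited result.
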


Importantly, for the diagonal covariant channels discussed in this section, the EB property is not equivalent to the PPT property: there exist PPT channels that are not entanglement breaking. Such examples can be found in a simple way starting from the PPT entangled states of size $3 \times 3$ in \cite{kiem2011existence,kye2012classification} or from nonnegative and positive semidefinite matrices of size $d \geq 5$ which are not completely positive \cite{berman2003completely,tura2018separability,yu2016separability,singh2021diagonal}.

\bigskip

Let us now provide the formulas for the diagonal twirlings of a quantum channels. We would like to point out again that our naming convention here is different from the one in \cite{singh2021diagonal}, in the sense that the terms DUC and CDUC are swapped.

\begin{proposition}[{{\cite[Section 6]{nechita2021graphical}}}]
    Let $\Phi:\M{d} \to \M{d}$ be a linear map. Then, its twirlings with respect to the diagonal groups $\mathcal{DU}_d$ and $\mathcal{DO}_d$ are given by
    \begin{align*}
        \mathcal{T}_{\CDUC}\Phi &:=\mathcal T_{\mathcal{DU}_d,\overline{\mathcal{DU}}_d} \Phi = \Phi^{(1)}_{A(\Phi), C(\Phi)}\\
        \mathcal{T}_{\DUC}\Phi &:= \mathcal T_{\mathcal{DU}_d, \mathcal{DU}_d} \Phi = \Phi^{(2)}_{A(\Phi), B(\Phi)}\\
        \mathcal{T}_{\DOC}\Phi &:=\mathcal T_{\mathcal{DO}_d,\mathcal{DO}_d} \Phi = \Phi^{(3)}_{A(\Phi), B(\Phi),C(\Phi)},
    \end{align*}
    where the matrices $A,B,C$ are defined by (See \cref{fig:ABC-Phi} for a graphical representation):
     $$A(\Phi)_{ij}=\la ij|J(\Phi)|ij\ra, \quad B(\Phi)_{ij}=\la ii|J(\Phi)|jj\ra, \quad \text{and} \quad C(\Phi)_{ij}=\la ij|J(\Phi)|ji\ra \qquad \forall i,j \in [d].$$
\end{proposition}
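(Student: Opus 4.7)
The plan is to reduce the statement to a computation at the level of Choi matrices via \cref{prop-twirling}, which gives $J(\mathcal{T}_{\pi_A,\pi_B}\Phi) = \mathcal{T}_{\pi_B \otimes \overline{\pi_A}}(J(\Phi))$. Hence, for each of the three diagonal symmetry types, it suffices to describe the commutant of the corresponding tensor representation acting on $\M{d^2}$ and then apply the explicit twirling formula from \cref{prop-twirlformula}.

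The key step is to identify which entries $\langle ij|M|kl\rangle$ of a matrix $M\in\M{d^2}$ survive each diagonal twirling. For DOC, the relevant representation is $\pi_{\mathcal{DO}_d}\otimes \pi_{\mathcal{DO}_d}$ (self-conjugate because real), acting by conjugation via $D\otimes D$ with $D=\diag(\epsilon)$, $\epsilon\in\{\pm 1\}^d$. Invariance of the $(ij)(kl)$-entry amounts to $\epsilon_i\epsilon_j\epsilon_k\epsilon_l = 1$ for every $\epsilon\in\{\pm 1\}^d$, which is equivalent to each index in the multiset $(i,j,k,l)$ occurring with even multiplicity. This leaves exactly three families of surviving matrix units: $|ij\rangle\langle ij|$, $|ii\rangle\langle kk|$ (with $i\neq k$), and $|ij\rangle\langle ji|$ (with $i\neq j$). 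An analogous but simpler analysis for the unitary cases (using $z_i\in U(1)$ in place of $\epsilon_i\in\{\pm 1\}$) shows that $\pi_{\mathcal{DU}_d}\otimes\overline{\pi_{\mathcal{DU}_d}}$ (the DUC twirl) retains only the first two families, while $\overline{\pi_{\mathcal{DU}_d}}\otimes\overline{\pi_{\mathcal{DU}_d}}$ (the CDUC twirl) retains only the first and third.

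Since each invariant subspace above is spanned by mutually orthogonal rank-one matrix units, \cref{prop-twirlformula} reduces the twirling to retaining exactly these entries of $J(\Phi)$; explicitly,
\begin{align*}
    \mathcal{T}_{\mathcal{DO}_d\otimes \mathcal{DO}_d}\bigl(J(\Phi)\bigr) = \sum_{i,j} A(\Phi)_{ij}\,|ij\rangle\langle ij| + \sum_{i\neq j} B(\Phi)_{ij}\,|ii\rangle\langle jj| + \sum_{i\neq j} C(\Phi)_{ij}\,|ij\rangle\langle ji|,
\end{align*}
and similarly for the two unitary cases, keeping only the relevant sums. The diagonal identifications $A(\Phi)_{ii}=B(\Phi)_{ii}=C(\Phi)_{ii}$ are automatic from the definitions.

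It remains to check that the right-hand side above equals $J(\Phi^{(3)}_{A(\Phi),B(\Phi),C(\Phi)})$, and analogously in the (C)DUC cases. This is a direct computation on the matrix units: $\Phi^{(3)}_{A,B,C}(|i\rangle\langle i|) = \sum_k A_{ki}|k\rangle\langle k|$, while for $i\neq j$, $\Phi^{(3)}_{A,B,C}(|i\rangle\langle j|) = B_{ij}|i\rangle\langle j|+C_{ji}|j\rangle\langle i|$ (the transposition in $\ring{C}\odot Z^\top$ being the only subtle point). Plugging into $J(\Phi^{(3)}_{A,B,C}) = \sum_{i,j}\Phi^{(3)}_{A,B,C}(|i\rangle\langle j|)\otimes|i\rangle\langle j|$ and reindexing $(i,j)\mapsto(j,i)$ in the last term reproduces the displayed expression. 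The verifications for $\Phi^{(1)}$ and $\Phi^{(2)}$ are identical but shorter. The principal (but nonconceptual) obstacle is the index bookkeeping: carefully tracking which tensor factor of $J(\Phi)\in\M{d}\otimes\M{d}$ corresponds to the input versus the output of $\Phi$, and how the transpose in the definition of $\Phi^{(1,3)}$ swaps index positions in the Choi picture.
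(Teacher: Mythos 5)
Your proof is correct. The paper itself defers this proposition to \cite{nechita2021graphical} without supplying a proof, so there is no in-paper argument to compare against, but the route you take is the natural one: passing to the Choi matrix via \cref{prop-twirling}, noting that for an abelian group the twirling $\mathcal T_{\pi_B\otimes\overline{\pi_A}}$ is the pinching onto isotypic subspaces (\cref{prop-twirlformula} with $n_i\equiv 1$), hence simply retains the matrix-unit entries $\ket{ab}\!\bra{cd}$ on which the corresponding character is trivial, and then verifying by direct computation that the surviving entries reproduce $J(\Phi^{(i)}_{\cdots})$. The character analysis correctly yields $\{A,B\}$ for DUC, $\{A,C\}$ for CDUC, and $\{A,B,C\}$ for DOC, and the index bookkeeping — in particular the transpose in $\ring{C}\odot Z^\top$ producing the term $C_{ji}\ket{j}\!\bra{i}$ and the reindexing $(i,j)\mapsto(j,i)$ in the last sum — is handled correctly.
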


\begin{figure}
    \centering
    \includegraphics[width=\textwidth]{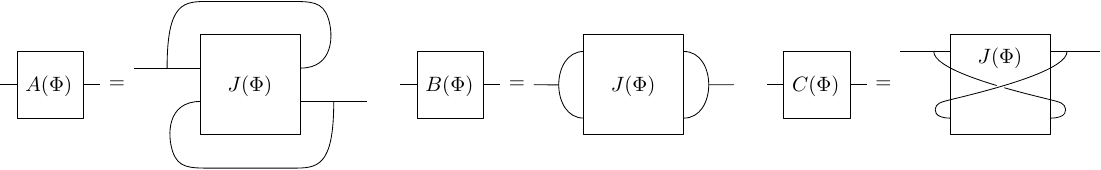}
    \caption{The three matrices $A,B,C$ associated to a linear map $\Phi$ through its Choi matrix $J(\Phi)$.}
    \label{fig:ABC-Phi}
\end{figure}

\bigskip

Let us now consider random diagonal covariant channels. To this end, let $\Phi_V:\M{d}\to \M{d}$ be a random Stinespring channel with environment dimension $s$ and define the following classes of random symmetric channels: 
\begin{itemize}
    \item \emph{random CDUC channels} 
$$\Phi_\CDUC := \mathcal{T}_\CDUC(\Phi_V) \sim \mu^\CDUC_{d,s}.$$
    \item \emph{random DUC channels} 
$$\Phi_\DUC := \mathcal{T}_\DUC(\Phi_V) \sim \mu^\DUC_{d,s}.$$
    \item \emph{random DOC channels} 
$$\Phi_\DOC := \mathcal{T}_\DOC(\Phi_V) \sim \mu^\DOC_{d,s}.$$
\end{itemize}

Since the PPT and EB properties of the random channels we consider here are determined by the associated $A,B,C$ matrices, we shall study the corresponding random matrices, derived from the Haar-distributed random isometry $V:\C{d} \to \C{d} \otimes \C{s}$:
\begin{equation} \label{eq-DOCentry}
    A=\sum_{k=1}^s V^{(k)}\odot \overline{V^{(k)}}, \quad B=\sum_{k=1}^s |{\rm diag\,}V^{(k)}\ra \la {\rm diag\,}V^{(k)}|,\quad C=\sum_{k=1}^s V^{(k)}\odot (V^{(k)})^*,
\end{equation}
where $V$ is represented in the block matrix form $V=\begin{pmatrix} V^{(1)} \\ \vdots \\ V^{(s)}\end{pmatrix}$ for $V^{(k)}\in \M{d}$. Moreover, we can draw the diagrams of $A,B,C$ as in \cref{fig-DOCpara}. Note that these $A,B,C \in \M{d}$ and their distribution depends on the integer parameter $s \geq 1$.

\begin{figure}[htb!] 
    \centering
    \includegraphics[width=\textwidth]{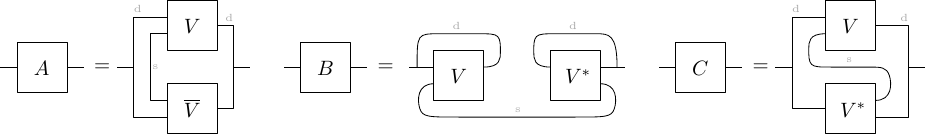}
    \caption{The three random matrix parameters $A$, $B$, and $C$ for random $\CDUC$, $\DUC$, and $\DOC$ channels, defined via a Haar-distributed random isometry $V:\C{d} \to \C{d} \otimes \C{s}$.}
    \label{fig-DOCpara}
\end{figure}

\begin{remark}
    It is useful to connect the random parameters $A,B,C$ discussed in this section to the scalars $\lambda_{1,2,3}^{(d,s)}$ from \cref{sec:conjugate-unitary-orthogonal-covariant,sec:hyperoctahedral-covariant}; graphically, compare \cref{fig:lambda-1-2,fig:lambda-3} with \cref{fig-DOCpara}. We have the following relations:
    \begin{align}
        &\lambda_1(\Phi)=\sum_{i,j=1}^d \la i|B(\Phi)|j\ra = \Tr(J_d B(\Phi)),\nonumber\\
        &\lambda_2(\Phi)=\sum_{i,j=1}^d \la i|C(\Phi)|j\ra = \Tr(J_d C(\Phi)), \nonumber\\ &\lambda_3(\Phi)=\Tr A(\Phi)=\Tr B(\Phi) = \Tr C(\Phi), \label{eq-C-lambda3}
    \end{align}
where $J_d\in \M{d}$ is the matrix in which all entries are equal to 1.
    
\end{remark}

As in the previous two sections, we study now the asymptotic behavior of three random matrix parameters $A,B,C$ in the regime $d \to \infty$. The following results will be used to establish, later in this section, the PPT and the EB properties of random CDUC, DUC, and DOC channels. Since both the entrywise and the spectral distributions of the matrices $A,B,C$ are needed (see e.g.~\cref{prop:DOC-PPT}), the analysis in this section is more involved. We defer the technical proofs of the following three results, \cref{prop-ranDOCmatrix,prop:DOC-entry-limit-fixed,prop:DOC-entry-limit}, to \cref{app:matrix}. 

We start by giving the large-$d$ limit spectrum of the self-adjoint matrix $C \in \Msa{d}$, whose positive semi-definiteness is needed for PPT property of DOC channels. We recall that the \emph{semicircular distribution} with average $m$ and variance $\sigma^2$ is given by 
$$\mathrm{d} \mathrm{SC}_{m,\sigma} = \frac{\sqrt{4\sigma^2 - (x-m)^2}}{2 \pi \sigma}  \mathbf{1}_{[-2\sigma+m, 2\sigma+m]}(x) \, \mathrm{d}x.$$

\begin{proposition} \label{prop-ranDOCmatrix}
The asymptotic spectral behavior of  the random matrix $C$, as $d\to \infty$, is given as follows:
\begin{enumerate}
    \item In the regime $s/d\to 0$,
    \begin{itemize}
        \item in probability, the empirical spectral distribution of $(ds)^{1/2}C$ converges, in moment, to a semicircular distribution with mean $0$ and variance 1,
        \item almost surely, we have the following convergence of the first three moments of $(ds)^{1/2}C$:
            $$\frac{1}{d}\Tr((ds)^{1/2}C)\to 0, \quad \frac{1}{d}\Tr(ds C^2)\to 1, \quad \frac{1}{d}\Tr((ds)^{3/2}C^3)\to 0.$$
    \end{itemize}
    \item In the regime $s\sim cd$ for a constant $c>0$, the empirical spectral distribution of $sC$ converges, in moments, to a semicircular distribution with mean $c$ and variance $c$ almost surely.
    \item In the regime $s/d\to \infty$, the empirical spectral distribution of $dC$ converges, in moments, to the probability mass $\delta_1$ almost surely.
\end{enumerate}
\end{proposition}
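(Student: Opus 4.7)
The plan is to decompose $C = D + R$, where $D := \diag(C)$ is the diagonal part and $R := \ring C$ is the Hermitian off-diagonal part, analyze each summand separately, and recombine in the three regimes. For the diagonal, each entry $C_{ii}=\sum_{k=1}^s |V^{(k)}_{ii}|^2$ has mean $1/d$ and variance $O(1/(sd^2))$ by a second-moment Weingarten computation; Lévy-type concentration on the Stiefel manifold of isometries $V:\C{d}\to \C{d}\otimes \C{s}$, combined with a union bound over $i\in [d]$, yields $\max_i|dC_{ii}-1|=O_{\mathrm{a.s.}}(\sqrt{\log d/s})$. Consequently $\|(ds)^{1/2}(D-\tfrac{1}{d}I_d)\|_{\mathrm{op}}\to 0$ almost surely in every regime, while $\|dD-I_d\|_{\mathrm{op}}\to 0$ almost surely whenever $s\to\infty$.

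For the off-diagonal part, a two-permutation Weingarten calculation gives $\mathbb{E}[|R_{ij}|^2]=\frac{s}{(ds)^2-1}\sim \frac{1}{d^2 s}$ for $i\neq j$, and shows that the joint moments $\mathbb{E}[R_{ij}R_{kl}]$ are of lower order whenever $\{i,j\}\neq \{k,l\}$, so $R$ behaves like a Wigner Hermitian matrix with entry variance $1/(d^2 s)$. I would then establish the semicircular limit of $\sqrt{ds}\,R$ by the moment method: expand
\begin{equation*}
 \tfrac{1}{d}\,\mathbb{E}\,\Tr\!\bigl((\sqrt{ds}\,R)^p\bigr) = \tfrac{(ds)^{p/2}}{d}\sum_{\vec i,\vec k}\mathbb{E}\!\left[\prod_{l=1}^p V^{(k_l)}_{i_l i_{l+1}}\overline{V^{(k_l)}_{i_{l+1} i_l}}\right]
\end{equation*}
over $(\sigma,\tau)\in S_p\times S_p$ via the Weingarten formula, and identify the leading contribution (of order $d^2$) as coming from non-crossing pair partitions that match each $V$ with the $\overline V$ occurring at its transposed position. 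This produces the $p$-th Catalan number $C_{p/2}$ for even $p$ and zero for odd $p$, i.e.~exactly the moments of $\mathrm{SC}_{0,1}$. A companion Weingarten estimate of $\operatorname{Var}[\tfrac{1}{d}\Tr(\sqrt{ds}\,R)^p]=O(1/d^2)$ for $p\leq 3$ upgrades moment convergence to almost sure convergence via Borel--Cantelli, yielding the almost-sure statement about the first three moments in item (1).

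The three regimes are then read off the decomposition
\begin{equation*}
 (ds)^{1/2}C = \sqrt{s/d}\,I_d + (ds)^{1/2}\bigl(D-\tfrac{1}{d}I_d\bigr) + \sqrt{ds}\,R,
\end{equation*}
in which the middle summand vanishes in operator norm almost surely by the first step. In regime (1), $\sqrt{s/d}\to 0$ removes the first summand, so $(ds)^{1/2}C$ inherits the $\mathrm{SC}_{0,1}$ limit of $\sqrt{ds}\,R$. In regime (2), $sC=\sqrt{s/d}\cdot (ds)^{1/2}C\to \sqrt{c}\,(\sqrt{c}\,I_d + W)$ with $W\sim \mathrm{SC}_{0,1}$, and since scaling a semicircular variable by $\sqrt{c}$ multiplies its mean by $\sqrt{c}$ and its variance by $c$, the limit is $\mathrm{SC}_{c,c}$. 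In regime (3), $dC = I_d + o(1) + \sqrt{d/s}\cdot \sqrt{ds}\,R$; together with $\sqrt{d/s}\to 0$ and the almost-sure boundedness of $\|\sqrt{ds}\,R\|_{\mathrm{op}}$, this gives $dC\to I_d$ and hence the empirical spectrum converges to $\delta_1$ almost surely.

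The main obstacle is handling the Weingarten moment expansion \emph{uniformly} across the regimes, as $s$ ranges from fixed to $s\gg d$: the Weingarten weights depend on the effective dimension $N=ds$, so the non-crossing-pair dominance and the $O(1/d^2)$ variance bound must be re-verified regime by regime, as must the almost-sure operator-norm bound $\|\sqrt{ds}\,R\|_{\mathrm{op}}\leq 2 + o(1)$ needed in regime (3). A secondary technicality is ensuring that the Haar-isometry corrections to the Gaussian/Wishart model (already used implicitly via \cref{eq-RanChoi}) are negligible at every order of the moment expansion.
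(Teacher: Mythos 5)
Your decomposition $C = D + R$ (diagonal plus off-diagonal) is a genuinely different organization of the argument than the paper uses. The paper does not split the matrix: it computes $\frac{1}{d}\,\mathbb{E}\Tr(C^p)$ directly via a single Weingarten expansion (Lemma~\ref{lem-ranDOCmatrixmoment}) which yields $\sum_{\alpha,\beta\in S_p}s^{\#\alpha}d^{\#((\gamma^{-1}\alpha)\vee(\gamma^{-1}\beta^{-1}))}\Wg_{ds}(\alpha\beta^{-1})$, and then identifies the leading terms via the permutation--geodesic inequality $|\alpha|+2|\alpha\beta^{-1}|+|\gamma^{-1}\alpha\vee\gamma^{-1}\beta^{-1}|\geq |\gamma|$ with equality exactly for $\alpha=\beta\in S_{NC_{1,2}}(\gamma)$ (Lemma~\ref{lem:permutation-optimization}). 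The three regimes are then read off purely from how the $s/d$ powers scale in the surviving terms; no centering or diagonal removal is needed, since $\sqrt{s/d}$ is automatically the coefficient of the non-crossing pair count inside the same sum. Your decomposition buys a cleaner conceptual picture (a Wigner-type off-diagonal plus a concentrating diagonal), but it does not save work: the moment computation for $\sqrt{ds}\,R$ still has to run through the identical Weingarten combinatorics, now with the extra bookkeeping burden that all adjacent trace indices must be distinct. That constraint is genuinely annoying to carry through the $\alpha,\beta$ expansion, which is presumably why the paper avoids it by keeping $C$ whole.

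Beyond the organizational difference, there are two concrete soft spots. First, the statement that $R$ ``behaves like a Wigner matrix'' and that the leading contribution is the non-crossing pairing is precisely the content that needs proof; without an analogue of Lemma~\ref{lem:permutation-optimization} (and of the $S_{2p}$ second-moment estimate with $\tilde\gamma$ for the variance), the plan stops short of an argument. You acknowledge this as the ``main obstacle,'' but that obstacle \emph{is} the proposition. Second, in regime (3) you invoke almost-sure boundedness of $\|\sqrt{ds}\,R\|_{\mathrm{op}}$. This is neither established by your sketch nor needed: strong (operator-norm) convergence of $\sqrt{ds}\,R$ is a substantially harder fact, and the paper explicitly defers it to Proposition~\ref{prop-ranDOCmatrix2} under the stronger hypothesis $s\gtrsim d^{1+t}$. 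For the moment-convergence claim of item~(3), it is enough to combine $\sqrt{d/s}\to 0$ with almost-sure convergence of $\frac{1}{d}\Tr\!\bigl((\sqrt{ds}\,R)^k\bigr)$ for each fixed $k$, which follows from variance bounds and Borel--Cantelli; replacing the operator-norm assertion by this moment-level bound would close the gap. Finally, note (which you got right) that the almost-sure statement in regime~(1) is restricted to the first three moments, because the generic variance estimate is only $O(d^{-1/2})$ and not summable; only for $p\leq 3$ does Lemma~\ref{lem:permutation-optimization2} improve it to $O(d^{-2})$.
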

\begin{proof}
See \cref{app:matrix}.
\end{proof}

Under some slightly stronger hypothesis, the convergence in moments in item (3) above can be upgraded to a convergence in operator norm. 

\begin{proposition} \label{prop-ranDOCmatrix2}
When $s\gtrsim d^{1+t}$ for some constant $t>0$, $\|dC-I_d\|_\infty \to 0$ almost surely as $d \to \infty$. In particular, the minimum eigenvalue $\lambda_{\min}(dC)$ of $dC$ converges to 1 almost surely as $d\to\infty$.
\end{proposition}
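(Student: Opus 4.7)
The plan is to apply the moment method, Markov's inequality, and the Borel--Cantelli lemma, in order to upgrade the convergence-in-moments statement of \cref{prop-ranDOCmatrix}(3) to convergence in operator norm. Setting $M := dC - I_d$, a direct Haar integration gives $\mathbb{E}[C_{ij}] = \delta_{ij}/d$, so $M$ is a centered Hermitian random matrix. Since $M$ is self-adjoint, one has the standard inequality $\|M\|_\infty^{2m} \leq \Tr(M^{2m})$ for every positive integer $m$, and Markov's inequality then yields
$$\mathbb{P}(\|dC - I_d\|_\infty > \epsilon) \leq \epsilon^{-2m}\, \mathbb{E}\bigl[\Tr((dC - I_d)^{2m})\bigr].$$

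The core technical step will be to establish a moment bound of the form
$$\mathbb{E}\bigl[\Tr((dC - I_d)^{2m})\bigr] \leq K_m \cdot d \cdot (d/s)^m,$$
with $K_m$ depending only on $m$. I would derive this by expanding the trace as a cyclic sum over index sequences and applying the Weingarten formula for the Haar isometry $V$, in the same combinatorial spirit as the proof of \cref{prop-ranDOCmatrix} deferred to \cref{app:matrix}. Heuristically, the off-diagonal entries of $dC$ have variance of order $1/s$ and are asymptotically decoupled, so $dC - I_d$ should behave like a Wigner-type matrix with entry variance $1/s$; the target moment bound is then simply the $L^{2m}$ counterpart of the expected operator-norm scaling $\|dC - I_d\|_\infty = O(\sqrt{d/s})$.

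Once the moment bound is in hand, fix any integer $m > 2/t$. Combined with the hypothesis $s \geq c\, d^{1+t}$, it gives
$$\mathbb{P}(\|dC - I_d\|_\infty > \epsilon) \leq K_m\, \epsilon^{-2m}\, d \cdot (c d^{t})^{-m} = O(d^{1-tm}),$$
which is summable in $d$. The Borel--Cantelli lemma then yields $\|dC - I_d\|_\infty \to 0$ almost surely along the sequence $d \to \infty$. The statement on $\lambda_{\min}(dC)$ is immediate from Weyl's inequality: $|\lambda_{\min}(dC) - 1| \leq \|dC - I_d\|_\infty \to 0$ almost surely.

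I expect the main obstacle to be the centered moment bound with the correct exponents: a crude Weingarten expansion of $\Tr((dC)^{2m})$ in isolation only produces $O(d^{1+m})$, and the crucial extra factor $(d/s)^m$ arises from cancellations between $dC$ and $I_d$ inside $M$. Tracking these cancellations amounts to showing that in the Weingarten sum for $\mathbb{E}\Tr(M^{2m})$, only the pairings producing ``variance-type'' contributions survive; this is the quantitative refinement of the convergence-in-moments computation used in the proof of \cref{prop-ranDOCmatrix}(3), and is the step where the combinatorics needs to be done carefully.
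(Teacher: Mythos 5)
Your strategy at the outer layer matches the paper's exactly: bound $\|dC-I_d\|_\infty^{2m}$ by $\Tr((dC-I_d)^{2m})$, apply Markov, and close with Borel--Cantelli. The Wigner-type heuristic (entries of $dC$ with variance $\sim 1/s$, hence $\|dC-I_d\|_\infty = O(\sqrt{d/s})$) is also the correct picture, and your conclusion via Weyl's inequality is fine.

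The genuine gap is precisely the step you flag as ``the main obstacle'': you assert the centered moment bound $\mathbb{E}[\Tr((dC-I_d)^{2m})] \leq K_m\, d\, (d/s)^m$ but do not prove it, and the entire content of the proposition's proof lies in this bound. Merely ``expanding $\Tr(M^{2m})$ as a cyclic sum and applying Weingarten'' is far more delicate than you suggest: the entries of $dC$ are not independent, so it is not obvious which pairings survive, and the cancellations between $dC$ and $I_d$ are not free. The paper does \emph{not} perform a direct Weingarten expansion of $\Tr(M^{2m})$. Instead it binomially expands $\Tr((dC-I_d)^{2N})=\sum_{p=0}^{2N}(-1)^{2N-p}\binom{2N}{p}\Tr((dC)^p)$ and then invokes the closed-form moment expansion from \cref{eq-C-eigenvalue},
$$\tfrac{1}{d}\mathbb{E}\bigl[\Tr((dC)^p)\bigr]=\sum_{k\ge 0}c_{k,p}\Big(\tfrac{d}{s}\Big)^{k}+O((ds)^{-1}),\qquad c_{k,p}=\mathrm{Cat}_k\binom{p}{2k},$$
together with the crucial structural fact from \cref{lem-polycoeff}(3) that $c_{k,p}$ is a \emph{polynomial in $p$ of degree $2k$}. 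The alternating-sum cancellation in \cref{lem-binomformula} then kills every term with $2k<2N$, so only orders $k\ge N$ and the $O(s^{-1})$ remainder survive; choosing $N$ with $Nt\ge 2+t$ gives $\mathbb{E}[\Tr((dC-I_d)^{2N})]=O(d^{-(1+t)})$ directly, which is summable. This ``polynomial degree $+$ alternating binomial sum'' mechanism is the specific idea you would have had to supply; without it, your proposal has identified the right target but not reached it. A minor secondary point: the bound $K_m d(d/s)^m$ you aim for is not quite the full story either, since the Weingarten remainder contributes an extra $O(s^{-1})$ term (visible in the paper's expansion) that is not subsumed by $d(d/s)^m$ for all $m$; it is harmless under $s\gtrsim d^{1+t}$, but a careful derivation has to see it.
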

\begin{proof}
See \cref{app:matrix}.
\end{proof}

\begin{remark}
Although it is not needed in our work, we can show the convergence of matrix $B$ in various regimes, using the same method. For example, in the regime $s\sim cd$ and $d\to \infty$, the the empirical spectral distribution of $sB$ converges, almost surely, to a \textit{Mar\u{c}enko-Pastur distribution} of parameter $c$: the distribution ${\rm MP}_c$ given by
\begin{equation*}
\mathrm{d}\mathrm{MP}_c=\max (1-c,0)\delta_0+\frac{\sqrt{(b-x)(x-a)}}{2\pi x} \; \mathbf{1}_{[a,b]}(x) \, \mathrm{d}x,
\end{equation*}
with $a = (1-\sqrt c)^2$ and $b=(1+\sqrt c)^2$.
\end{remark}

We now turn to the distribution of the elements of the matrices $A,B,C$.

\begin{proposition} \label{prop:DOC-entry-limit-fixed}
When $s$ is fixed, we have as $d\to \infty$,
\begin{enumerate}
    \item the law of $dsA_{ij}$ converges, in moments, to the Gamma distribution $\Gamma(s,1)$,
    
    \item both the laws of $|dsB_{ij}|^2$ and $|dsC_{ij}|^2$ converge, in moments, to that of the product of two independent random variables $X\sim \Gamma(s,1)$ and $Y\sim \mathsf{Exp}(1)$,

    \item for any fixed $N\leq d/2$ and pairwise distinct indices $i_1,\ldots, i_N, j_1,\ldots, j_N\in [d]$, the $3N$ random variables 
        $$dsA_{i_1 j_1}, dsA_{j_1 i_1}, |dsB_{i_1 j_1}|^2, dsA_{i_2 j_2}, dsA_{j_2 i_2}, |dsB_{i_2 j_2}|^2, \ldots, dsA_{i_N j_N}, dsA_{j_N i_N}, |dsB_{i_N j_N}|^2$$
    are asymptotically independent.
\end{enumerate}
\end{proposition}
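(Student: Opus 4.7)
The plan is to compute all moments of the random variables in the statement via Weingarten calculus and show that they converge, as $d\to\infty$ with $s$ fixed, to the moments of the claimed limit distributions. Since $\Gamma(s,1)$ and finite products of independent Gamma/Exponential variables are moment-determined, convergence of moments will yield the stated convergence in moments and joint asymptotic independence.

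First, I would realize $V:\C{d}\to \C{ds}$ as the first $d$ columns of a Haar random unitary $U\in\mathcal U_{ds}$, so $V^{(k)}_{ij}=U_{(k-1)d+i,\,j}$. Any mixed monomial $\mathbb E[\prod_r V_{\alpha_r\beta_r}\overline{V_{\alpha_r'\beta_r'}}]$ is then given by the Weingarten formula as a double sum over $(\sigma,\tau)\in S_p\times S_p$ with weights $\Wg(\sigma^{-1}\tau,ds)$ and Kronecker index constraints. The asymptotics $\Wg(\sigma,n)\sim n^{-p-|\sigma|}$ for $n=ds\to\infty$ (where $|\sigma|$ is the minimal number of transpositions to the identity) implies that, at leading order, finite collections of entries of $V$ behave as i.i.d.~$\mathbb C\mathcal N(0,1/(ds))$. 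Under this heuristic the three claims become transparent: (1) is $\sum_{k=1}^s|g_k|^2\sim \Gamma(s,1)$ for i.i.d.~standard complex Gaussians $g_k$; (2) follows because, conditionally on $g=(g_1,\ldots,g_s)$, the sum $\sum_k g_k\overline{h_k}$ is a centered complex Gaussian of variance $|g|^2$, so $\bigl|\sum_k g_k\overline{h_k}\bigr|^2 = |g|^2\cdot|Z|^2$ with $|g|^2\sim\Gamma(s,1)$ and $|Z|^2\sim\mathsf{Exp}(1)$ independent; and (3) reflects the independence of disjoint entries in the Gaussian model.

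To turn this into a proof, for (1) I would expand
\begin{equation*}
    \mathbb E[(dsA_{ij})^p]=(ds)^p\sum_{k_1,\ldots,k_p\in[s]}\mathbb E\Bigl[\prod_{r=1}^p|V^{(k_r)}_{ij}|^2\Bigr],
\end{equation*}
group the $(k_r)$'s according to the set partition of $[p]$ they induce, and apply Weingarten with all column indices equal to $j$. The surviving leading-order terms recombine into $s(s+1)\cdots(s+p-1)$, the $p$-th moment of $\Gamma(s,1)$; a direct check in the cases $p=1,2$ already yields $s$ and $s(s+1)$. For (2), I would expand $\mathbb E[|dsB_{ij}|^{2p}]$ analogously: because $i\ne j$, the row indices of the entries of $V$ involved split into two disjoint classes, forcing the Weingarten double permutation to decompose into two ``block'' permutations on $S_p$ coupled by a single $\sigma$. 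The leading-order sum then reduces to $p!\sum_{\sigma\in S_p}s^{\#\mathrm{cycles}(\sigma)}=p!\,s(s+1)\cdots(s+p-1)$, which is exactly $\mathbb E[X^p]\mathbb E[Y^p]$ for independent $X\sim\Gamma(s,1)$, $Y\sim\mathsf{Exp}(1)$. The computation for $|dsC_{ij}|^{2p}$ will be identical after swapping the row and column roles of two indices.

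For (3), a joint moment of the $3N$ random variables expands as a Weingarten sum over contractions of entries of $V$ at positions drawn from the index-pairs $(i_\ell,j_\ell),(j_\ell,i_\ell),(i_\ell,i_\ell),(j_\ell,j_\ell)$ for $\ell\in[N]$, which are pairwise distinct by hypothesis. Any contraction crossing two different groups $\ell\ne\ell'$, or crossing two of the three variables inside a single group, would force equality of a pair of these distinct row or column indices and therefore vanishes. Only intra-variable contractions survive at leading order, so the joint moment factors as a product of $3N$ single-variable moments, establishing the asymptotic independence claim. The main technical obstacle will be the combinatorial bookkeeping in (1) and (2)---organizing the $(k_r)$-sums, extracting leading-order Weingarten contributions, and matching them to the moments $s(s+1)\cdots(s+p-1)$ and $p!\,s(s+1)\cdots(s+p-1)$; the graphical Weingarten formalism of \cite{nechita2021graphical} is tailor-made for this kind of large-$d$ limit computation and is what I would employ.
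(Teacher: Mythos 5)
Your approach for (1) and (2) is the same as the paper's: compute the moments of $A_{ij}$ and $|B_{ij}|^2 = B_{ij}B_{ji}$ by Weingarten calculus and match them to the moments of $\Gamma(s,1)$ and the product $XY$. The paper packages this as a separate lemma (Lemma~\ref{lem-ranDOCentry}) giving exact formulas such as $\mathbb E[(A_{ij})^p] = \frac{s(s+1)\cdots(s+p-1)}{ds(ds+1)\cdots(ds+p-1)}$ and $\mathbb E[|B_{ij}|^{2p}] = p!\,s(s+1)\cdots(s+p-1)\sum_{\beta\in\mathcal S}\Wg_{ds}(\beta)$, after which the limits are immediate. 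One substantive piece of combinatorics that your sketch glosses over is the derivation of the $p!$ factor in (2): the sum $\sum_{\alpha\in\mathcal T} s^{\#\alpha}$ over alternating permutations $\mathcal T\subset S_{2p}$ is shown in the paper to equal $p!\sum_{\sigma\in S_p}s^{\#\sigma}$ via an explicit $p!$-to-$1$ surjection $\varphi:\mathcal T\to S_p$ sending $\alpha\mapsto \alpha^2\big|_{\{1,3,\ldots,2p-1\}}$; your phrase about the ``Weingarten double permutation'' decomposing into two blocks coupled by a single $\sigma$ does not by itself supply this argument, though it points in the right direction. For (3), the two routes diverge: the paper quotes the result (referenced as [PR04, MT07]) that a fixed $2s\times 2N$ submatrix of $\sqrt{ds}\,V$ converges in distribution to a Ginibre ensemble, and then notes that the entries appearing in the $3N$ variables are all distinct; you instead propose a direct Weingarten computation. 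That also works, but note that cross-variable Weingarten terms do not literally vanish from a Kronecker-delta obstruction: the formula sums over pairs $(\sigma,\tau)$, and only the ``diagonal'' $\sigma=\tau$ terms are killed outright by index mismatch when pairing entries from different variables. The terms with $\sigma\neq\tau$ can satisfy the deltas but contribute $\Wg_{ds}(\tau\sigma^{-1})=O((ds)^{-p-|\tau\sigma^{-1}|})$ with $|\tau\sigma^{-1}|\geq 1$, hence are subleading. So your conclusion (factorization of joint moments at leading order) is correct, but the mechanism is ``suppressed by the Weingarten asymptotics'' rather than ``vanishes''; the paper's Ginibre-corner citation sidesteps this bookkeeping.
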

\begin{proof}
See \cref{app:matrix}.
\end{proof}

\begin{proposition}\label{prop:DOC-entry-limit}
When $s\gtrsim d^t$ for some constant $t>0$, we have the following almost sure convergences as $d\to \infty$:
\begin{enumerate}
    \item $\displaystyle\max_{i,j} |dA_{ij}-1|\to 0$ (in particular, $\displaystyle\max_{i,j}dA_{ij}\to 1$ and $\displaystyle\min_{i,j}dA_{ij}\to 1$).
    \item $\displaystyle\max_{i\neq j} d^{1+t'/2}|B_{ij}|\to 0$ whenever $t'<t$.
    \item $\displaystyle\max_{i\neq j} d^{1+t'/2}|C_{ij}|\to 0$ whenever $t'<t$.
\end{enumerate}
\end{proposition}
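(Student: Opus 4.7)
The plan is to prove all three almost sure convergences by a unified strategy based on high moment bounds, Markov's inequality, and the Borel--Cantelli lemma. Recall from \eqref{eq-DOCentry} that the matrix entries $A_{ij}, B_{ij}, C_{ij}$ are quadratic polynomials in the entries of the Haar random isometry $V : \C{d}\to \C{d}\otimes \C{s}$. Since $V$ can be realized as the first $d$ columns of a Haar unitary in $\mathcal U_{ds}$, all joint even moments of these entries can be computed by Weingarten calculus; each monomial of total degree $2p$ in $V$ and $V^*$ reduces to a finite sum of Weingarten functions $\Wg_{ds}(\sigma\tau^{-1})$ indexed by pairs of permutations in the symmetric group $\mathcal{S}_p$.

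For item (1), $\mathbb{E}[A_{ij}] = 1/d$ and a Weingarten computation analogous to the scalar one in \cref{lem:moments-lambda-123} gives, for each fixed $p$ and all $d$ large enough,
$$\mathbb{E}\bigl[(dA_{ij}-1)^{2p}\bigr] \leq C_p\, s^{-p},$$
reflecting that $\operatorname{Var}(dA_{ij}) = \Theta(1/s)$. Markov's inequality combined with a union bound over the $d^2$ index pairs then yields
$$\mathbb{P}\Bigl(\max_{i,j} |dA_{ij}-1| > \eta\Bigr) \leq C_p\, \eta^{-2p}\, d^2\, s^{-p} \leq C_p\, \eta^{-2p}\, d^{\,2 - tp}$$
under the hypothesis $s \gtrsim d^t$. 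Choosing $p > 3/t$ makes the right-hand side summable in $d$, and Borel--Cantelli produces the almost sure convergence.

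For items (2) and (3), the off-diagonal condition $i \neq j$ forces $\mathbb{E}[B_{ij}] = \mathbb{E}[C_{ij}] = 0$, and the same Weingarten bookkeeping delivers
$$\mathbb{E}\bigl[|B_{ij}|^{2p}\bigr] \leq C_p\, (d^2 s)^{-p}, \qquad \mathbb{E}\bigl[|C_{ij}|^{2p}\bigr] \leq C_p\, (d^2 s)^{-p}.$$
Under $s \gtrsim d^t$ this gives $\mathbb{E}[|d^{1+t'/2}\, B_{ij}|^{2p}] \leq C_p\, d^{p(t'-t)}$, and likewise for $C_{ij}$. A Markov and union-bound step then yields
$$\mathbb{P}\Bigl(\max_{i\neq j} d^{1+t'/2}\, |B_{ij}| > \eta\Bigr) \leq C_p\, \eta^{-2p}\, d^{\,2 + p(t'-t)},$$
which is summable in $d$ whenever $p > 3/(t-t')$; since $t'<t$ by hypothesis, Borel--Cantelli concludes, and the argument for $C_{ij}$ is identical.

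The main technical obstacle is establishing the moment bounds above with sufficient control of the constants $C_p$ uniformly in the dimension $d$. Concretely, one must identify the leading pairing in each Weingarten expansion, verify the cancellation that makes the recentered moments of $dA_{ij}$ scale like $s^{-p}$ rather than the naive $d^{-p}s^{-p}$, and check that the off-diagonal constraint $i \neq j$ really removes the diagonal leading contributions in $B$ and $C$. This combinatorial analysis parallels the scalar-moment computations for $\lambda_{1,2,3}^{(d,s)}$ already deferred to \cref{app:matrix}, where the matrix-valued estimates fit naturally alongside.
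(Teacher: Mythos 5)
Your proposal follows essentially the same strategy as the paper's proof in \cref{app:matrix}: moment bounds for the matrix entries, Markov's inequality, a union bound over the $d^2$ index pairs, and Borel--Cantelli. The moment bounds you assert, $\mathbb{E}\bigl[(dA_{ij}-1)^{2p}\bigr]\lesssim s^{-p}$ and $\mathbb{E}\bigl[|B_{ij}|^{2p}\bigr],\mathbb{E}\bigl[|C_{ij}|^{2p}\bigr]\lesssim (d^2s)^{-p}$, are exactly what the paper establishes from the exact moment formula of \cref{lem-ranDOCentry} combined with the polynomial-expansion and alternating-sum lemmas \cref{lem-polycoeff} and \cref{lem-binomformula}; these are precisely the ``cancellation'' steps you correctly single out as the remaining technical work.
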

\begin{proof}
See \cref{app:matrix}.    
\end{proof}

The previous three results allow us to establish the PPT property for diagonal unitary and orthogonal covariant random channels. Note the qualitatively different behaviors for the DUC case versus the CDUC and DOC symmetries.

\begin{theorem}\label{thm:DOC-PPT}
Let $\Phi_\DUC \sim \mu^\DUC_{d,s}$ be a random DUC channel. Then:
\begin{enumerate}
    \item In the regime $s$ is fixed and $d\to \infty$, we have
        $$\lim_{d\to \infty}\mathbb{P}(\Phi_{\DUC} \text{ is PPT })=0.$$

    \item In the regime $s\gtrsim d^t$ for some constant $t>0$, $\Phi_\DUC$ is almost surely PPT as $d\to \infty$.

\end{enumerate}    

\smallskip

   \noindent Let $\Phi_\CDUC \sim \mu^\CDUC_{d,s}$, and $\Phi_\DOC \sim \mu^\DOC_{d,s}$ be, respectively, random CDUC and DOC channels. Then:
 \begin{enumerate}   
 \setcounter{enumi}{2}
    \item In the regime $d\to \infty$ and $s/d\to 0$, almost surely, neither $\Phi_{\CDUC}$ nor $\Phi_{\DOC}$ are PPT.
    
    \item Consider the regime $s\sim cd$. If $c<4$, then almost surely as $d\to \infty$, neither $\Phi_\CDUC$ nor $\Phi_\DOC$ are PPT. If $c>4$, then almost surely as $d \to \infty$, both $\Phi_\CDUC$ and $\Phi_\DOC$ have Choi matrices whose partial transpositions converge,  in moments, to a probability measure supported on the positive real line.

    \item In the regime $s\gtrsim d^{1+t}$ for some constant $t>0$, both $\Phi_\CDUC$ and $\Phi_\DOC$ are almost surely PPT as $d\to \infty$.
\end{enumerate}
\end{theorem}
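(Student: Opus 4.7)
The plan is to reduce each of the five items to the random matrix results established in \cref{prop-ranDOCmatrix,prop-ranDOCmatrix2,prop:DOC-entry-limit-fixed,prop:DOC-entry-limit} by invoking the PPT characterizations of \cref{prop:DOC-PPT}: $\Phi_\DUC$ is PPT iff $|B_{ij}|^2 \leq A_{ij}A_{ji}$ for all $i,j$; $\Phi_\CDUC$ is PPT iff $C \geq 0$; and $\Phi_\DOC$ is PPT iff both conditions hold simultaneously.

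For items (2) and (5), the strategy is direct: invoke the uniform entrywise estimates of \cref{prop:DOC-entry-limit}. When $s\gtrsim d^t$, one has $\min_{i,j} dA_{ij} \to 1$ and $\max_{i\neq j} d^{1+t'/2}|B_{ij}|\to 0$ almost surely for any $t'<t$, so off-diagonal $|B_{ij}|^2 = o(d^{-2-t'})$ while $A_{ij}A_{ji} = (1+o(1))d^{-2}$; the DUC inequality thus holds for all $i\neq j$ and is trivially saturated on the diagonal since $B_{ii}=A_{ii}$, proving (2). For (5) the same argument controls the $B$-condition, and in addition \cref{prop-ranDOCmatrix2} gives $\lambda_{\min}(dC)\to 1$ a.s., so $C\geq 0$ a.s., and both $\Phi_\CDUC$ and $\Phi_\DOC$ are almost surely PPT.

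For items (3) and the ``$c<4$'' part of (4), the obstruction is that $C$ fails to be PSD. For (3) I would argue by contradiction from the almost-sure moment statements in \cref{prop-ranDOCmatrix}(1): writing $m_k := \tfrac{1}{d}\Tr\bigl((ds)^{k/2}C^k\bigr)$, if $C\geq 0$ then the measure $\tfrac{1}{d}\sum_i \delta_{(ds)^{1/2}c_i}$ is supported on $[0,\infty)$, and Cauchy--Schwarz applied to $(x^{1/2},x^{3/2})$ yields $m_2^2 \leq m_1 m_3$; passing to the limit gives $1 \leq 0\cdot 0$, a contradiction. For the failure side of (4), the almost-sure moment convergence of the spectrum of $sC$ to $\mathrm{SC}_{c,c}$ from \cref{prop-ranDOCmatrix}(2) is, since the limit is compactly supported, also almost-sure weak convergence; its support $[c-2\sqrt c, c+2\sqrt c]$ meets $(-\infty,0)$ when $c<4$, so testing against a nonnegative continuous bump supported on $(-\infty,-\epsilon)$ for a fixed small $\epsilon>0$ produces a linearly growing number of strictly negative eigenvalues of $C$ almost surely, ruling out $C \geq 0$. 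For the ``$c>4$'' part of (4), I would exploit the block decomposition of $J(\Phi_{\DOC})^\Gamma$ into a $d$-dimensional diagonal-to-diagonal block unitarily equivalent to $C$ and $\binom{d}{2}$ explicit $2\times 2$ blocks of the form $\bigl(\begin{smallmatrix}A_{pq}& B_{pq}\\ \overline{B_{pq}}&A_{qp}\end{smallmatrix}\bigr)$; combining the semicircle limit for $sC$ (whose support lies in $(0,\infty)$ when $c>4$) with the entrywise bounds $dA_{pq}\to 1$ and $|B_{pq}|=o(d^{-1})$ (from \cref{prop:DOC-entry-limit} with $t=1$) allows a moment-by-moment computation of the empirical spectral distribution of $d\,J(\Phi_{\DOC})^\Gamma$ and verifies that its limit is supported on $(0,\infty)$.

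Finally, for item (1) I would apply a truncation argument based on \cref{prop:DOC-entry-limit-fixed}(3): for any fixed integer $N$ and any $N$ disjoint pairs $(i_k,j_k)_{k=1}^N$, that proposition says the $3N$ variables $(dsA_{i_kj_k}, dsA_{j_ki_k}, |dsB_{i_kj_k}|^2)_{k=1}^N$ become asymptotically mutually independent, and within one triple the joint complex-Gaussian approximation of the entries of a Haar isometry identifies the limit law as three mutually independent random variables of respective laws $\Gamma(s,1)$, $\Gamma(s,1)$, and $W\cdot E$ with $W\sim\Gamma(s,1)$ and $E\sim\mathsf{Exp}(1)$ independent. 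The limiting probability $p_s^\infty := \mathbb{P}(W E > G_1 G_2)$ with $W,G_1,G_2\sim\Gamma(s,1)$, $E\sim\mathsf{Exp}(1)$ mutually independent is strictly positive (since $WE$ has an unbounded right tail while $G_1G_2$ is light-tailed, e.g.\ $\mathbb{P}(E>s)=e^{-s}>0$ suffices after conditioning on typical values of $W, G_1, G_2$). Since $\Phi_\DUC$ PPT forces the inequality on each chosen pair, we get $\limsup_{d\to\infty}\mathbb{P}(\Phi_\DUC\text{ is PPT}) \leq (1-p_s^\infty)^N$, and letting $N\to\infty$ proves the claim. The main technical obstacle I anticipate is in the ``$c>4$'' part of item (4): the $C$-contribution and the $2\times 2$-block contributions to the partial-transpose spectrum live at different scales ($\sim 1/(sd)$ versus $\sim 1/d$), and merging them into a single moment limit requires slightly more joint control on the entries of $A$ and $B$ in the linear regime $s\sim cd$ than \cref{prop:DOC-entry-limit} directly provides.
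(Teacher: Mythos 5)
Your proposal follows the same overall architecture as the paper's proof: reduce each item to the PPT characterization of \cref{prop:DOC-PPT} and then invoke the random-matrix/random-entry estimates from \cref{prop-ranDOCmatrix,prop-ranDOCmatrix2,prop:DOC-entry-limit-fixed,prop:DOC-entry-limit}. Items (1), (2), (5), the Cauchy--Schwarz argument for (3), and the $c<4$ half of (4) all match the paper essentially verbatim; your phrasing of (3) in terms of empirical moments $m_k$ and the explicit invocation of moment-to-weak convergence for the compactly supported semicircle are just slightly more detailed renditions of the same reasoning.

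The one place you diverge is the $c>4$ half of (4). The paper does not attempt to compute the limiting ESD of the full matrix $J^\Gamma$: its proof literally cites \cref{prop-ranDOCmatrix}(2) for the $C$-block alone, and the phrase ``partial transpositions converge in moments'' should be read as referring to the spectral distribution of (the appropriately normalized) $C$-block, which is the part of the partial transpose that is not automatically positive once the $B$-condition is under control. This is also the reading that makes the subsequent Remark about outliers non-vacuous. Your more ambitious attempt to assemble a single empirical limit for $dJ^\Gamma$ is not what the theorem requires, and the ``scale mismatch'' you flag does not in fact occur: in the regime $s\sim cd$, the eigenvalues of $C$ live at scale $1/s\sim 1/(cd)$ (not $1/(sd)$; you are conflating the entry size $|C_{ij}|\sim d^{-3/2}$ with the spectral scale, which picks up an extra $\sqrt d$), while the $2\times 2$-block eigenvalues are at scale $1/d$ --- so both contributions are $\Theta(1/d)$ and the normalized ESD of $dJ^\Gamma$ would simply converge to $\delta_1$ because the $C$-block contributes only $d$ of the $d^2$ eigenvalues. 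That limit is trivially positive and has nothing to do with $c>4$; so the statement must refer to the $C$-block, exactly as the paper's proof treats it.
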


\begin{proof}
    Let us start with the first and second points. By \cref{prop:DOC-PPT}, we know that 
        $$\Phi_\DUC \text{ is PPT } \iff A_{ij}A_{ji}\geq|B_{ij}|^2 \;\; \forall\, i\neq j.$$
    If $s$ is fixed, then by \cref{prop:DOC-entry-limit-fixed}, the limit probability can be estimated, for any fixed $N\geq 1$, as follows:
    \begin{align*}
        \limsup_{d\to \infty}\mathbb{P}(\Phi_{\DUC}\text{ is PPT })&\leq \limsup_{d\to \infty}\mathbb{P}(A_{2n-1, 2n}A_{2n, 2n-1}\geq |B_{2n-1, 2n}|^2\;\forall\, n=1,\ldots,N)\\
        &=\prod_{n=1}^N \lim_{d\to\infty} \mathbb{P}((dsA_{2n-1, 2n})\cdot (dsA_{2n, 2n-1})\geq |dsB_{2n-1, 2n}|^2)\\
        &=\left(\mathbb{P}(X_1 X_2\geq X_3 Y)\right)^N,
    \end{align*}
    where $X_1,X_2,X_3,Y$ are four independent random variables such that $X_1,X_2,X_3\sim \Gamma(s,1)$ and $Y\sim \mathsf{Exp}(1)$. Moreover, the last equation can be more explicitly written as an integral
        $$\mathbb{P}(X_1 X_2\geq X_3 Y)=\int_{\mathbb{R}^4}\mathbf{1}_{\big\{\substack{x_1x_2\geq x_3y, \\ x_1,x_2,x_3,y\geq 0}\big\}}\frac{(x_1x_2x_3)^{s-1}}{((s-1)!)^3}e^{-(x_1+x_2+x_3+y)}dx_1\, dx_2\, dx_3\, dy<1.$$
    Therefore, we have (1) by taking $N\to \infty$. When $s\gtrsim d^t$ for some $t>0$, we can show that $d^2\min_{i\neq j}(A_{ij}A_{ji}-|B_{ij}|^2)\to 1$, almost surely, from the inequality
        $$\left(\min_{i,j}dA_{ij}\right)^2-\left(\max_{i\neq j}d|B_{ij}|\right)^2\leq d^2\min_{i\neq j}(A_{ij}A_{ji}-|B_{ij}|^2)\leq \left(\max_{i,j}dA_{ij}\right)^2$$
    and Proposition \ref{prop:DOC-entry-limit}, and hence (2) is shown.   

    \medskip 

    Now let us focus on (3) - (5), namely the PPT behaviors of CDUC and DOC channels. Note that for both two channels, their PPT property requires the condition $C\geq 0$, and in particular, this implies that
        $$\Tr C \Tr C^3\geq (\Tr C^2)^2$$
    by Cauchy-Schwarz inequality. However, when $s/d\to 0$ as $d\to\infty$, we have by \cref{prop-ranDOCmatrix} (1)
        $$s^2\left(\Tr C \Tr C^3-(\Tr C^2)^2\right)=\frac{1}{d}\Tr((ds)^{1/2}C)\cdot \frac{1}{d}\Tr((ds)^{3/2}C^3)-\left(\frac{1}{d}\Tr(dsC^2)\right)^2\to -1<0$$
    almost surely, which proves (3). On the other hand, when $s\gtrsim d$ we again have
    \begin{center}
        $d^2\min_{i\neq j}(A_{ij}A_{ji}-|B_{ij}|^2)\to 1$ almost surely.
    \end{center}
    Therefore, it is enough to consider the condition $C\geq 0$ for PPT properties by \cref{prop:DOC-PPT}. Now \cref{prop-ranDOCmatrix} (2) and (3), combined with \cref{prop:DOC-entry-limit}, and \ref{prop-ranDOCmatrix2}, imply the assertions (4) and (5), respectively.
\end{proof}

\begin{remark}
Note that in the case of CDUC and DOC channels, in the regime $d \to \infty$, $s \sim c d$ where $c > 4$, we cannot prove exactly that $\Phi_\CDUC$ and $\Phi_\DOC$ are asymptotically PPT. This is due to the fact that the absence
of negative support of the limiting measure of the partial transpositions of the respective Choi matrices does not exclude the absence of negative outliers. To reach the desired conclusion, one may show the \emph{strong convergence} \cite{haagerup2005new} of the random matrices to their corresponding limit. For the partial transpositions of Wishart random matrices, Aubrun \cite{Aub12} computed very large moments of the random matrices in order to show the norm convergence, as we do in the proof of \cref{prop-ranDOCmatrix2} for the last assertion in \cref{thm:DOC-PPT}.
\end{remark}

The next result describes the asymptotic behavior of \emph{computable cross norm} or \emph{realignment criterion} \cite{chen2003matrix,rudolph2005further} for (the Choi matrix of) random DOC channels. The realignment criterion is known to be, in general, \emph{incomparable} to the PPT criterion. Its form for diagonal invariant bipartite matrices was derived in \cite[Lemma 2.12]{singh2021diagonal}, namely
    $$\Phi_{A,B,C}^{(3)} \text{ is entanglement breaking} \implies \sum_{i,j=1}^d A_{ij}-\|A\|_1\geq \sum_{i\neq j=1}^d \max(|B_{ij}|,|C_{ij}|).$$
The result below can be interpreted as the fact the realignment criterion is \emph{weaker} than the PPT criterion for random DOC channels, from a random perspective. Note that the same phenomenon was proven for random bipartite quantum states in \cite{aubrun2012realigning}.

\begin{proposition}\label{prop:DOC-realignment}
    In every regime $s\gtrsim d^t$ for some constants $t>0$ and $d\to \infty$ , random DOC channels satisfy the realignment criterion almost surely,
    $$\sum_{i,j=1}^d A_{ij}-\|A\|_1\geq \sum_{i\neq j=1}^d \max(|B_{ij}|,|C_{ij}|).$$
The same conclusion holds for random CDUC and DUC channels.
\end{proposition}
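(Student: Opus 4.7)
The plan is to show that, in the regime $s \gtrsim d^t$ with $t>0$, the left-hand side $d - \|A\|_1$ of the realignment inequality grows linearly in $d$ while the right-hand side $\sum_{i \neq j}\max(|B_{ij}|,|C_{ij}|)$ is of strictly smaller order, so the desired inequality holds almost surely for all large $d$. Note first that $A$ is column stochastic, so $\sum_{i,j=1}^d A_{ij} = d$; the task reduces to upper-bounding $\|A\|_1$.

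For the left-hand side, I would use \cref{prop:DOC-entry-limit} (1), namely $\max_{i,j}|dA_{ij}-1| \to 0$ almost surely, to write $A = \tfrac{1}{d}J_d + E$, where $J_d$ denotes the all-ones matrix. This gives $\|E\|_\infty = o(1/d)$, hence $\|E\|_F \leq d\,\|E\|_\infty = o(1)$ and $\|E\|_1 \leq \sqrt{d}\,\|E\|_F = o(\sqrt d)$ almost surely. Combined with $\|\tfrac{1}{d}J_d\|_1 = 1$, this yields $\|A\|_1 \leq 1 + o(\sqrt d)$, so
\[
d - \|A\|_1 \geq d - 1 - o(\sqrt d) \sim d \quad \text{almost surely.}
\]

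For the right-hand side, I would apply \cref{prop:DOC-entry-limit} (2)--(3) with a fixed $t' \in (0,t)$, giving $\max_{i\neq j}|B_{ij}|$, $\max_{i\neq j}|C_{ij}| = o(d^{-1-t'/2})$ almost surely. Since there are only $d(d-1)$ off-diagonal entries,
\[
\sum_{i \neq j}\max(|B_{ij}|,|C_{ij}|) \leq d(d-1)\cdot o(d^{-1-t'/2}) = o(d^{1-t'/2}) = o(d),
\]
where the last equality crucially uses $t' > 0$. Comparing the two estimates shows that the LHS dominates the RHS almost surely for all sufficiently large $d$, which proves the inequality.

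For random $\CDUC$ (resp.~$\DUC$) channels, the structure theorem forces $\ring{B} = 0$ (resp.~$\ring{C} = 0$) after twirling, so one of the two off-diagonal sums vanishes identically and the same argument on the remaining matrix gives the claim. The only place where any real work is hidden is the trace-norm control on $A$: a priori one might fear that the deviation $E = A - \tfrac{1}{d}J_d$, despite being small entrywise, could still contribute a trace norm comparable to $d$ through rank effects. The chain $\|E\|_1 \leq \sqrt{d}\,\|E\|_F \leq d^{3/2}\|E\|_\infty = o(\sqrt d)$ shows that the uniform entrywise bound from \cref{prop:DOC-entry-limit} (1) is sufficient to suppress this contribution, leaving room to beat the $o(d)$ bound on the right-hand side by the linear-in-$d$ gain $\sum_{i,j}A_{ij} = d$.
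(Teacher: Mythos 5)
Your proof is correct and takes essentially the same approach as the paper: both arguments reduce the realignment inequality to the almost sure estimates $\|A\|_1 = O(\sqrt{d})$ and $\sum_{i\neq j}\max(|B_{ij}|,|C_{ij}|) = o(d)$ obtained from \cref{prop:DOC-entry-limit} via the Cauchy--Schwarz bound $\|A\|_1 \leq \sqrt{d}\,\|A\|_2$ and an entrywise bound on the Frobenius norm. Your variations are cosmetic (you invoke the TP condition $\sum_{i,j}A_{ij}=d$ exactly rather than squeezing $\tfrac{1}{d}\sum A_{ij}$ between $\min_{i,j}dA_{ij}$ and $\max_{i,j}dA_{ij}$, and you decompose $A = \tfrac{1}{d}J_d + E$ rather than bounding $\|A\|_1$ directly), though you should write $\max_{i,j}|E_{ij}|$ rather than $\|E\|_\infty$, since the paper reserves $\|\cdot\|_\infty$ for the operator norm (cf.\ \cref{prop-ranDOCmatrix2}) and the inequality $\|E\|_F \leq d\,\|E\|_\infty$ is only correct under the entrywise interpretation.
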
 
\begin{proof}
This follows from the three almost sure limits:
    $$\frac{1}{d}\sum_{i,j}A_{ij}\to 1,\quad \frac{1}{d}\|A\|_1\to 0, \quad \text{and} \quad \frac{1}{d}\sum_{i\neq j}\max(|B_{ij}|,|C_{ij}|)\to 0.$$
The limits above follow from  \cref{prop:DOC-entry-limit} and the following three inequalities:
\begin{align*}
    &\min_{i,j} dA_{ij}\leq \frac{1}{d}\sum_{i,j} A_{ij}\leq \max_{i,j}dA_{ij},\\
    &\frac{1}{d}\|A\|_1\leq \frac{1}{d}\|I_d\|_2\,\|A\|_2=\frac{1}{\sqrt{d}}\bigg(\sum_{i,j}A_{ij}^2\bigg)^{1/2}\leq \frac{1}{\sqrt{d}}(\max_{i,j}dA_{ij}),\\
    &\frac{1}{d}\sum_{i\neq j}\max(|B_{ij}|,|C_{ij}|)\leq \frac{1}{d}\sum_{i\neq j}(|B_{ij}|+|C_{ij}|)\leq (d-1)(\max_{i\neq j} |B_{ij}|+ \max_{i\neq j}|C_{ij}|).
\end{align*}
\end{proof}

We now turn to the entanglement breaking (EB) property for random DOC channels. 

\begin{theorem} \label{thm-DOCEB}
Under the regime $s\gtrsim d^{2+t}$ for some constants $t>0$ and $d\to \infty$, the random DOC channel $\Phi_{\DOC}$ is almost surely entanglement breaking.
\end{theorem}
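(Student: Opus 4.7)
The plan is to apply a Gurvits--Barnum-type separability ball argument to the Choi matrix $J(\Phi_\DOC)$, exploiting the fact that the $\DOC$-twirling reduces the variance of the Choi matrix around its mean $I_{d^2}/d$ by a factor of $d^2$ compared to the untwirled Stinespring Choi matrix. This variance reduction is precisely what allows the EB threshold to drop from $\sim d^3$ (without symmetry, as in \cite{ASY14}) to $\sim d^2$.

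First, $\mathbb{E}[J(\Phi_\DOC)] = I_{d^2}/d$ by the unitary invariance of the Haar measure on $\mathcal{U}_{ds}$ together with the $\DOC$-invariance of $I_{d^2}/d$. The Gurvits--Barnum separability ball theorem then guarantees a universal constant $c > 0$ such that $\|J(\Phi_\DOC) - I_{d^2}/d\|_F \leq c/d$ implies separability of $J(\Phi_\DOC)$, equivalently that $\Phi_\DOC$ is entanglement breaking. Using the parametrization of DOC-invariant matrices,
\begin{equation*}
    \|J(\Phi_\DOC) - I_{d^2}/d\|_F^2 = \sum_{i,j=1}^d \Big(A_{ij} - \tfrac{1}{d}\Big)^2 + \sum_{i\neq j}|B_{ij}|^2 + \sum_{i\neq j}|C_{ij}|^2,
\end{equation*}
together with the Weingarten-based variance estimates $\mathrm{Var}(A_{ij}), \mathbb{E}|B_{ij}|^2, \mathbb{E}|C_{ij}|^2 = O(1/(d^2 s))$, one obtains $\mathbb{E}\|J(\Phi_\DOC) - I_{d^2}/d\|_F^2 = O(1/s) = O(d^{-2-t})$ in our regime.

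Next, I would promote this expectation bound to an almost sure statement via the concentration of measure on $\mathcal{U}_{ds}$ (log-Sobolev inequality): the functional $V \mapsto \|J(\Phi_\DOC) - I_{d^2}/d\|_F$ is Lipschitz with constant $O(1)$, since $J$ is quadratic in $V$ with $\|V\|_{\infty} = 1$ and $\DOC$-twirling is a contraction in Frobenius norm. Hence the deviations from the mean admit sub-Gaussian tails of scale $O(1/\sqrt{ds})$, and choosing the threshold $u = c/d$ yields a tail bound of order $\exp(-c's/d) \leq \exp(-c'd^{1+t})$, which is summable in $d$. A direct Borel--Cantelli then gives $\|J(\Phi_\DOC) - I_{d^2}/d\|_F \leq c/d$ eventually almost surely, and Gurvits--Barnum concludes that $\Phi_\DOC$ is almost surely entanglement breaking.

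The main technical obstacle is to rigorously control the Lipschitz constant of the map $V \mapsto \|J(\Phi_\DOC) - I_{d^2}/d\|_F$ in the intrinsic Stiefel-manifold metric, so that the correct variance scaling $1/(ds)$ appears in the concentration inequality; this can be settled by a careful Hoffmann--Wielandt-type estimate for $J_\DOC$ as a quadratic function of $V$. An alternative, constructive route that bypasses Gurvits--Barnum is to exhibit an explicit \textsf{TCP} decomposition of $(A,B,C)$ by perturbing the trivial decomposition $I_{d^2}/d = \tfrac{1}{d}\sum_{i,j=1}^d \ketbra{i}{i} \otimes \ketbra{j}{j}$ of the depolarizing triple $(\tfrac{1}{d}J_d, \tfrac{1}{d}I_d, \tfrac{1}{d}I_d)$, absorbing the small random corrections into additional rank-one \textsf{TCP} summands whose smallness is controlled by the entry-wise estimates of \cref{prop:DOC-entry-limit} applied with parameter $2+t$.
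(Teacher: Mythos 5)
Your primary route (Gurvits--Barnum ball plus unitary concentration) is genuinely different from what the paper does. The paper's proof is elementary and constructive: it decomposes the DOC triple $(A,B,C)=\sum_{1\le i<j\le d}(A^{(ij)},B^{(ij)},C^{(ij)})$, where each summand is a DOC triple supported on the two-dimensional coordinate subspace $\{e_i,e_j\}$ and the diagonal mass $A_{ii}$ is split evenly into $d-1$ pieces. It then checks, via \cref{prop:DOC-entry-limit} in the regime $s\gtrsim d^{2+t}$, that each $2\times 2$ triple almost surely satisfies the PPT inequalities of \cref{prop:DOC-PPT}; since PPT is equivalent to TCP on a $2$-dimensional support, each $\Phi_{ij}$ is EB, and the sum $\Phi_\DOC=\sum_{i<j}\Phi_{ij}$ is EB. This buys an explicit separable decomposition and avoids both Gurvits--Barnum and any concentration-of-measure input, reusing the moment estimates already established in the appendix. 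Your approach instead explains the $d^2$ threshold ``geometrically'' via the separability ball radius $\sim 1/d^2$, which is conceptually appealing, but it requires the extra concentration machinery.

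One concrete error in your argument: the Lipschitz constant of $V\mapsto J(\Phi_V)$ in Frobenius norm on the Stiefel manifold is $O(\sqrt{d})$, not $O(1)$. Writing $J(\Phi_V)=d\,\Tr_s\ketbra{\phi_V}{\phi_V}$ with $\phi_V=(V\otimes I_d)\ket{\Omega_d}$ a unit vector, one gets $\|J(\Phi_V)-J(\Phi_{V'})\|_F\le \|J(\Phi_V)-J(\Phi_{V'})\|_1\le 2d\,\|\phi_V-\phi_{V'}\|_2 = 2\sqrt{d}\,\|V-V'\|_F$; the reason $\|V\|_\infty=1$ does not imply an $O(1)$ Frobenius Lipschitz constant is precisely that $\|V\|_F=\sqrt{d}$ for an isometry. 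The DOC twirling being a contraction does not improve this. Fortunately, inserting $L=O(\sqrt d)$ into the log-Sobolev tail on $\mathcal{U}_{ds}$ gives $\exp(-c\,ds\,u^2/L^2)=\exp(-c's\,u^2/d)$, and with $u\sim 1/d$ and $s\gtrsim d^{2+t}$ this is $\exp(-c'd^{t})$, still summable, so the conclusion survives --- but your stated tail $\exp(-c'd^{1+t})$ and the $O(1)$ Lipschitz claim are incorrect as written. Your ``alternative, constructive route'' at the end is in the spirit of the paper's proof, but ``absorbing the small random corrections into rank-one TCP summands'' would need to be made precise; the $\binom{d}{2}$ split into coordinate-pair supports is the clean way to do exactly that.
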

\begin{proof}
Let us denote by
\begin{equation} \label{eq-22matrix}
    \begin{pmatrix}  a & b \\ c & d \end{pmatrix}_{i,j\in [d]}:=a|i\ra \la i|+b|i\ra \la j|+c|j\ra\la i| + |j\ra\la j|\in \M{d}
\end{equation}
for simplicity, and let us decompose $(A,B,C)=\sum_{1\leq i<j\leq d} (A^{(ij)},B^{(ij)},C^{(ij)})$ where
\small
    $$A^{(ij)}=\begin{pmatrix}\frac{1}{d-1}A_{ii} & A_{ij} \\ A_{ji} & \frac{1}{d-1}A_{jj} \end{pmatrix}_{i,j\in [d]}, 
    \quad B^{(ij)}=\begin{pmatrix}\frac{1}{d-1}A_{ii} & B_{ij} \\ B_{ji} & \frac{1}{d-1}A_{jj} \end{pmatrix}_{i,j\in [d]}, \quad C^{(ij)}=\begin{pmatrix}\frac{1}{d-1}A_{ii} & C_{ij} \\ C_{ji} & \frac{1}{d-1}A_{jj} \end{pmatrix}_{i,j\in [d]}.$$
\normalsize
Let $\Phi_{ij}:=\Phi^{(3)}_{A_{ij},B_{ij},C_{ij}}$ be the DOC map associated to the matrix triple $(A^{(ij)},B^{(ij)},C^{(ij)})$. Note that in the regime $s\gtrsim d^{2+t}$ and $d\to\infty$, \cref{prop:DOC-entry-limit} implies
    $$\min_{i,j}dA_{ij}\to 1, \quad  \max_{i<j}d^2|B_{ij}|\to 0, \quad \max_{i<j}d^2|C_{ij}|\to 0\;\text{ almost surely.}$$
Therefore, we obtain almost surely,
    $$\frac{1}{(d-1)^2}A_{ii}A_{jj}\geq |B_{ij}|^2, \quad \frac{1}{(d-1)^2}A_{ii}A_{jj}\geq |C_{ij}|^2, \quad A_{ij}A_{ji}\geq |B_{ij}|^2,\;\; \forall\,i<j.$$
In particular, this implies that the map $\Phi_{ij}$ is PPT for all $i<j$ according to \cref{prop:DOC-PPT}. However, since each triple $(A^{(ij)},B^{(ij)},C^{(ij)})$ is supported on a $2$-dimensional subspace, this implies that $\Phi_{ij}$ is EB for all $i<j$, and hence $\Phi_{\DOC}=\sum_{i<j}\Phi_{ij}$ is EB.
\end{proof}
Note that the proof does not work anymore for the regime $s\sim cd^2$. For example, when $c=1$, we can show that $d^2B_{ij}$ converges in moment to the exponential distribution $\mathsf{Exp}(1)$ using \cref{lem-ranDOCentry}, and thus
    $$\mathbb{P}\left(\frac{1}{(d-1)^2}A_{ii} A_{jj}\geq |B_{ij}|^2\right)\to \mathbb{P}(Z_{\mathsf{Exp}(1)}\leq 1)<1$$
as $d\to \infty$. 

Since channels having diagonal (conjugate) unitary covariance can be obtained from DOC channels by twirling, we have the following simple consequence of \cref{thm-DOCEB}.
\begin{corollary}
    Under the same hypothesis as above, the same conclusion holds for random DUC and CDUC channels. 
\end{corollary}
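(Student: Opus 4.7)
The plan is to reduce this corollary to \cref{thm-DOCEB} by exploiting the fact that DUC and CDUC channels are themselves DOC channels. First I would verify the inclusion: since $\mathcal{DO}_d \subset \mathcal{DU}_d$ and since every $O\in\mathcal{DO}_d$ is real (so $\overline{O}=O$), restricting the $(\mathcal{DU}_d,\mathcal{DU}_d)$ or $(\mathcal{DU}_d,\overline{\mathcal{DU}_d})$ covariance conditions to elements of $\mathcal{DO}_d$ yields precisely the $(\mathcal{DO}_d,\mathcal{DO}_d)$ covariance condition. Dually, at the level of Choi matrices via \cref{prop-twirling}, this is the inclusion
$${\rm Inv}(\pi_{\mathcal{DU}_d}\otimes \overline{\pi_{\mathcal{DU}_d}}) \subset {\rm Inv}(\pi_{\mathcal{DO}_d}\otimes \pi_{\mathcal{DO}_d}), \quad {\rm Inv}(\pi_{\mathcal{DU}_d}\otimes \pi_{\mathcal{DU}_d}) \subset {\rm Inv}(\pi_{\mathcal{DO}_d}\otimes \pi_{\mathcal{DO}_d}).$$

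Next I would apply \cref{prop-twirlcomposition} to the pair of representations above to obtain the factorization of twirlings
$$\mathcal{T}_{\DUC} = \mathcal{T}_{\DUC}\circ\mathcal{T}_{\DOC}, \qquad \mathcal{T}_{\CDUC} = \mathcal{T}_{\CDUC}\circ\mathcal{T}_{\DOC}.$$
Consequently, if $\Phi_V$ is the underlying random Stinespring channel feeding all three random models, then
$$\Phi_{\DUC} = \mathcal{T}_{\DUC}(\Phi_{\DOC}), \qquad \Phi_{\CDUC} = \mathcal{T}_{\CDUC}(\Phi_{\DOC}),$$
i.e.~$\Phi_{\DUC}$ and $\Phi_{\CDUC}$ are obtained from the \emph{same} sample of $\Phi_{\DOC}$ by a further (deterministic) twirling.

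Finally, I would invoke the EB-preservation part of \cref{prop-twirling}: twirling a channel preserves the entanglement breaking property. Since \cref{thm-DOCEB} gives that $\Phi_{\DOC}$ is almost surely EB whenever $s \gtrsim d^{2+t}$, the same event simultaneously forces $\Phi_{\DUC} = \mathcal{T}_{\DUC}(\Phi_{\DOC})$ and $\Phi_{\CDUC} = \mathcal{T}_{\CDUC}(\Phi_{\DOC})$ to be EB, almost surely as $d\to\infty$. There is no real obstacle here: once the subgroup inclusion $\mathcal{DO}_d \subset \mathcal{DU}_d$ is translated into a compatibility of twirlings, the corollary is a one-line consequence of \cref{thm-DOCEB} and the monotonicity of EB under twirling. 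The only point requiring a small amount of care is confirming that the CDUC twirling also dominates the DOC twirling despite involving the contragredient representation $\overline{\pi_{\mathcal{DU}_d}}$ on the target side, which is settled by the reality of elements of $\mathcal{DO}_d$.
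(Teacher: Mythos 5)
Your proof is correct and follows exactly the route the paper intends (the paper merely remarks that DUC and CDUC channels arise from DOC channels by further twirling): you identify the factorizations $\mathcal T_\DUC=\mathcal T_\DUC\circ\mathcal T_\DOC$ and $\mathcal T_\CDUC=\mathcal T_\CDUC\circ\mathcal T_\DOC$ via \cref{prop-twirlcomposition} and then invoke the EB-preservation part of \cref{prop-twirling}. The observation that $\overline{\pi_{\mathcal{DO}_d}}=\pi_{\mathcal{DO}_d}$ is exactly the right detail to make the CDUC case work.
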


\section{\texorpdfstring{The PPT$^2$ conjecture holds for random DOC channels}{The PPT squared conjecture holds for random DOC channels}}\label{sec:PPT2}

Originally motivated by the theory of quantum repeaters, the \emph{PPT$^2$ conjecture} is now a central open question in quantum information theory, with recent ramifications into operator theory and operator algebras. We recall below its statement.

\begin{conjecture}[\cite{PPTsq, Christandl2018}]
The composition of two arbitrary \emph{PPT} quantum channels is entanglement breaking. 
\end{conjecture}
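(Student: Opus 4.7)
The plan is to attack the conjecture through the Choi/link-product presentation of channel composition. If $\Phi_1,\Phi_2:\M{d}\to\M{d}$ have Choi matrices $J_1,J_2\in \M{d}\otimes \M{d}$ and both are PPT, then each $J_i$ is positive semidefinite and so is $J_i^\Gamma$. The first step is to rephrase both the hypothesis and the conclusion at the level of Choi matrices: the PPT assumption gives two simultaneous PSD decompositions of each $J_i$ --- equivalently, Kraus decompositions of $\Phi_i$ \emph{and} of $\Phi_i\circ\top$ --- while the EB conclusion asks for a separable decomposition of $J(\Phi_1\circ\Phi_2)$. The hope is that combining the CP and co-CP data on both channels via the link product carries enough rigidity to force such a separable decomposition to exist.

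Concretely, I would first try to reduce to extreme cases. Since PPT channels form a compact convex set and EB is preserved by convex combinations, it suffices to treat the case where $\Phi_1$ and $\Phi_2$ are both extreme PPT channels. For such extremal inputs, one would look for a normal form for the Kraus operators that respects both $J_i\ge 0$ and $J_i^\Gamma\ge 0$ simultaneously, and then try to read off a separable decomposition of $J(\Phi_1\circ\Phi_2)$ by pairing the Kraus family of $\Phi_1$ with the co-Kraus family of $\Phi_2$ (or vice versa), so that the middle system can be collapsed against a positive operator. A second, more modest prong --- in the spirit of \cref{thmB} --- is to impose increasing amounts of symmetry on the channels: first handle covariance groups whose irreducible decomposition reduces the PPT and EB conditions to inequalities on a small matrix algebra (as happens in the (C)DUC and DOC settings), and then gradually enlarge the admissible class of channels by relaxing symmetry.

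The main obstacle is precisely the step from the two PSD conditions to separability: the PPT cone is not finitely generated in any workable sense, its extreme rays are not classified beyond very small dimensions, and PPT-entangled states already exist for $d\ge 3$, ruling out any naive linearization that would identify ``$J_i$ PSD and $J_i^\Gamma$ PSD'' with separability. A genuine proof would therefore seem to require either a new multiplicative entanglement monotone that is strictly submultiplicative under composition of PPT channels, or a structural classification of extreme points of the PPT channel cone, neither of which is currently available. Accordingly, in the remainder of this section I will not attempt the general conjecture but restrict to the diagonal orthogonal case, where the explicit parameter matrices $(A,B,C)$ of \cref{sec:DOC} turn both the PPT criterion (\cref{prop:DOC-PPT}) and the EB criterion (\cref{prop:DOC-EB}) into tractable matrix conditions, and combine this deterministic analysis with the random-matrix estimates of \cref{sec:DOC} to establish \cref{thmB}.
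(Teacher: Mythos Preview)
The statement under consideration is a \emph{conjecture}, not a theorem; the paper does not prove it and explicitly presents it as open. Accordingly there is no ``paper's own proof'' to compare against, and any purported complete proof would be a major result well beyond the scope of this paper.

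Your proposal handles this correctly: you recognize that the general statement resists the obvious attacks (convexity reduction to extreme PPT channels, pairing Kraus and co-Kraus data), you articulate clearly why these fail (no classification of extreme PPT channels, PPT $\neq$ separable for $d\ge 3$), and you then retreat to the DOC setting where the parameter matrices $(A,B,C)$ make both the PPT and EB criteria explicit. This is exactly the paper's strategy. The paper does not discuss the extreme-point or monotone approaches you sketch --- it simply cites prior partial results and moves directly to \cref{prop:TCP-composition}, \cref{cor-PPTDOC}, and \cref{thm:PPTsquared-DOC} --- but your additional commentary on why a general proof is out of reach is accurate and not misleading.

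One small caution: your phrasing ``EB is preserved by convex combinations'' is correct, but the reduction to extreme points requires more than that --- you would need the EB property of the \emph{composition} to be affine (or at least concave) in each argument separately, which it is, so the reduction is valid. Still, as you note, the extreme points of the PPT cone are not understood, so this buys nothing in practice.
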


The conjecture holds trivially in dimension $d=2$, and it has been shown to hold also in dimension $d=3$ \cite{Christandl2018, Chen2019}. A plethora of papers establish the conjecture in several restricted scenarios, or with weaker conclusions (such as the need to compose a PPT map with itself several times) \cite{Lami2015entanglebreak,Kennedy2017,Rahaman2018,hanson2020eventually,girard2020convex,jin2020investigation}. Importantly for us, Collins, Yin, and Zhong showed \cite{collins2018ppt} that two \emph{independent} random quantum channels that are PPT satisfy the conjecture. 

In the case of quantum channels with diagonal unitary symmetry, is has been shown in \cite{singh2022ppt} that \emph{any} two (C)DUC channels satisfy the PPT$^2$ conjecture. The question whether the same holds for DOC channels has been left open \cite[Conjecture 5.1]{singh2022ppt}. In this section, we contribute to this conjecture in two perspectives: \textit{deterministic} and \textit{random}. First of all, we show that one of the conditions for two PPT (C)DUC channels in \cite{singh2022ppt} can be relaxed to DOC. Furthermore, we answer in the affirmative this question for \emph{random} DOC channels, under much weaker conditions. We refer to \cref{cor-PPTDOC,thm:PPTsquared-DOC} for the details. 

The results rely on the following condition for the composition of two matrix triples to be TCP. Recall from \cite[Lemmas 9.3 and 9.7]{singh2021diagonal} that the composition of two DOC channels is DOC: 
$$\Phi^{(3)}_{A,B,C} \circ \Phi^{(3)}_{D,E,F} = \Phi^{(3)}_{\mathcal A, \mathcal B, \mathcal C}$$
with $(\mathcal{A},\mathcal{B},\mathcal{C}):=(A,B,C)\circ (D,E,F)$, where
\begin{align*}
    \mathcal{A}&=AD,\\
    \mathcal{B}&=\diag(AD)+\ring{B}\odot \ring{E}+\ring{C}\odot \ring{F}^T,\\
    \mathcal{C}&=\diag(AD)+\ring{B}\odot \ring{F}+\ring{C}\odot \ring{E}^T.
\end{align*}

\begin{proposition} \label{prop:TCP-composition}
If $(A,B,C)$ and $(D,E,F)$ are triples of matrices such that the corresponding DOC linear maps $\Phi_{A,B,C}^{(3)}$, $\Phi_{D,E,F}^{(3)}$ are completely positive, and, moreover, for all $i\neq j \in [d]$,
\begin{align}
    \left(\frac{A_{ii}D_{ii}}{d-1}+A_{ij}D_{ji}\right)\left(\frac{A_{jj}D_{jj}}{d-1}+A_{ji}D_{ij}\right)  &\geq \max(|B_{ij}E_{ij}+C_{ij}\overline{F_{ij}}|^2,|B_{ij}F_{ij}+C_{ij}\overline{E_{ij}}|^2), \label{eq-PPTDOC1}\\
    (AD)_{ij}(AD)_{ji}&\geq \max(|B_{ij}E_{ij}+C_{ij}\overline{F_{ij}}|^2,|B_{ij}F_{ij}+C_{ij}\overline{E_{ij}}|^2). \label{eq-PPTDOC2}
\end{align}
Then, their composition $(A,B,C)\circ (D,E,F)$ is TCP.
\end{proposition}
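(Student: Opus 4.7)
My plan is to prove that $(\mathcal{A},\mathcal{B},\mathcal{C}):=(A,B,C)\circ(D,E,F)$ is TCP by decomposing it as a sum of ``2-supported'' triples, each of which I show is TCP individually. Two observations make this approach work. First, TCP is closed under sums: given TCP decompositions via matrices $(V_k,W_k)$ of triples $(A_k,B_k,C_k)$, the concatenations $V=[V_1|V_2|\cdots]$, $W=[W_1|W_2|\cdots]$ furnish a TCP decomposition of the sum. Second, the Choi matrix of any 2-supported DOC triple lives in $\operatorname{span}\{\ket{ii},\ket{jj},\ket{ij},\ket{ji}\}\cong \C{2}\otimes \C{2}$, so the Peres--Horodecki theorem makes PPT equivalent to separability, hence (via \cref{prop:DOC-EB}) to TCP. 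It therefore suffices to express $(\mathcal{A},\mathcal{B},\mathcal{C})$ as a sum of 2-supported triples that are simultaneously CP and PPT.

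The delicate step is choosing the right decomposition of the diagonal of $\mathcal{A}=AD$. Starting from $(AD)_{kk}=A_{kk}D_{kk}+\sum_{l\neq k}A_{kl}D_{lk}$, the idea is to distribute $A_{kk}D_{kk}/(d-1)$ uniformly among the $d-1$ pairs $\{k,l\}$ containing $k$, while allocating the cross-term $A_{kl}D_{lk}$ to the pair $\{k,l\}$ only. Concretely, for each $i<j$, I introduce a 2-supported triple with common diagonal
\begin{equation*}
\mathcal{A}^{(ij)}_{ii}=\mathcal{B}^{(ij)}_{ii}=\mathcal{C}^{(ij)}_{ii}:=\frac{A_{ii}D_{ii}}{d-1}+A_{ij}D_{ji},\qquad \mathcal{A}^{(ij)}_{jj}=\mathcal{B}^{(ij)}_{jj}=\mathcal{C}^{(ij)}_{jj}:=\frac{A_{jj}D_{jj}}{d-1}+A_{ji}D_{ij},
\end{equation*}
and off-diagonal entries
\begin{equation*}
\mathcal{A}^{(ij)}_{ij}=(AD)_{ij},\quad \mathcal{B}^{(ij)}_{ij}=B_{ij}E_{ij}+C_{ij}\overline{F_{ij}},\quad \mathcal{C}^{(ij)}_{ij}=B_{ij}F_{ij}+C_{ij}\overline{E_{ij}}.
\end{equation*}
A direct check confirms $\sum_{i<j}(\mathcal{A}^{(ij)},\mathcal{B}^{(ij)},\mathcal{C}^{(ij)})=(\mathcal{A},\mathcal{B},\mathcal{C})$: the off-diagonal entries are immediate, and each diagonal $(k,k)$ totals $(d-1)\cdot \tfrac{A_{kk}D_{kk}}{d-1}+\sum_{l\neq k}A_{kl}D_{lk}=(AD)_{kk}$.

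It remains to verify CP and PPT for each 2-supported triple. By \cref{prop:DOC-PPT} together with the CP characterization recalled at the start of \cref{sec:DOC}, both conditions reduce to the four $2\times 2$ determinantal inequalities
\begin{equation*}
|\mathcal{B}^{(ij)}_{ij}|^2,\;|\mathcal{C}^{(ij)}_{ij}|^2 \leq \mathcal{A}^{(ij)}_{ii}\mathcal{A}^{(ij)}_{jj}\qquad \text{and}\qquad |\mathcal{B}^{(ij)}_{ij}|^2,\;|\mathcal{C}^{(ij)}_{ij}|^2 \leq \mathcal{A}^{(ij)}_{ij}\mathcal{A}^{(ij)}_{ji},
\end{equation*}
which are exactly the content of hypotheses \eqref{eq-PPTDOC1} and \eqref{eq-PPTDOC2} once the maxima on the right-hand sides are expanded: \eqref{eq-PPTDOC1} yields the CP bound on $\mathcal{B}^{(ij)}$ (i.e., $\mathcal{B}^{(ij)}\geq 0$) and the PPT bound on $\mathcal{C}^{(ij)}$ (i.e., $\mathcal{C}^{(ij)}\geq 0$), whereas \eqref{eq-PPTDOC2} yields the CP bound on $\mathcal{C}^{(ij)}$ and the PPT bound $|\mathcal{B}^{(ij)}_{ij}|^2\leq \mathcal{A}^{(ij)}_{ij}\mathcal{A}^{(ij)}_{ji}$. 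Positivity of the diagonal entries is automatic from the entrywise nonnegativity of $A$ and $D$ guaranteed by the CP hypothesis on $(A,B,C)$ and $(D,E,F)$.

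The one non-routine aspect I expect is the \emph{design} of the decomposition: the naive split $\mathcal{A}^{(ij)}_{kk}=(AD)_{kk}/(d-1)$ used in the proof of \cref{thm-DOCEB} would produce strictly different, and in general incomparable, sufficient conditions to \eqref{eq-PPTDOC1}--\eqref{eq-PPTDOC2}. Pairing the diagonal-diagonal contribution $A_{kk}D_{kk}/(d-1)$ with the specific cross-term $A_{kl}D_{lk}$ is precisely what turns the stated hypotheses into a natural and tight sufficient criterion for this strategy.
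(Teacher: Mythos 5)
Your proof is correct and takes essentially the same route as the paper: you use the identical decomposition $(\mathcal A,\mathcal B,\mathcal C)=\sum_{i<j}(\mathcal A^{(ij)},\mathcal B^{(ij)},\mathcal C^{(ij)})$, with the same allocation of $A_{kk}D_{kk}/(d-1)$ plus the cross-term $A_{kl}D_{lk}$ to the diagonal of each $2$-supported block, and then invoke CP $+$ PPT $\Rightarrow$ TCP on $2$-dimensional supports. Your write-up spells out the reconciliation of the diagonal sum and which of \eqref{eq-PPTDOC1}--\eqref{eq-PPTDOC2} delivers which of the four determinantal bounds, which the paper leaves implicit; otherwise the arguments coincide.
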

\begin{proof}
The proof is a slight modification of the proof of \cite[Theorem 4.5]{singh2022ppt} in straightforward way. 
Using the notation in \cref{eq-22matrix}, we can decompose $(\mathcal{A},\mathcal{B},\mathcal{C})=\sum_{1\leq i<j\leq d} (\mathcal{A}^{(ij)},\mathcal{B}^{(ij)},\mathcal{C}^{(ij)})$ where
\begin{align*}
\mathcal{A}^{(ij)}&=\begin{pmatrix}\frac{1}{d-1}A_{ii}D_{ii}+A_{ij}D_{ji} & (AD)_{ij} \\ (AD)_{ji} & \frac{1}{d-1}A_{jj}D_{jj}+A_{ji}D_{ij}\end{pmatrix}_{i,j\in [d]},\\
\mathcal{B}^{(ij)}&=\begin{pmatrix}\frac{1}{d-1}A_{ii}D_{ii}+A_{ij}D_{ji} & B_{ij}E_{ij}+C_{ij}\overline{F_{ij}} \\ B_{ji}E_{ji}+C_{ji}\overline{F_{ji}} & \frac{1}{d-1}A_{jj}D_{jj}+A_{ji}D_{ij}\end{pmatrix}_{i,j\in [d]},\\
\mathcal{C}^{(ij)}&=\begin{pmatrix}\frac{1}{d-1}A_{ii}D_{ii}+A_{ij}D_{ji} & B_{ij}F_{ij}+C_{ij}\overline{E_{ij}} \\ B_{ji}F_{ji}+C_{ji}\overline{E_{ji}} & \frac{1}{d-1}A_{jj}D_{jj}+A_{ji}D_{ij}\end{pmatrix}_{i,j\in [d]},\\
\end{align*}
since $E^T=\overline{E}$ and $F^T=\overline{F}$. The two conditions \eqref{eq-PPTDOC1} and \eqref{eq-PPTDOC2}, combined with \cref{prop:DOC-PPT}, guarantee that each triple $(\mathcal{A}^{(ij)},\mathcal{B}^{(ij)},\mathcal{C}^{(ij)})$ is a ``PPT" triple in $2$-dimensional space, and hence TCP. This shows that $(\mathcal{A},\mathcal{B},\mathcal{C})$ is TCP.
\end{proof}

\begin{remark}
    We actually show something stronger: the composition has \emph{factor width} 2, see \cite[Definition 3.2]{singh2022ppt}:
    $$(A,B,C)\circ (D,E,F) \in \TCP_d^{(2)}.$$
\end{remark}

Note that the conditions \cref{eq-PPTDOC1,eq-PPTDOC2} are already weak enough to improve the main result of \cite[Theorem 4.5]{singh2022ppt}. Indeed, 
suppose that $\Phi^{(3)}_{A,B,C}$ and $\Phi^{(3)}_{D,E,F}$ are PPT, and that one of $B$, $C$, $E$, or $F$ is \textit{diagonal} (corresponding to either $\Phi^{(3)}_{A,B,C}$ or $\Phi^{(3)}_{D,E,F}$ being DUC / CDUC). Then by \cref{prop:DOC-PPT} we have the entrywise inequalities
    $$A_{ij}A_{ji}\geq \max(|B_{ij}|^2,|C_{ij}|^2),\quad D_{ij}D_{ji}\geq \max(|E_{ij}|^2,|F_{ij}|^2),\quad i\neq j\in [d],$$
which yields \cref{eq-PPTDOC1} since any of the diagonal conditions imply that
    $$\max(|B_{ij}E_{ij}+C_{ij}\overline{F_{ij}}|^2,|B_{ij}F_{ij}+C_{ij}\overline{E_{ij}}|^2)\leq \max(|B_{ij}E_{ij}|^2,|C_{ij}\overline{F_{ij}}|^2,|B_{ij}F_{ij}|^2,|C_{ij}\overline{E_{ij}}|^2).$$
Furthermore, \cref{eq-PPTDOC2} holds since $\Phi^{(3)}_{\mathcal A, \mathcal B, \mathcal C}=\Phi^{(3)}_{A,B,C} \circ \Phi^{(3)}_{D,E,F}$ is PPT. Consequently, $(\mathcal A, \mathcal B, \mathcal C)\in \TCP_d^{(2)}$, and hence $\Phi^{(3)}_{\mathcal A, \mathcal B, \mathcal C}$ is entanglement breaking. We summarize the argument in the following corollary.

\begin{corollary} \label{cor-PPTDOC}
If $\Phi_{A, B, C}^{(3)}$ and $\Phi_{D, E}^{(i)}$ ($i\in \{1,2\}$) are both PPT maps, then both compositions  $\Phi_{A, B, C}^{(3)}\circ \Phi_{D, E}^{(i)}$ and $\Phi_{D, E}^{(i)}\circ \Phi_{A, B, C}^{(3)}$ are entanglement breaking.
\end{corollary}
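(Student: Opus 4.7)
The plan is to invoke Proposition \ref{prop:TCP-composition} on the composed triple and then Proposition \ref{prop:DOC-EB} to deduce entanglement breaking. First, I would rewrite the (C)DUC channel in full DOC form: $\Phi^{(1)}_{D,E}$ corresponds to $\Phi^{(3)}_{D,\diag D,E}$ (middle matrix diagonal) and $\Phi^{(2)}_{D,E}$ corresponds to $\Phi^{(3)}_{D,E,\diag D}$ (last matrix diagonal). In either direction of the composition, one of the four matrices $B,C,E,F$ in the two DOC triples $(A,B,C)$ and $(D,E,F)$ is then forced to vanish off the diagonal.

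The key preparatory observation is that being a PPT \emph{channel}---i.e., simultaneously CP and PPT---yields the symmetric entrywise bound on \emph{both} off-diagonal matrices of the triple. The CP characterization of DOC maps (stated just above Proposition \ref{prop:DOC-PPT}) gives $|C_{ij}|^2\leq A_{ij}A_{ji}$, and Proposition \ref{prop:DOC-PPT} itself gives the PPT condition $|B_{ij}|^2\leq A_{ij}A_{ji}$. Hence, for all $i\neq j$, $\max(|B_{ij}|^2,|C_{ij}|^2)\leq A_{ij}A_{ji}$ and analogously $\max(|E_{ij}|^2,|F_{ij}|^2)\leq D_{ij}D_{ji}$.

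For \cref{eq-PPTDOC1}, the diagonal condition forces each of $\mathcal B_{ij}=B_{ij}E_{ij}+C_{ij}\overline{F_{ij}}$ and $\mathcal C_{ij}=B_{ij}F_{ij}+C_{ij}\overline{E_{ij}}$ (for $i\neq j$) to collapse to a single monomial. Combining with the entrywise bounds from the previous paragraph,
\[
\max(|\mathcal B_{ij}|^2,|\mathcal C_{ij}|^2)\leq \max(|B_{ij}|^2,|C_{ij}|^2)\cdot\max(|E_{ij}|^2,|F_{ij}|^2)\leq A_{ij}A_{ji}\cdot D_{ij}D_{ji},
\]
which is precisely the $(A_{ij}D_{ji})(A_{ji}D_{ij})$ summand appearing in the expansion of the LHS of \cref{eq-PPTDOC1}. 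For \cref{eq-PPTDOC2}, I would use that the composition of two PPT channels is again a PPT channel (indeed $(\Phi\circ\Psi)\circ\top=\Phi\circ(\Psi\circ\top)$ is a composition of CP maps when both $\Phi$ and $\Psi$ are PPT); applying the symmetric entrywise bound of the preparatory paragraph to the composition's own triple $(\mathcal A,\mathcal B,\mathcal C)=(AD,\mathcal B,\mathcal C)$ yields $\max(|\mathcal B_{ij}|^2,|\mathcal C_{ij}|^2)\leq (AD)_{ij}(AD)_{ji}$ directly, without even needing the diagonal assumption.

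Both hypotheses of Proposition \ref{prop:TCP-composition} are then verified, so $(\mathcal A,\mathcal B,\mathcal C)\in\TCP_d^{(2)}$, and Proposition \ref{prop:DOC-EB} identifies the composition as entanglement breaking. The single conceptual input is the observation that the channel hypothesis (CP in addition to PPT) symmetrizes the entrywise bound across $B$ and $C$; the only routine bookkeeping is checking the monomial collapse in each of the four diagonal scenarios, which is immediate in all cases.
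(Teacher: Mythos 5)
Your proof is correct and follows the same route as the paper's: verify \cref{eq-PPTDOC1} via the monomial collapse caused by the diagonal matrix, verify \cref{eq-PPTDOC2} via the fact that a composition of PPT maps is PPT, and then invoke \cref{prop:TCP-composition}. Your explicit remark that the symmetric bound $\max(|B_{ij}|^2,|C_{ij}|^2)\leq A_{ij}A_{ji}$ requires \emph{both} the CP condition (for $C$) and the PPT condition (for $B$) is a helpful clarification that the paper leaves implicit by attributing the full bound to \cref{prop:DOC-PPT} alone.
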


The conditions in \cref{prop:DOC-PPT} also allow us to obtain another main result for random DOC channels.

\begin{theorem} \label{thm:PPTsquared-DOC}
Consider two random DOC channels $\Phi_i \sim \mu^\DOC_{d,s_i}$, $i=1,2$, in the asymptotic regime where $d \to \infty$ and $s_i\gtrsim d^{t_i}$ for some constants $t_i>0$. Then almost surely as $d\to \infty$, the composition $\Phi_1\circ \Phi_2$ is entanglement breaking. 

In particular, in the asymptotic regime above, any two random DOC channels satisfy the PPT$^2$ conjecture. 
\end{theorem}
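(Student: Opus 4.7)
The plan is to apply \cref{prop:TCP-composition} to the matrix triples $(A_i,B_i,C_i)$ of $\Phi_i$ for $i=1,2$, and then invoke \cref{prop:DOC-EB} (TCP is equivalent to EB for DOC channels) to conclude that $\Phi_1\circ \Phi_2$ is entanglement breaking. Since each $\Phi_i$ is a quantum channel by construction, the complete positivity hypothesis of \cref{prop:TCP-composition} is automatic, so it suffices to verify the two inequalities \cref{eq-PPTDOC1,eq-PPTDOC2} almost surely, uniformly in $i\ne j\in [d]$ and for all large $d$. Set $t:=\min(t_1,t_2)>0$.

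The key input is \cref{prop:DOC-entry-limit} applied to each channel: for any fixed $0<t'<t$, almost surely as $d\to\infty$,
\[
    \max_{j,k\in[d]} \left|dA_i[j,k]-1\right| \to 0, \qquad \max_{j\ne k} d^{1+t'/2}|B_i[j,k]|\to 0, \qquad \max_{j\ne k} d^{1+t'/2}|C_i[j,k]|\to 0.
\]
In other words, $A_i$ approaches the uniform column-stochastic matrix $\tfrac{1}{d}J_d$ in every entry, while the off-diagonal entries of $B_i$ and $C_i$ decay uniformly at rate $o(d^{-1-t'/2})$.

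For both \cref{eq-PPTDOC1,eq-PPTDOC2}, by the triangle inequality and the above bounds, the right-hand sides are uniformly $o(d^{-4-2t'})$. The left-hand side of \cref{eq-PPTDOC1} factors into two quantities of the form $A_1[i,i]A_2[i,i]/(d-1) + A_1[i,j]A_2[j,i]$; since $A_1[i,j]A_2[j,i]\sim d^{-2}$ dominates $A_1[i,i]A_2[i,i]/(d-1)\sim d^{-3}$, each factor is of order $d^{-2}$ and the product is of order $d^{-4}$. For \cref{eq-PPTDOC2}, writing $dA_i = J_d + \eta_i$ with $\|\eta_i\|_{\max}\to 0$ almost surely and expanding $d^2(A_1A_2)[i,j] = \sum_{k=1}^d (1+\eta_1[i,k])(1+\eta_2[k,j])$ yields $d(A_1A_2)[i,j]\to 1$ uniformly, hence $(A_1A_2)[i,j](A_1A_2)[j,i]\sim d^{-2}$. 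Since both $d^{-4}$ and $d^{-2}$ dominate $o(d^{-4-2t'})$, both inequalities hold almost surely for all sufficiently large $d$.

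The only real technical point is to propagate uniformity over $(i,j)\in [d]^2$ through each estimate, but this is already packaged into the ``$\max$'' formulations of \cref{prop:DOC-entry-limit}, so the application of \cref{prop:TCP-composition} is immediate. I do not foresee a serious obstacle, as the polynomial gap between the left-hand sides (of order $d^{-4}$ or $d^{-2}$) and the right-hand sides (of size $o(d^{-4-2t'})$) is comfortably wide; the argument is robust even if one were to slightly weaken the hypothesis $s_i\gtrsim d^{t_i}$.
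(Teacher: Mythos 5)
Your proof is correct and follows essentially the same strategy as the paper's: reduce to \cref{prop:TCP-composition} and verify \cref{eq-PPTDOC1,eq-PPTDOC2} by the uniform entrywise asymptotics in \cref{prop:DOC-entry-limit}. The only minor deviation is in \cref{eq-PPTDOC2}, where the paper simply keeps two nonnegative terms of the matrix product, $A_{ii}D_{ij}A_{jj}D_{ji}+A_{ij}D_{jj}A_{ji}D_{ii}$, to get a lower bound of order $d^{-4}$, while you expand the full product to obtain the sharper estimate $d(A_1A_2)_{ij}\to 1$ uniformly; both comfortably dominate the $o(d^{-4-2t'})$ right-hand side.
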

\begin{proof}
Let $\Phi_1 = \Phi_{A, B, C}^{(3)}$ and $\Phi_2 = \Phi_{D, E, F}^{(3)}$. The result follows from \cref{prop:TCP-composition} and \cref{prop:DOC-entry-limit}: almost surely, we have
    $$\lim_{d\to \infty} d^4\min_{i\neq j} \left(\frac{A_{ii}D_{ii}}{d-1}+A_{ij}D_{ji}\right)\left(\frac{A_{jj}D_{jj}}{d-1}+A_{ji}D_{ij}\right)= 1,$$
    $$\liminf_{d\to \infty}d^4\min_{i\neq j} (AD)_{ij} (AD)_{ji}\geq \liminf_{d\to\infty} \min_{i\neq j}d^4(A_{ii}D_{ij}A_{jj}D_{ji}+A_{ij}D_{jj}A_{ji}D_{ii})=2,$$
while $d^4\max_{i\neq j}|B_{ij}E_{ij}+C_{ij}\overline{F_{ij}}|^2\to 0$ and $d^4\max_{i\neq j}|B_{ij}F_{ij}+C_{ij}\overline{E_{ij}}|^2\to 0$. Therefore, the conditions \eqref{eq-PPTDOC1} and \eqref{eq-PPTDOC2} hold asymptotically.
\end{proof}

\begin{remark}
    We would like to emphasize that the hypothesis of \cref{thm:PPTsquared-DOC} are very weak:
    \begin{itemize}
        \item The two channels need not be PPT; actually, from \cref{thm:DOC-PPT}, if $s_i\sim c_id^{t_i}$ with $0<t_i<1$ and $c_i>0$, or $t_i=1$ and $c_i<4$, $\Phi_i$ is asymptotically not PPT.
        \item There is no assumption whatsoever about the correlations of the two random variables $\Phi_1$ and $\Phi_2$: the two DOC channels can be independent, equal, complex conjugate to one another, or have any possible correlation.
    \end{itemize}
    The fact that we can show that the PPT$^2$ conjecture holds for random DOC quantum channels under such weak conditions (compared to \cite{collins2018ppt} where the channels are assumed to be PPT and independent) is due to the presence of the $\mathcal{DO}_d$ covariance. 
\end{remark}

\bigskip

\noindent\textbf{Conflict of interest.} The authors declare no competing interests.

\bigskip

\noindent\textbf{Data availability statement.} Data sharing is not applicable to this article as no new data were created or analyzed in this study. 

\bigskip

\noindent\textbf{Acknowledgments.} Both authors were supported by the ANR project \href{https://esquisses.math.cnrs.fr/}{ESQuisses}, grant number ANR-20-CE47-0014-01, and by the PHC program \emph{Star} (Applications of random matrix theory and abstract harmonic analysis to quantum information theory). I.N.~also received support from the ANR project \href{https://www.math.univ-toulouse.fr/~gcebron/STARS.php}{STARS}, grant number ANR-20-CE40-0008.

\appendix

\section{Technical lemmas and proofs}

We gather in this appendix some technical background material, as well as the proofs of some of the results in the main text. 

\subsection{General results about the combinatorics of permutations and Weingarten functions}

Our main approach to understanding the entanglement properties of the different models of random covariant channels is the \emph{moment method} in random matrix theory. Specifically, we shall apply graphical Weingarten calculus \cite{collins2003moments,collins2006integration,collins2010randoma} to compute many kinds of integrals involving Haar random isometries for the precise analysis. In this section, we introduce basic notions on permutations, partitions, and the Weingarten formula. Moreover, we present several combinatorial lemmas which will be useful throughout the appendix for the actual computation later on.

The main combinatorial objects that appear in the resolution of integrals with respect to the Haar measure on the unitary group are \emph{permutations} and, in the limit of large dimensions, \emph{non-crossing partitions}. We refer the reader to the excellent collection of lecture notes \cite{Nica_Speicher_2006} for an introduction to the relation between these concepts, free probability theory, and random matrix theory. 

We shall denote by $S_p$ the symmetric group of order $p$, that is the permutations of the set 
$$[p]:=\{1,2,\ldots, p\}.$$
For a permutation $\alpha \in S_p$, we denote by $\# \alpha$ the \emph{number of cycles} in its cycle decomposition, including singletons. We have $\# \id = p$, whereas for the \emph{full cycle permutation} $\# \gamma = 1$, were
$$\gamma:=(p\;\;p-1\,\cdots\,3\; 2\; 1)\in S_p.$$
The \emph{length} $|\alpha|$ of a permutation $\alpha \in S_p$ is the minimal number of transpositions that multiply to $\alpha$. We have the important relation \cite[Proposition 23.11]{Nica_Speicher_2006}
$$\forall \alpha \in S_p, \qquad \# \alpha + |\alpha| = p.$$

Let us discuss now the \emph{Weingarten formula} \cite{collins2003moments,collins2006integration}, the main technical tool that will allow us to compute various average quantities related to the Haar measure on the set of isometries. Below, the integral is taken either to the Haar measure on the set of isometries $V : \C{k} \to \C{d}$ for $k \leq d$. The isometry $V$ can be thought of as the truncation (keeping the first $k$ columns) of a Haar-distributed random unitary matrix $U \in \mathcal U_d$.

\begin{theorem}
 Let $k \leq d$ be positive integers and
$\mathbf{i}=(i_1,\ldots ,i_p)$, $\mathbf{i'}=(i'_1,\ldots ,i'_p)$,
$\mathbf{j}=(j_1,\ldots ,j_p)$, $\mathbf{j'}=(j'_1,\ldots ,j'_p)$
be $p$-tuples of positive integers with $\mathbf{i}, \mathbf{i'} \in [d]^p$ and $\mathbf{j}, \mathbf{j'} \in [k]^p$. Then
\begin{equation}\label{eq:Wg}
\int V_{i_1j_1} \cdots V_{i_pj_p}
\overline{V}_{i'_1j'_1} \cdots
\overline{V}_{i'_pj'_p}\, \mathrm{d}V=
\sum_{\sigma, \tau\in S_{p}}\delta_{i_1i'_{\sigma (1)}}\ldots
\delta_{i_p i'_{\sigma (p)}}\delta_{j_1j'_{\tau (1)}}\ldots
\delta_{j_p j'_{\tau (p)}} \Wg_d(\tau\sigma^{-1}).
\end{equation}
\end{theorem}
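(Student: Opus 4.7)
The plan is to reduce the isometric Weingarten formula to the standard unitary Weingarten formula via the truncation realization of a Haar isometry. As noted earlier in the paper, a Haar-distributed random isometry $V : \mathbb{C}^k \to \mathbb{C}^d$ (with $k \leq d$) can be obtained as the submatrix consisting of the first $k$ columns of a Haar-distributed unitary $U \in \mathcal{U}_d$. Under this identification, $V_{ij} = U_{ij}$ for all $i \in [d]$ and $j \in [k]$, so the integrand in \eqref{eq:Wg} is literally a monomial in the entries of $U$ with column indices restricted to $[k] \subseteq [d]$.

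Granting the classical unitary Weingarten formula of Collins and \'Sniady (see \cite{collins2003moments,collins2006integration}), which reads
\begin{equation*}
\int_{\mathcal{U}_d} U_{i_1 j_1} \cdots U_{i_p j_p} \overline{U}_{i'_1 j'_1} \cdots \overline{U}_{i'_p j'_p} \, \mathrm{d}U = \sum_{\sigma, \tau \in S_p} \prod_{\ell=1}^p \delta_{i_\ell, i'_{\sigma(\ell)}} \delta_{j_\ell, j'_{\tau(\ell)}} \operatorname{Wg}_d(\tau \sigma^{-1})
\end{equation*}
for arbitrary $p$-tuples of indices in $[d]$, the isometric case follows by specializing the column indices to $j_\ell, j'_\ell \in [k]$. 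The Kronecker deltas automatically enforce that both $\mathbf{j}$ and $\mathbf{j'}$ lie in $[k]^p$, as required, and the Weingarten function $\operatorname{Wg}_d$ on the right-hand side still carries the ambient dimension $d$ (not the source dimension $k$), reflecting the fact that it is intrinsically attached to the commutant of the diagonal $\mathcal{U}_d$-action on $(\mathbb{C}^d)^{\otimes p}$.

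The reduction step itself is completely routine; the genuine content sits inside the unitary Weingarten formula, which I would simply quote. For the reader's convenience, however, I would sketch the standard derivation: one first observes that $\int_{\mathcal{U}_d} U^{\otimes p} \otimes \overline{U}^{\otimes p} \, \mathrm{d}U$ equals the orthogonal projection onto the space of $\mathcal{U}_d$-invariants of $(\mathbb{C}^d)^{\otimes p} \otimes (\mathbb{C}^d)^{\otimes p}$; Schur-Weyl duality identifies this invariant space with the linear span of the permutation operators $\{\rho_\sigma : \sigma \in S_p\}$; and expressing the projection in this (possibly linearly dependent) basis amounts to (pseudo-)inverting the Gram matrix $[G]_{\sigma,\tau} = d^{\#(\tau\sigma^{-1})}$, whose coefficients define $\operatorname{Wg}_d$. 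The only subtlety is to handle uniformly the regime $d < p$ (where $G$ is singular) via the Moore-Penrose pseudo-inverse, but this has no bearing on the statement as written.
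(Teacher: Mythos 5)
Your proposal is correct and matches the paper's own treatment: the paper states the isometric Weingarten formula as a quoted result, noting immediately before it that a Haar random isometry $V:\C{k}\to\C{d}$ is realized as a truncation of a Haar unitary $U\in\mathcal U_d$, which is exactly the reduction you carry out. The additional sketch of the Schur--Weyl/Gram-inversion derivation of the unitary case is standard background that the paper delegates to \cite{collins2003moments,collins2006integration} rather than reproving.
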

Above, $\Wg_d(\sigma)$ is a rational function of $d$ that depends only on the cycle type of the permutation $\sigma \in S_p$. The large $d$ behavior of this function will be important for us: 
$$\Wg_d(\sigma) = \big(1+O(d^{-2})\big) \Mob(\sigma) d^{-p-|\sigma|},$$
where $\Mob$ is the \emph{M\"obius function}
$$\Mob(\sigma) := \prod_{c \in \sigma} (-1)^{|c|-1} \operatorname{Cat}_{|c|-1}$$
with the Catalan numbers 
$$\operatorname{Cat}_n := \frac{1}{n+1} \binom{2n}{n}.$$
We shall mainly use the graphical formulation of this formula, introduced in \cite{collins2010randoma}; we refer the reader to the review article \cite{CN16} for a pedagogical presentation.

\begin{lemma}[{\cite[Proposition 1.3.7]{Stanley11}},\cite{collins2010randoma}] \label{lem-weingarten}
We have the following two identities for all $p\geq 1$:
\begin{enumerate}
    \item $\sum_{\alpha\in S_p}n^{\# \alpha}=n(n+1)\cdots (n+p-1)$,

    \item $\sum_{\alpha\in S_p}{\rm Wg}_{n}(\alpha)=\left(n(n+1)\cdots (n+p-1)\right)^{-1}$.
\end{enumerate}
\end{lemma}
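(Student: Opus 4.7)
The plan is to prove (1) by an elementary combinatorial induction on $p$, and then derive (2) from (1) by applying the Weingarten formula \eqref{eq:Wg} to a particularly simple integral whose value is known without any computation.

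For (1), I would proceed by induction on $p$. The base case $p=1$ is immediate since $S_1=\{\id\}$ with $\#\id=1$. For the inductive step, I classify each $\alpha\in S_p$ according to the cycle containing the element $p$: either $\{p\}$ is a singleton cycle, which via removal yields a bijection with $S_{p-1}$ under which $\#\alpha=\#\alpha'+1$; or $p$ sits inside a cycle of length $\geq 2$, in which case deleting $p$ from that cycle produces some $\alpha'\in S_{p-1}$ with $\#\alpha=\#\alpha'$, and each such $\alpha'$ arises from exactly $p-1$ permutations $\alpha$ (one for each insertion position after a remaining element). This yields the recursion
\[\sum_{\alpha\in S_p} n^{\#\alpha}\;=\;n\sum_{\alpha'\in S_{p-1}}n^{\#\alpha'}\;+\;(p-1)\sum_{\alpha'\in S_{p-1}}n^{\#\alpha'}\;=\;(n+p-1)\sum_{\alpha'\in S_{p-1}} n^{\#\alpha'},\]
and the inductive hypothesis closes the argument.

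For (2), set $W:=\sum_{\alpha\in S_p}\Wg_n(\alpha)$. I would exploit the simplest nontrivial instance of \eqref{eq:Wg}, namely $k=1$: a Haar isometry $V:\mathbb C\to\mathbb C^n$ is just a uniformly random unit vector $v\in \mathbb C^n$, and the deterministic normalization $v^*v=1$ forces
\[1\;=\;\int (v^*v)^p\,\mathrm dv\;=\;\sum_{\mathbf i\in [n]^p}\int |v_{i_1}|^2\cdots|v_{i_p}|^2\,\mathrm dv.\]
Applying \eqref{eq:Wg} to each inner integral with $\mathbf i'=\mathbf i$ and all $j$-indices equal to $1$, the $j$-Kronecker factors become $1$, the $\tau$-sum produces the constant $W=\sum_\tau \Wg_n(\tau\sigma^{-1})$ independently of $\sigma$, and the $\sigma$-sum reduces to the indicator that $\sigma$ stabilizes $\mathbf i$. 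Summing over $\mathbf i$ and swapping the two summations, I count, for each fixed $\sigma$, the number of tuples $\mathbf i\in[n]^p$ constant on the cycles of $\sigma$, which is $n^{\#\sigma}$. Hence $1=W\cdot\sum_{\sigma\in S_p}n^{\#\sigma}$, and invoking (1) yields (2).

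The only subtle point I anticipate is the standard one: when $n<p$, the Gram matrix $G_n(\sigma,\tau):=n^{\#(\sigma\tau^{-1})}$ is singular and $\Wg_n$ must be defined via a pseudo-inverse, so arguments that directly invert $G_n$ need extra care. The sphere-integral route above circumvents this entirely, since it evaluates a genuine bounded integral rather than a matrix inversion, and thus delivers (2) uniformly in $n,p\geq 1$.
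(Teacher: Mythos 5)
Your proof is correct. The paper itself does not prove the lemma — it is imported from \cite[Proposition 1.3.7]{Stanley11} (for item (1)) and \cite{collins2010randoma} (for item (2)) — so there is no internal proof to compare against; I will therefore just assess your argument. Your proof of (1) is the classical cycle-insertion recursion for the unsigned Stirling numbers of the first kind, yielding $\sum_{\alpha\in S_p} n^{\#\alpha}=(n+p-1)\sum_{\alpha'\in S_{p-1}}n^{\#\alpha'}$, and it is correct. Your proof of (2) is a clean, self-contained derivation: you specialize the paper's Weingarten formula \eqref{eq:Wg} to $k=1$, so that $V$ is a Haar unit vector $v\in\mathbb C^n$, expand the deterministic identity $1=(v^*v)^p$, use that $\tau\mapsto\tau\sigma^{-1}$ is a bijection of $S_p$ to collapse the $\tau$-sum to $W=\sum_\tau\Wg_n(\tau)$ independently of $\sigma$, and then swap the $\mathbf i$- and $\sigma$-sums to produce $\sum_\sigma n^{\#\sigma}$ — at which point (1) finishes the argument. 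This is more elementary than the route in the cited references, which derives (2) from the character expansion of $\Wg_n$ over irreducible representations of $S_p$. You also flag the right subtlety: when $n<p$ the Gram matrix $n^{\#(\sigma\tau^{-1})}$ is singular and $\Wg_n$ is defined by a pseudo-inverse, but your argument never inverts anything — it evaluates a genuine sphere integral using the formula \eqref{eq:Wg}, which holds uniformly in $n,p\geq 1$, so the identity extends without gap. The argument is sound.
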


The following combinatorial formula will be crucially used throughout the appendix.
\begin{lemma} [{\cite[Eq.(2.61)]{collins2003moments}}] \label{lem-binomformula}
For any integer $n\geq 1$ and any polynomial $F$ of degree $<n$, we have
    $$\sum_{p=0}^n(-1)^{n-p}F(p)\binom{n}{p}=0.$$
\end{lemma}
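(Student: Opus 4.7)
The approach is to recognize the sum as the $n$-th iterated forward difference of $F$ evaluated at $0$. Define the forward difference operator $\Delta$ by $(\Delta F)(x) := F(x+1) - F(x)$. By a straightforward induction on $n$, one verifies
\begin{equation*}
(\Delta^n F)(0) = \sum_{p=0}^n (-1)^{n-p} \binom{n}{p} F(p),
\end{equation*}
so the claim reduces to showing that $\Delta^n F \equiv 0$ whenever $\deg F < n$.

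For this reduction step I would argue either analytically or combinatorially. The analytic route: observe that $\Delta$ lowers the degree of any polynomial by exactly one, since for $F(x) = x^d + (\text{lower order})$ one has $\Delta F(x) = d \, x^{d-1} + (\text{lower order})$; iterating $n$ times on a polynomial of degree strictly less than $n$ yields the zero polynomial. The combinatorial route: by linearity it suffices to treat $F(x) = (x)_j := x(x-1)\cdots(x-j+1)$ for $0 \le j < n$ (these form a basis of polynomials of degree $<n$), and then the identity $(p)_j \binom{n}{p} = (n)_j \binom{n-j}{p-j}$ together with the binomial theorem gives
\begin{equation*}
\sum_{p=0}^n (-1)^{n-p} \binom{n}{p} (p)_j = (n)_j \sum_{p=j}^n (-1)^{n-p} \binom{n-j}{p-j} = (n)_j \, (1-1)^{n-j} = 0,
\end{equation*}
since $n-j \geq 1$.

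Either route is entirely elementary, so there is no real obstacle; the only mild care is in justifying the linearity reduction and, in the combinatorial version, the index shift in the inner binomial sum. I would present the combinatorial version because it is self-contained and gives an explicit cancellation, matching the algebraic flavor of the surrounding Weingarten material.
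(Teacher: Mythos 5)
The paper does not prove this lemma at all; it cites it directly as \cite[Eq.~(2.61)]{collins2003moments} and uses it as a black box. Your proposal therefore supplies something the paper omits, and both of your routes are correct. The forward-difference identity $\Delta^n F(0) = \sum_{p=0}^n (-1)^{n-p}\binom{n}{p}F(p)$ is exactly the standard reading of this sum, and the observation that $\Delta$ strictly lowers polynomial degree (so $\Delta^n F \equiv 0$ when $\deg F < n$) closes the argument cleanly. The combinatorial variant also checks out: the identity $(p)_j\binom{n}{p} = (n)_j\binom{n-j}{p-j}$ is a direct factorial computation, the falling factorials $(x)_j$ for $0 \le j < n$ do span the polynomials of degree $< n$, and the reindexed inner sum is $(1-1)^{n-j} = 0$ since $n - j \ge 1$. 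Either version is a complete, self-contained replacement for the citation; the finite-difference version is slightly shorter, while the falling-factorial version makes the cancellation mechanism explicit, which, as you note, fits the algebraic tone of the appendix.
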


\begin{lemma} \label{lem-polycoeff}
\begin{enumerate}
    \item Let us write the power series expansion
        $$s^{1-p}(s+1)\cdots (s+p-1)=\sum_{k=0}^{\infty}a_{k,p}s^{-k}$$
    in $s^{-1}$. Then for each $k$, the coefficient $a_{k,p}$ is a polynomial in $p$ of degree $2k$. Moreover, $a_{k,p}=0$ for all $k\geq p$.

    \item If we write the power series expansion
        $$\sum_{\alpha\in \mathcal{P}_{1,2}(p)}s^{-|\alpha|}=\sum_{k=0}^{\infty}b_{k,p}s^{-k},$$
    then $b_{0,p}\equiv 1$ and $b_{k,p}=\frac{p(p-1)\cdots (p-2k+1)}{2^k k!}$ for $k\geq 1$.

    \item If we write the power series expansion
        $$\sum_{\alpha\in NC_{1,2}(p)}s^{-|\alpha|}=\sum_{k=0}^{\infty}c_{k,p}s^{-k},$$
    then $c_{0,p}\equiv 1$ and $c_{k,p}=\binom{p}{2k}{\rm Cat}_k=\frac{p(p-1)\cdots(p-2k+1)}{(k+1)!k!}$ for $k\geq 1$.
\end{enumerate}
\end{lemma}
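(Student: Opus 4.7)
For part (1), the starting point is the factorisation
\begin{equation*}
    s^{1-p}(s+1)(s+2)\cdots(s+p-1) = \prod_{j=1}^{p-1}\left(1+\frac{j}{s}\right),
\end{equation*}
so that, upon expansion, the coefficient of $s^{-k}$ is the elementary symmetric polynomial
\begin{equation*}
    a_{k,p}=e_{k}(1,2,\ldots,p-1)=\sum_{1\leq j_{1}<\cdots<j_{k}\leq p-1} j_{1}\cdots j_{k}.
\end{equation*}
The vanishing $a_{k,p}=0$ for $k\geq p$ is then immediate, since only $p-1$ variables are available. For the degree assertion, I would argue by induction on $k$ using the recurrence $e_{k}(1,\ldots,n)=e_{k}(1,\ldots,n-1)+n\,e_{k-1}(1,\ldots,n-1)$: assuming inductively that $e_{k-1}(1,\ldots,n-1)$ is a polynomial in $n$ of degree $2k-2$, the successive difference $e_{k}(1,\ldots,n)-e_{k}(1,\ldots,n-1)$ has degree $2k-1$, hence $e_{k}(1,\ldots,n)$ has degree exactly $2k$. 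Equivalently, one may recognise $a_{k,p}$ as the unsigned Stirling number of the first kind counting permutations of $[p]$ with $p-k$ cycles, and quote the classical fact that this is a polynomial in $p$ of degree $2k$.

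For parts (2) and (3), the strategy is direct enumeration according to the number $k$ of pair blocks. Any $\alpha\in\mathcal{P}_{1,2}(p)$ with $k$ pair blocks (and $p-2k$ singletons) corresponds, when viewed as the associated product of transpositions, to a permutation of length $|\alpha|=k$. There are $\binom{p}{2k}$ ways to choose the elements to be paired, and $(2k-1)!!=(2k)!/(2^{k}k!)$ ways to pair them up, so
\begin{equation*}
    b_{k,p}=\binom{p}{2k}\cdot\frac{(2k)!}{2^{k}k!}=\frac{p(p-1)\cdots(p-2k+1)}{2^{k}k!},
\end{equation*}
as claimed. For part (3), I would invoke the observation that singleton blocks can never participate in a crossing (a crossing requires four elements grouped into two interlocking blocks of size $\geq 2$), so the non-crossing constraint reduces to a non-crossing constraint on the chosen $2k$ paired positions. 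The number of non-crossing pair partitions of $2k$ ordered points is the Catalan number $\operatorname{Cat}_{k}=(2k)!/((k+1)!\,k!)$, which yields $c_{k,p}=\binom{p}{2k}\operatorname{Cat}_{k}=p(p-1)\cdots(p-2k+1)/((k+1)!\,k!)$.

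I do not anticipate any serious obstacle; the lemma is essentially combinatorial bookkeeping, and the cases $k=0$ handled separately are trivial as empty-product identities. The two steps requiring a modicum of care are the degree-$2k$ claim in part (1), handled by the short induction above (or by the Stirling number identification), and the reduction in part (3), which relies only on the fact that any crossing must involve two blocks of size $\geq 2$ so singletons may be ignored for the non-crossing condition.
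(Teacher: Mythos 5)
Your proof is correct and follows essentially the same route as the paper's. For part (1) your recurrence $e_k(1,\ldots,n)=e_k(1,\ldots,n-1)+n\,e_{k-1}(1,\ldots,n-1)$ is (after substituting $n=p-1$) exactly the paper's recurrence $a_{k,p}=a_{k,p-1}+(p-1)a_{k-1,p-1}$, and the degree-$2k$ conclusion follows in both cases from the Faulhaber-type observation that summing a degree-$(2k-1)$ polynomial raises the degree by one; your identification with elementary symmetric polynomials and with unsigned Stirling numbers of the first kind are nice alternative descriptions of $a_{k,p}$ but do not change the underlying argument. For parts (2) and (3), the paper also reduces the count to ``choose $2k$ elements, then pair them'' with the non-crossing constraint handled by the Catalan number, which is precisely your enumeration, including the observation that singleton blocks never enter a crossing.
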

\begin{proof}
(1) The second assertion is obvious since $s(s+1)\cdots (s+p-1)$ is a polynomial in $s$ of degree $p$ with zero constant. On the other hand, we can easily see that $a_{0,p}\equiv 1$, and $a_{k,p}$ satisfy the recurrence
    $$a_{k,p}=(p-1)a_{k-1,p-1}+a_{k,p-1},\;\; p\geq 1, k\geq 1,$$
from the relation
    $$s(s+1)\cdots(s+p-1)=s^{p}\sum_{k=0}^{\infty}a_{k,p}s^{-k}=(s+p-1)s^{p-1}\sum_{k=0}^{\infty}a_{k,p-1}s^{-k}.$$
Therefore, we have $a_{k,p}=\sum_{j=0}^{p-1}ja_{k-1,j}+a_{k,0}=\sum_{j=0}^{p-1}ja_{k-1,j}$. By induction hypothesis, $pa_{k-1,p}$ is a polynomial in $p$ of degree $2k-1$, and hence the first assertion follows.

(2) We leave its proof to the reader, just by noting that $b_{k,p}$ is equal to the number of permutations consisting of exactly $k$ transpositions (and $p-2k$ singletons) in their cycles.

(3) Similarly with (2), $c_{k,p}$ is the number of permutations consisting of $k$ transpositions which are non-crossing. We can find this by first choosing $2k$ elements in $[p]$ and then count the number of non-crossing pair partitions ($={\rm Cat}_k$) of these elements.
\end{proof}

\begin{lemma} \label{lem:permutation-triangle}
Given two permutations $\sigma_1,\sigma_2\in S_p$, the map
$$S_p \ni \alpha\mapsto (|\sigma_1^{-1}\alpha|+|\alpha^{-1}\sigma_2|)\; {\rm mod\,} 2\in \{0,1 \}$$
is constant. In particular, for all $\alpha\in S_p$, there exists a nonnegative integer $g$ such that
    $$|\sigma_1^{-1}\alpha|+|\alpha^{-1}\sigma_2|=|\sigma_1^{-1}\sigma_2|+2g.$$
\end{lemma}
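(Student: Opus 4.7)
The plan is to reduce the parity statement to the fact that the sign map $\varepsilon: S_p \to \{\pm 1\}$ is a group homomorphism, and to obtain the non-negativity of $g$ from the triangle inequality for the Cayley metric on $S_p$ generated by transpositions.

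First I would recall the standard identity $\varepsilon(\sigma) = (-1)^{|\sigma|}$, which is immediate from the definition of $|\sigma|$ as the minimal transposition length: writing $\sigma$ as a product of $|\sigma|$ transpositions shows $\varepsilon(\sigma) = (-1)^{|\sigma|}$, and conversely the sign is well-defined on $S_p$. Using that $\varepsilon$ is a homomorphism, for any $\alpha \in S_p$,
\begin{equation*}
(-1)^{|\sigma_1^{-1}\alpha| + |\alpha^{-1}\sigma_2|} = \varepsilon(\sigma_1^{-1}\alpha)\,\varepsilon(\alpha^{-1}\sigma_2) = \varepsilon(\sigma_1^{-1}\sigma_2) = (-1)^{|\sigma_1^{-1}\sigma_2|},
\end{equation*}
so the map $\alpha \mapsto (|\sigma_1^{-1}\alpha| + |\alpha^{-1}\sigma_2|) \bmod 2$ is indeed constant, and its value is $|\sigma_1^{-1}\sigma_2| \bmod 2$.

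Next I would produce the lower bound. The function $d(\sigma,\tau) := |\sigma^{-1}\tau|$ is a left-invariant metric on $S_p$ (it is the word-length metric for the generating set of all transpositions), so in particular it satisfies the triangle inequality
\begin{equation*}
|\sigma_1^{-1}\sigma_2| = |(\sigma_1^{-1}\alpha)(\alpha^{-1}\sigma_2)| \leq |\sigma_1^{-1}\alpha| + |\alpha^{-1}\sigma_2|.
\end{equation*}
The inequality itself follows at once from the submultiplicativity of the transposition word-length: concatenating minimal transposition decompositions of $\sigma_1^{-1}\alpha$ and $\alpha^{-1}\sigma_2$ gives a decomposition of $\sigma_1^{-1}\sigma_2$ of length $|\sigma_1^{-1}\alpha| + |\alpha^{-1}\sigma_2|$.

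Combining the two, the nonnegative integer $|\sigma_1^{-1}\alpha| + |\alpha^{-1}\sigma_2| - |\sigma_1^{-1}\sigma_2|$ is even, hence equals $2g$ for some $g \in \mathbb{Z}_{\geq 0}$. No genuine obstacle arises: the whole argument is a direct use of the sign homomorphism together with the triangle inequality on $(S_p, d)$, both of which are standard facts about the symmetric group that can be invoked in one line each.
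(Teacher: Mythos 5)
Your proof is correct and very close to the paper's. For the parity claim, the paper cites the step fact from Nica--Speicher (Lemma 23.10) that multiplying a permutation by a transposition changes $|\cdot|$ by exactly $\pm 1$, and then decomposes $\alpha$ into transpositions to propagate the parity; you instead invoke the sign homomorphism $\varepsilon(\sigma)=(-1)^{|\sigma|}$ directly. These are two equivalent formulations of the same parity structure on $S_p$ (the $\pm 1$ step fact is precisely what makes $(-1)^{|\cdot|}$ a well-defined homomorphism), but your version collapses the induction into one line and immediately identifies the constant as $|\sigma_1^{-1}\sigma_2|\bmod 2$, which is slightly more streamlined. For the nonnegativity of $g$, both you and the paper use the identical triangle inequality $|\sigma_1^{-1}\alpha|+|\alpha^{-1}\sigma_2|\geq|\sigma_1^{-1}\sigma_2|$ for the word-length metric on $S_p$, so that part agrees exactly.
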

\begin{proof}
The first assertion follows from the observations that every permutation $\alpha$ can be written as a product of transpositions and $|\sigma \tau|=|\sigma|\pm 1$ for every $\sigma\in S_p$ and a transposition $\tau$ \cite[Lemma 23.10]{Nica_Speicher_2006}. The second point then follows from the first point and the triangle inequality $|\sigma_1^{-1}\alpha|+|\alpha^{-1}\sigma_2|\geq |\sigma_1^{-1}\sigma_2|$.
\end{proof}

We shall denote by $\mathcal P(p)$ the set of partitions of $[p]=\{1,2,\ldots, p\}$. Importantly, we denote by $NC(p)$ the set of \emph{non-crossing partitions} of $[p]$, that is partitions $\pi$ for which there do not exist $i < j < k < l \in [p]$ such that $i \stackrel{\pi}{\sim}k$ and $j \stackrel{\pi}{\sim} l$ (in which case we say that the blocks of $\pi$ containing $i$ and $k$ and, respectively, $j$ and $l$ are crossing). Let us denote by $S_{NC}(\sigma):=\{\alpha\in S_p: |\alpha|+|\alpha^{-1}\sigma|=|\sigma|\}$ the set of permutations which lie on a \emph{geodesic} from $\id$ to $\sigma$. In what follows, we will provide a complete description of the elements of $S_{NC}(\sigma)$ for arbitrary permutation $\sigma\in S_p$. First, we introduce the projection map $\pi:S_p\to \mathcal{P}(p)$
which naturally identifies each of the cycles of a permutation $\alpha$ as blocks of $\pi(\alpha)$. Then the following proposition states that we can interpret the elements of $S_{NC}(\gamma)$ as non-crossing partitions when $\gamma$ is a full cycle.

\begin{proposition} [{\cite[Proposition 23.23]{Nica_Speicher_2006}}] \label{prop:noncrossing-permutation}
Let $\gamma=(p\;\;p-1\,\cdots\, 1)\in S_p$. Then the restriction map $\pi\bigm|_{S_{NC}(\gamma)}$ of the projection map $\pi:S_p\to \mathcal{P}(p)$ defines a natural bijection between $S_{NC}(\gamma)$ and $NC(p)$, which preserves the partial order structure. More explicitely, 
\begin{enumerate}
    \item $\alpha\in S_{NC}(\gamma)$ if and only if $\pi(\alpha)$ forms a non-crossing partition and every cycle of $\alpha$ can be written in the form
        $$(j_1\, j_2\,\cdots \, j_r),\quad j_1>j_2>\cdots>j_r,$$

    \item if $\alpha \in S_{NC}(\gamma)$ and $\beta\in S_{NC}(\alpha)$, then $\beta\in S_{NC}(\gamma)$ and $\pi(\beta)\leq \pi(\alpha)$ as partitions \cite[Definition 9.14]{Nica_Speicher_2006}.
\end{enumerate}
\end{proposition}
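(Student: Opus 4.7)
The plan is to reduce part (1) to the classical Euler-characteristic inequality for pairs of permutations, and to deduce part (2) from the triangle inequality for the word-length metric on $S_p$ generated by transpositions.

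For part (1), the starting point is the identity $|\sigma| = p - \#\sigma$, which rephrases $\alpha \in S_{NC}(\gamma)$ as $\#\alpha + \#(\alpha^{-1}\gamma) = p+1$. The general inequality $\#\alpha + \#(\alpha^{-1}\gamma) \leq p+1$, valid for every $\alpha \in S_p$ and every full cycle $\gamma$, can be read off the Euler characteristic of the 2-complex obtained by gluing polygons indexed by the cycles of $\alpha$ and of $\gamma$ along $p$ edges labelled by $[p]$: this gives $\#\alpha + \#\gamma + \#(\alpha^{-1}\gamma) - p = 2 - 2g$ with $g \geq 0$ the genus of the resulting surface, and equality corresponds to a planar embedding ($g = 0$). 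The task then reduces to translating planarity into the concrete cycle description stated in the proposition.

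For this translation, I would proceed by induction on $p$. The \emph{if} direction is the easier half: given $\pi(\alpha) \in NC(p)$ with each cycle $(j_1\,j_2\,\cdots\,j_r)$ written in decreasing order starting from its maximum, one computes $\alpha^{-1}\gamma$ explicitly and recognizes its cycle partition as the Kreweras complement $K(\pi(\alpha))$; since the number of blocks of $\pi$ and of $K(\pi)$ sum to $p+1$ for any $\pi \in NC(p)$, this yields the equality $\#\alpha + \#(\alpha^{-1}\gamma) = p+1$. For the \emph{only if} direction, the induction splits on the cycle of $\alpha$ containing $p$. If $\alpha(p) = p$, one restricts to $[p-1]$ with $\gamma' = (p-1\;\cdots\;1)$ and applies the induction hypothesis directly. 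Otherwise, set $j := \alpha(p) < p$ and $\tau := (p,j)$; one checks that $\tau\gamma = (p\,p-1\,\cdots\,j+1)(j\,j-1\,\cdots\,1)$ and $\#(\tau\alpha) = \#\alpha + 1$, so the equality $\#\alpha + \#(\alpha^{-1}\gamma) = p+1$ becomes $\#(\tau\alpha) + \#((\tau\alpha)^{-1}(\tau\gamma)) = p+2$. The Euler inequality applied to the pair $(\tau\alpha, \tau\gamma)$ forces the orbits of $\langle \tau\alpha, \tau\gamma\rangle$ on $[p]$ to be exactly the two cycles $\{1,\ldots,j\}$ and $\{j+1,\ldots,p\}$ of $\tau\gamma$, so $\tau\alpha$ respects this partition and is planar on each part. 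Applying the induction hypothesis to the two restrictions then yields both the decreasing order $p > j > j_2 > \cdots > j_{r-1}$ inside the cycle of $\alpha$ through $p$ and the non-crossing structure of $\pi(\alpha)$.

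Part (2) is purely metric. Summing the equalities $|\beta| + |\beta^{-1}\alpha| = |\alpha|$ and $|\alpha| + |\alpha^{-1}\gamma| = |\gamma|$ and invoking the triangle inequality $|\beta^{-1}\gamma| \leq |\beta^{-1}\alpha| + |\alpha^{-1}\gamma|$ gives $|\beta| + |\beta^{-1}\gamma| \leq |\gamma|$; the reverse inequality is automatic from $\gamma = \beta \cdot (\beta^{-1}\gamma)$, so $\beta \in S_{NC}(\gamma)$. The refinement $\pi(\beta) \leq \pi(\alpha)$ then follows from the standard fact that multiplication by a transposition either splits one cycle of a permutation into two or merges two cycles into one; along any geodesic from $\beta$ to $\alpha$ the length increases by one at each step, so the number of cycles decreases by one, so every step is a merge, and $\pi(\alpha)$ is a coarsening of $\pi(\beta)$.

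I expect the main obstacle to be the \emph{only if} direction of part (1), specifically establishing that equality of the length function is preserved under the factorization by $\tau$. The delicate point is to rule out ``crossing'' configurations, a cycle of $\alpha$ through $p$ that fails to be decreasing, or a cycle of $\alpha$ that mixes $\{1,\ldots,j\}$ with $\{j+1,\ldots,p\}$ across the chord $(p,j)$, by showing that each such configuration strictly decreases $\#\alpha + \#(\alpha^{-1}\gamma)$ below $p+1$. This is precisely where the Euler-characteristic inequality and the analysis of its equality case carry the load.
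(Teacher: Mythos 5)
The paper cites this as Proposition 23.23 of Nica--Speicher and does not reproduce a proof, so there is no in-paper argument to compare against; what you have written is a correct self-contained proof by a route different from the textbook's. Nica--Speicher construct the bijection with $NC(p)$ directly: they show that writing each block of a non-crossing partition in decreasing order yields a permutation $\alpha$, verify via the Kreweras complement (realized as $\alpha^{-1}\gamma$) that $\#\alpha + \#(\alpha^{-1}\gamma) = p+1$, and conversely argue from the parenthesization picture that every geodesic permutation has this canonical form. You instead start from the genus identity $\#\alpha + \#(\alpha^{-1}\gamma) = p+1-2g$ (the triangle inequality together with the parity of $|\sigma|+|\tau|-|\sigma\tau|$ coming from the sign character), isolate the equality case $g=0$, and translate planarity into the cycle description by inducting on the chord $(p,\alpha(p))$. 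The key step you flag as delicate is in fact where your approach pays off: after multiplying by $\tau=(p,j)$, the count $\#(\tau\alpha)+\#((\tau\alpha)^{-1}\tau\gamma)=p+2$ together with $\#(\tau\gamma)=2$ forces $\langle\tau\alpha,\tau\gamma\rangle$ to have exactly two orbits and zero total genus, which confines every cycle of $\alpha$ other than the one through $p$ to one of the two intervals $\{1,\dots,j\}$ or $\{j+1,\dots,p\}$ and pins the decreasing order by the inductive hypothesis. Your part (2) is the same short metric argument anyone would give and is correct, including the deduction that each step of the geodesic from $\beta$ to $\alpha$ merges two cycles, so $\pi(\beta)\leq\pi(\alpha)$. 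The trade-off is that the textbook argument is more elementary and directly exhibits the inverse bijection, whereas yours imports the genus machinery but then gives a cleaner mechanism for ruling out crossing configurations in the inductive step.
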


\begin{proposition}\label{prop:geodesic-permutation}
Let $\sigma=c_1\cdots c_k$ be the cycle decomposition of $\sigma\in S_p$. If $\alpha\in S_{NC}(\sigma)$, then $\alpha$ can be decomposed into disjoint permutations $\alpha_1\cdots \alpha_k$ such that, for each $i=1,\ldots k$, $\alpha_i\leq c_i$ as partitions and $\alpha_i\in S_{NC}(c_i)$. Furthermore, if $\alpha'\in S_{NC}(\sigma)$ and $\alpha \equiv \alpha'$ as partitions, then $\alpha=\alpha'$ as permutations.
\end{proposition}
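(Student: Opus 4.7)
The plan is to reduce to the full-cycle case of Proposition~\ref{prop:noncrossing-permutation} via two steps: first, show that any $\alpha \in S_{NC}(\sigma)$ is forced to respect the block structure $\{I_1,\ldots,I_k\}$ of the supports of $c_1,\ldots,c_k$ (equivalently, $\pi(\alpha) \leq \pi(\sigma)$ as set partitions); second, use additivity of cycle counts across the $I_i$'s to decouple the geodesic condition into one geodesic condition per cycle of $\sigma$.

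For the first step, I would argue by induction on $|\alpha|$. The base case $\alpha = \id$ is immediate. For the inductive step, decompose $\alpha = \tau \alpha'$ with $\tau$ a transposition and $|\alpha'| = |\alpha|-1$, and set $\beta := \alpha^{-1}\sigma$, so that $\alpha'\beta = \tau\sigma$. The triangle inequality $|\tau\sigma| \leq |\alpha'| + |\beta| = |\sigma|-1$, combined with the well-known fact that multiplication by a transposition shifts the length by exactly $\pm 1$, forces $|\tau\sigma| = |\sigma|-1$ \emph{and} $\alpha' \in S_{NC}(\tau\sigma)$. The length-decrease in turn means $\tau = (x\,y)$ with $x,y$ lying in the same cycle of $\sigma$, so $\pi(\tau\sigma)$ is obtained from $\pi(\sigma)$ by splitting a single block in two; in particular $\pi(\tau\sigma) \leq \pi(\sigma)$. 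By induction $\pi(\alpha') \leq \pi(\tau\sigma) \leq \pi(\sigma)$, and since $\alpha = \tau\alpha'$ differs from $\alpha'$ only by either splitting the cycle of $\alpha'$ containing both $x$ and $y$, or merging the two cycles of $\alpha'$ containing $x$ and $y$ respectively (both of which are already confined to the single $\sigma$-block containing $x$ and $y$), we conclude $\pi(\alpha) \leq \pi(\sigma)$.

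Once $\alpha$ is known to preserve each $I_i$, I would write $\alpha = \alpha_1 \cdots \alpha_k$ with $\alpha_i \in S(I_i)$ and use the additivity $|\alpha| = \sum_i |\alpha_i|$, $|\alpha^{-1}\sigma| = \sum_i |\alpha_i^{-1}c_i|$, $|\sigma| = \sum_i |c_i|$ (all of which follow from $\#\alpha = \sum_i \#\alpha_i$ via the identity $\#+|\cdot|=$ degree). The geodesic condition $|\alpha|+|\alpha^{-1}\sigma| = |\sigma|$ then reads
\begin{equation*}
\sum_{i=1}^k \bigl(|\alpha_i|+|\alpha_i^{-1}c_i|\bigr) \;=\; \sum_{i=1}^k |c_i|,
\end{equation*}
and the termwise triangle inequality $|\alpha_i|+|\alpha_i^{-1}c_i| \geq |c_i|$ forces equality in every summand, i.e.\ $\alpha_i \in S_{NC}(c_i)$ for each $i$. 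For the uniqueness statement, if $\alpha, \alpha' \in S_{NC}(\sigma)$ satisfy $\pi(\alpha) = \pi(\alpha')$, then both factor as above and $\pi(\alpha_i) = \pi(\alpha_i')$ for each $i$; since $c_i$ is a full cycle on $I_i$, Proposition~\ref{prop:noncrossing-permutation} ensures that elements of $S_{NC}(c_i)$ are determined by their partition class (the decreasing-cycle convention pinning down the permutation), so $\alpha_i = \alpha_i'$ for all $i$ and hence $\alpha = \alpha'$.

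The main obstacle is the first step: the geodesic condition is a single scalar identity, whereas $\pi(\alpha) \leq \pi(\sigma)$ is a combinatorial claim about every cycle of $\alpha$. The inductive argument requires tracking carefully the two possible effects (split versus merge) of left multiplication by a transposition on the cycle structure of $\alpha'$, and leveraging crucially the fact that the extracted transposition $\tau$ is forced to have both endpoints inside one cycle of $\sigma$, which is precisely what guarantees that the conclusion $\pi(\alpha) \leq \pi(\sigma)$ is preserved through the induction.
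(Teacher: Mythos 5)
Your proof is correct, and it takes a genuinely different route to the key intermediate fact $\pi(\alpha)\le\pi(\sigma)$. The paper first conjugates $\sigma$ into a normal form where its cycles are consecutive decreasing blocks, so that $\sigma\in S_{NC}(\gamma)$, and then invokes Proposition~\ref{prop:noncrossing-permutation}(2) (the cited Nica--Speicher transitivity fact) to conclude at once that $\alpha\in S_{NC}(\gamma)$ and $\pi(\alpha)\le\pi(\sigma)$. You instead give a self-contained induction on $|\alpha|$, peeling off a transposition $\tau$ from $\alpha$, observing that the geodesic condition forces $|\tau\sigma|=|\sigma|-1$ (so $\tau$ splits a cycle of $\sigma$) and $\alpha'\in S_{NC}(\tau\sigma)$, and then tracking carefully how left-multiplication by $\tau$ (split or merge inside a single $\sigma$-block) preserves the inclusion of partitions. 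This is more elementary --- it does not require the re-indexing step or the black-box citation, and it makes visible exactly why the block structure is preserved --- at the cost of being somewhat longer; the paper's route is shorter but leans on the cited result. After $\pi(\alpha)\le\pi(\sigma)$ is in hand, the remaining steps (factoring $\alpha=\alpha_1\cdots\alpha_k$, using additivity of $\#$ across blocks, forcing termwise equality in the triangle inequality, and deducing uniqueness from the decreasing-cycle normal form in Proposition~\ref{prop:noncrossing-permutation}(1)) coincide with the paper's, with the one cosmetic difference that you apply the uniqueness argument per cycle $c_i$ rather than once globally via $\gamma$, which is exactly what is needed since you never place $\alpha$ inside $S_{NC}(\gamma)$. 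The only gap worth noting is the tacit assumption that Proposition~\ref{prop:noncrossing-permutation}(1) applies to an arbitrary full cycle $c_i$, not just to $\gamma$; that follows by the same conjugation/relabeling argument the paper uses, and you should say so explicitly.
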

\begin{proof}
By re-indexing (i.e. conjugation $\delta\sigma \delta^{-1}$ for some $\delta\in S_p$), we may assume that
    $$c_i=(p_i\,\,(p_i-1)\,\cdots\, (p_{i-1}+1)),\quad i=1,\ldots,k,$$
for some $1=p_0<p_1<\cdots<p_k=p.$ Then \cref{prop:noncrossing-permutation} implies that $\sigma,\alpha\in S_{NC}(\gamma)$ and $\alpha\leq \sigma$ as partitions. In particular, each block $V_i=\{p_{i-1}+1,\ldots, p_i-1, p_i\}$ is invariant under $\alpha$, so the restriction $\alpha_i:=\alpha \bigm|_{V_i}\in S(V_i)$ is well-defined. Now we can write $\alpha=\alpha_1\cdots \alpha_k$ and $|\alpha|=\sum_{i=1}^k|\alpha_i|$. Moreover, the condition $\alpha\in S_{NC}(\sigma)$ implies that
    $$|\sigma|=\sum_{i=1}^k|\sigma_i|\leq \sum_{i=1}^k(|\alpha_i|+|\alpha_i^{-1}\sigma_i|)=\sum_i |\alpha_i|+\sum_i |\sigma_i^{-1}\alpha_i|=|\alpha|+|\alpha^{-1}\sigma|=|\sigma|,$$
and therefore, the equality condition forces that $\alpha_i\in S_{NC}(\sigma_i)$ for all $i$. Finally, the last statement follows from \cref{prop:noncrossing-permutation}(1): $\alpha,\alpha'\in S_{NC}(\gamma)$ and $\pi(\alpha)= \pi(\alpha')$ imply that $\alpha=\alpha'$.
\end{proof}

\cref{prop:geodesic-permutation} implies that we have the natural bijection
\begin{equation} \label{eq-NC-bij}
    S_{NC}(\sigma)\cong S_{NC}(c_1)\times\cdots \times S_{NC}(c_k).
\end{equation}
Now let us denote by $S_{NC_{1,2}}(\sigma)$ the set of permutations in $S_{NC}(\sigma)$ having only cycles of size 1 or 2. The set can be defined equivalently by
\begin{equation} \label{eq-NC12}
    \alpha\in S_{NC_{1,2}}(\sigma) \iff \alpha\in S_{NC}(\sigma) \text{ and } \alpha=\alpha^{-1} \iff \alpha, \alpha^{-1}\in S_{NC}(\sigma).
\end{equation}
since any cycle having size larger than 2 is distinct from its inverse, and since $\alpha,\alpha^{-1}\in S_{NC}(\sigma)$ implies $\alpha=\alpha^{-1}$ by \cref{prop:geodesic-permutation}. Furthermore, note that the set $S_{NC_{1,2}}(\gamma)$ can be identified with $NC_{1,2}(p)$, the set of non-crossing partitions having only blocks of size 1 or 2, in a trivial way. Finally, if $\sigma=c_1\cdots c_k$ is the cycle decomposition of $\sigma$, then the bijection \cref{eq-NC-bij} reduces to
\begin{equation} \label{eq-NC12-bij}
    S_{NC_{1,2}}(\sigma)\cong S_{NC_{1,2}}(c_1)\times\cdots \times S_{NC_{1,2}}(c_k).
\end{equation}

For two partitions $\pi_1,\pi_2\in \mathcal{P}(p)$, the \textit{join} $\pi_1\vee \pi_2$ is defined as the supremum of $\pi_1$ and $\pi_2$ in $\mathcal{P}(p)$ \cite[Definition 9.15]{Nica_Speicher_2006}. Since the length function $|\alpha|=p-\#\, \alpha$ only depends on the corresponding partition $\pi(\alpha)$, we can define  $|\alpha\vee\beta|:=p-\#(\pi(\alpha)\vee \pi(\beta))$ for arbitrary permutations $\alpha,\beta\in S_p$. Note that $|\alpha\vee \beta|\geq \max(|\alpha|,|\beta|)$ in general since the join operation $\vee$ does not increase the number of blocks as partitions. Moreover, if $|\alpha\vee \beta|=|\alpha|=|\beta|$, or equivalently $\#(\pi(\alpha)\vee \pi(\beta))=\#(\pi(\alpha))=\#(\pi(\beta))$, then we have $\pi(\alpha)=\pi(\beta)$.

We now present several permutation inequalities that will be applied later to prove \cref{prop-ranDOCmatrix}.

\begin{lemma} \label{lem:permutation-optimization}
Let $\sigma\in S_p$ be a permutation.
\begin{enumerate}
    \item For all $\alpha,\beta\in S_p$, we have
        $$|\alpha|+2|\alpha\beta^{-1}|+|(\sigma^{-1}\alpha)\vee (\sigma^{-1}\beta^{-1})|\geq |\sigma|,$$
    and the equality holds if and only if $\alpha=\beta\in S_{NC_{1,2}}(\sigma)$. Furthermore, whenever the equality does not hold, we have 
    $$|\alpha|+2|\alpha\beta^{-1}|+|(\sigma^{-1}\alpha)\vee (\sigma^{-1}\beta^{-1})|\geq |\sigma|+2.$$

    \item For all $\alpha,\beta\in S_p$, we have
        $$|\alpha\beta^{-1}|+|(\sigma^{-1}\alpha)\vee (\sigma^{-1}\beta^{-1})|\geq \frac{1}{2}|\sigma^{2}|,$$
    and the equality holds if and only if $\alpha=\beta\in S_{NC_{1,2}}(\sigma)$ and $\sigma^{-1}\alpha\in S_{NC}(\sigma^{-2})$. Furthermore, when $\sigma=\gamma=(p\;\;p-1\,\cdots\, 1)$, $\alpha\in NC_{1,2}(p)$ is `maximally paired', i.e.,
    \begin{align}
        p=\text{even} &\implies \text{$\alpha$ consists of $\frac{p}{2}$ transpositions i.e., $\alpha\in NC_{2}(p)$}, \label{eq-perm-opt1}\\
        p=\text{odd} &\implies \text{$\alpha$ consists of one singleton and $\frac{p-1}{2}$ transpositions} \label{eq-perm-opt2}.
    \end{align}
\end{enumerate}
\begin{proof}
(1) Since $|(\sigma^{-1}\alpha)\vee (\sigma^{-1}\beta^{-1})|\geq \max(|\sigma^{-1}\alpha|,|\sigma^{-1}\beta^{-1}|)$, we have
\begin{align*}
    {\rm LHS}&\geq |\alpha|+|\sigma^{-1}\alpha|\geq |\sigma|,\\
    {\rm LHS}&\geq |\alpha|+|\alpha\beta^{-1}|+|\sigma^{-1}\beta^{-1}|\geq |\alpha^{-1}|+|\sigma^{-1}\alpha^{-1}|\geq |\sigma|.
\end{align*}
In particular, the equality conditions for both two inequalities imply that $\alpha=\beta$ and $\alpha,\alpha^{-1}\in S_{NC}(\sigma)$, and therefore $\alpha\in S_{NC_{1,2}}$ by \cref{eq-NC12}. On the other hand, if the equality does not hold, then either $\alpha\neq \beta$ or $\alpha=\beta\not\in S_{NC_{1,2}}$. For the former case, we have
    $${\rm LHS}\geq (|\alpha|+|\sigma^{-1}\alpha|)+2|\alpha\beta^{-1}|\geq |\sigma|+2,$$
and for the latter case, we again have
    $${\rm LHS}\geq \max(|\alpha|+|\sigma^{-1}\alpha|,|\alpha^{-1}|+|\sigma^{-1}\alpha^{-1}|)\geq |\sigma|+2$$
by \cref{lem:permutation-triangle}.

\medskip

(2) Similarly as above, we have
    $${\rm LHS}\geq |\alpha\beta^{-1}|+\max(|\sigma^{-1}\alpha|,|\sigma^{-1}\beta^{-1}|)\geq \frac{1}{2}\Big(|\alpha\beta^{-1}|+(|\sigma^{-1} \alpha|+|\alpha^{-1}\beta|+|\beta^{-1}\sigma^{-1}|)\Big)\geq \frac{1}{2}|\sigma^2|.$$
All the equalities above hold if and only if $\alpha=\beta$, $\sigma^{-1}\alpha\equiv \sigma^{-1}\alpha^{-1}$ as partitions, and $|\sigma^{-1}\alpha|+|\sigma^{-1}\alpha^{-1}|=|\sigma^{-2}|$. Note that the last condition implies that $\sigma^{-1}\alpha,\sigma^{-1}\alpha^{-1}\in S_{NC}(\sigma^{-2})$, and therefore, $\alpha=\alpha^{-1}$ by \cref{prop:geodesic-permutation}. In particular, every cycle of $\alpha$ has size 1 or 2.

We claim that $\alpha\in S_{NC_{1,2}}(\sigma)$ in this case. Indeed, by considering the cycle decomposition of $\sigma$ and applying \cref{prop:geodesic-permutation} for $\sigma^{-1}\alpha\in S_{NC}(\sigma^{-2})$, we may assume that $\sigma=\gamma=(p\;\;p-1\,\cdots\, 1)$. Then we have
    $$|\gamma^{-1}\alpha|=|\gamma^{-1}\alpha^{-1}|=\frac{1}{2}|\gamma^{-2}|=\begin{cases} p/2-1 & \text{if $p$ is even,} \\ (p-1)/2 & \text{if $p$ is odd.}  \end{cases}$$
Therefore, it suffices to show $|\alpha|=\begin{cases} p/2 & \text{if $p$ is even}, \\ (p-1)/2 & \text{if $p$ is odd} \end{cases}$ to conclude $\alpha\in S_{NC_{1,2}}(\gamma)\cong NC_{1,2}(p)$, which is equivalent to \cref{eq-perm-opt1,eq-perm-opt2}. First, when $p$ is even, we have
    $$\gamma^{-2}=(1\,3\,\cdots p-1)(2\,4\,\cdots p),$$
and in particular, the condition $\gamma^{-1}\alpha\in S_{NC}(\gamma^{-2})$ implies that every cycle $\gamma^{-1}\alpha$ consists of the elements having the same parity. This is impossible when there is any $i\in [p]$ such that $\alpha(i)=i$. Therefore, we have \cref{eq-perm-opt1}. On the other hand, when $p$ is odd, we can write
    $$\gamma^{-2}=(1\,\,3\,\,\cdots\,\,p\,\,2\,\,4\cdots\,\,p-1).$$
If \cref{eq-perm-opt2} is not true, then we have at least three singletons inside $\alpha$, and in particular, we can find two distinct integers $i<i'$ with the same parity such that $\alpha(i)=i$ and $\alpha(i')=i'$. Then $\gamma^{-1}\alpha(i)=i+1$, $\gamma^{-1}\alpha(i')=i'+1$, and we can easily see that the pairs $(i,i+1)$ and $(i',i'+1)$ are crossing inside the full cycle $\gamma^{-2}$. Moreover, $i,i+1,i',i'+1$ cannot appear, in this order, in the same cycle of $\gamma^{-1}\alpha$ since any cycle inside $\gamma^{-1}\alpha\in S_{NC}(\gamma^{-2})$ must be of the form
    $$(i_1\,\cdots i_r\,j_1\,\cdots j_{r'})$$
where $i_1<\cdots<i_r$ are odd integers and $j_1<\cdots <j_{r'}$ are even integers, according to \cref{prop:noncrossing-permutation} (1). Consequently, we have a contradiction, and hence \cref{eq-perm-opt2} holds.

\end{proof}
\end{lemma}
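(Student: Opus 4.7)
My plan is to derive both inequalities from two elementary ingredients: (i) the obvious lower bound $|\alpha_1\vee\alpha_2|\geq \max(|\alpha_1|,|\alpha_2|)$, which follows since the join partition has at most as many blocks as either factor; and (ii) the triangle inequality $|\sigma_1\sigma_2|\leq |\sigma_1|+|\sigma_2|$ on $S_p$, together with the basic identities $|g|=|g^{-1}|$ and the invariance of length under conjugation. The equality cases will then be read off by tracking when each of these inequalities is tight, and the ``$+2$ gap'' in part (1) will follow from the parity statement in \cref{lem:permutation-triangle}.

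For part (1), the key observation is that the extra factor $2|\alpha\beta^{-1}|$ can be ``spent'' in two different ways. Using $|(\sigma^{-1}\alpha)\vee(\sigma^{-1}\beta^{-1})|\geq |\sigma^{-1}\alpha|$ yields $\text{LHS}\geq |\alpha|+|\sigma^{-1}\alpha|\geq |\sigma|$, while using the other join bound $|(\sigma^{-1}\alpha)\vee(\sigma^{-1}\beta^{-1})|\geq |\sigma^{-1}\beta^{-1}|$ together with $|\alpha|+|\alpha\beta^{-1}|\geq |\beta^{-1}|$ yields $\text{LHS}\geq |\alpha\beta^{-1}|+|\beta^{-1}|+|\sigma^{-1}\beta^{-1}|\geq |\sigma|$. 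Equality in the first bound forces $\alpha\in S_{NC}(\sigma)$, while equality in the second forces $\alpha=\beta$ and $\alpha^{-1}\in S_{NC}(\sigma)$; combined with \cref{eq-NC12} this gives $\alpha=\beta\in S_{NC_{1,2}}(\sigma)$. For the ``$+2$ gap'', I would note that if $\alpha\neq\beta$ then $2|\alpha\beta^{-1}|\geq 2$ already supplies it, and if $\alpha=\beta\notin S_{NC_{1,2}}(\sigma)$ then at least one of $|\alpha|+|\sigma^{-1}\alpha|-|\sigma|$ or $|\alpha^{-1}|+|\sigma^{-1}\alpha^{-1}|-|\sigma|$ is strictly positive; by \cref{lem:permutation-triangle} this quantity is always even, so the jump is at least $2$.

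For part (2), I would instead average the two join lower bounds to obtain
$$|(\sigma^{-1}\alpha)\vee(\sigma^{-1}\beta^{-1})|\geq \tfrac{1}{2}\bigl(|\sigma^{-1}\alpha|+|\sigma^{-1}\beta^{-1}|\bigr),$$
then split $|\alpha\beta^{-1}|=\tfrac{1}{2}|\alpha\beta^{-1}|+\tfrac{1}{2}|\alpha^{-1}\beta|$ and apply the triple triangle inequality coming from the identity $(\sigma^{-1}\alpha)(\alpha^{-1}\beta)(\beta^{-1}\sigma^{-1})=\sigma^{-2}$. This yields $\text{LHS}\geq \tfrac{1}{2}|\alpha\beta^{-1}|+\tfrac{1}{2}|\sigma^{-2}|\geq \tfrac{1}{2}|\sigma^{2}|$. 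Equality forces $\alpha=\beta$, the partition match $\pi(\sigma^{-1}\alpha)=\pi(\sigma^{-1}\alpha^{-1})$, and $\sigma^{-1}\alpha\in S_{NC}(\sigma^{-2})$; then \cref{prop:geodesic-permutation} applied inside each cycle of $\sigma^{-2}$ forces $\sigma^{-1}\alpha=\sigma^{-1}\alpha^{-1}$, so $\alpha$ is an involution (cycles of size $1$ or $2$), which together with the $S_{NC}$ condition places $\alpha$ in $S_{NC_{1,2}}(\sigma)$.

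The most delicate step will be the ``maximally paired'' characterization in the last statement of part (2) for $\sigma=\gamma$. I would reduce to a single cycle via \cref{prop:geodesic-permutation} and then analyze the structure of $\gamma^{-2}$: when $p$ is even it splits as a product of two disjoint $(p/2)$-cycles on the odd and even indices, so the non-crossing condition $\sigma^{-1}\alpha\in S_{NC}(\gamma^{-2})$ forces each cycle of $\sigma^{-1}\alpha$ to lie within one parity class, precluding any fixed point of $\alpha$ (which would send $i\mapsto i+1$ under $\sigma^{-1}\alpha$, mixing parities); hence $\alpha\in NC_2(p)$. When $p$ is odd, $\gamma^{-2}$ is a single $p$-cycle interleaving odd and even indices, and a direct crossing argument using the explicit description of $S_{NC}(\gamma^{-2})$ from \cref{prop:noncrossing-permutation} shows that two distinct fixed points of $\alpha$ with the same parity produce a crossing inside this interleaved cycle; hence $\alpha$ has at most one singleton. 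This parity/crossing case analysis inside $S_{NC}(\gamma^{-2})$ is where the bulk of the case work will lie.
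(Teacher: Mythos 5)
Your proposal follows essentially the same route as the paper in both parts: the two join lower bounds plus the triangle inequality for part (1), the averaging trick plus the cyclic identity $(\sigma^{-1}\alpha)(\alpha^{-1}\beta)(\beta^{-1}\sigma^{-1})=\sigma^{-2}$ for part (2), and then the reduction to a full cycle and the parity analysis on $\gamma^{-2}$ for the maximally-paired characterization. The $+2$ gap via the parity of $|\cdot|$ from \cref{lem:permutation-triangle} and the join bound $\vee\geq\max$ are also exactly the paper's ingredients.

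One point in part (2) deserves tightening. You write that ``$\alpha$ is an involution \dots\ which together with the $S_{NC}$ condition places $\alpha$ in $S_{NC_{1,2}}(\sigma)$,'' and then treat the maximally-paired claim as a separate subsequent step. But $\alpha\in S_{NC_{1,2}}(\sigma)$ requires $\alpha\in S_{NC}(\sigma)$, i.e.\ $|\alpha|+|\sigma^{-1}\alpha|=|\sigma|$, and this does \emph{not} follow directly from $\alpha=\alpha^{-1}$ and $\sigma^{-1}\alpha\in S_{NC}(\sigma^{-2})$; one must still pin down $|\alpha|$. The parity analysis you describe afterward is precisely what does this: it shows $\alpha$ has no (resp.\ at most one) fixed point for $p$ even (resp.\ odd), hence $|\alpha|=p/2$ (resp.\ $(p-1)/2$), which combined with $|\gamma^{-1}\alpha|=\frac12|\gamma^{-2}|$ gives $|\alpha|+|\gamma^{-1}\alpha|=|\gamma|$ and hence $\alpha\in S_{NC_{1,2}}(\gamma)$. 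So the ``maximally paired'' argument is not a corollary of $\alpha\in S_{NC_{1,2}}(\sigma)$ --- it is the proof of it. Reordering these two statements is all that's needed; the underlying reasoning you give is correct.
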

 
\begin{lemma} \label{lem:permutation-optimization2}
Let $\gamma:=(p\,\,p-1\,\cdots\, 1)\in S_p$ and $\tilde{\gamma}:=(p\,\,p-1\,\cdots\, 1)(2p\;2p-1\,\cdots \,p+1)\in S_{2p}$. For $p=2,3$ we have
\begin{align}
    \alpha,\beta\in S_p,\;\; |\alpha|+|\alpha\beta^{-1}|> \frac{p}{2} &\implies |\gamma^{-1}\alpha \vee \gamma^{-1}\beta^{-1}|+|\alpha\beta^{-1}|\geq \frac{p}{2}+\frac{1}{2}, \label{eq-C-mean1}\\
    \alpha,\beta\in S_{2p},\;\; |\alpha|+|\alpha\beta^{-1}|> p &\implies |\tilde{\gamma}^{-1}\alpha \vee \tilde{\gamma}^{-1}\beta^{-1}|+|\alpha\beta^{-1}|\geq p. \label{eq-C-mean2}
\end{align}
\end{lemma}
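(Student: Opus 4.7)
The plan is to establish both inequalities via the baseline bound from \cref{lem:permutation-optimization}(2) together with a finite case analysis tailored to $p\in\{2,3\}$. Define
$$f_\sigma(\alpha,\beta) := \left|(\sigma^{-1}\alpha)\vee (\sigma^{-1}\beta^{-1})\right| + |\alpha\beta^{-1}|.$$
Applied to $\sigma=\gamma$ (resp.\ $\sigma=\tilde\gamma$), \cref{lem:permutation-optimization}(2) gives $f_\sigma(\alpha,\beta)\geq \tfrac{1}{2}|\sigma^2|$, with equality only when $\alpha=\beta\in S_{NC_{1,2}}(\sigma)$ and $\sigma^{-1}\alpha\in S_{NC}(\sigma^{-2})$. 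Direct computation yields baselines $0,1,0,2$ in the four cases $(\sigma,p)\in\{(\gamma,2),(\gamma,3),(\tilde\gamma,2),(\tilde\gamma,3)\}$, while the targets are $\tfrac{3}{2},\,2,\,2,\,3$ respectively.

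First, I would verify that the hypothesis excludes the equality case of \cref{lem:permutation-optimization}(2): if $\alpha=\beta$ then $|\alpha|+|\alpha\beta^{-1}|=|\alpha|$, and for $\alpha\in S_{NC_{1,2}}(\sigma)$, which factors via \cref{eq-NC12-bij} as a product of $NC_{1,2}$ spaces on the cycles of $\sigma$, the length is bounded by $\lfloor p/2\rfloor$ for $\sigma=\gamma$ and by $p$ for $\sigma=\tilde\gamma$; these bounds fail the hypothesis in all four cases. Since $f_\sigma(\alpha,\beta)$ and $\tfrac{1}{2}|\sigma^2|$ are both integers, the resulting strict inequality upgrades to $f_\sigma(\alpha,\beta)\geq \tfrac{1}{2}|\sigma^2|+1$. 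This already gives the desired bound in both $p=3$ cases, where $1+1=2=\tfrac{p}{2}+\tfrac{1}{2}$ and $2+1=3=p$.

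It remains to address the two $p=2$ cases, which require larger increments. For the first inequality at $p=2$, the hypothesis $|\alpha|+|\alpha\beta^{-1}|>1$ in $S_2$ forces $\alpha=(12)$ and $\beta=e$, so $\gamma^{-1}\alpha=e$ and $\gamma^{-1}\beta^{-1}=(12)$, giving $f_\gamma(\alpha,\beta)=1+1=2\geq\tfrac{3}{2}$. For the second inequality at $p=2$ with $\tilde\gamma=(12)(34)\in S_4$, I would first handle the block-respecting case ($\alpha,\beta\in S_{\{1,2\}}\times S_{\{3,4\}}$): here $f_{\tilde\gamma}$ factorizes as $f_{c_1}+f_{c_2}$ over the two $2$-cycles $c_1,c_2$ of $\tilde\gamma$, and the hypothesis $|\alpha|+|\alpha\beta^{-1}|\geq 3$ forces at least one block to satisfy the hypothesis of the first inequality at $p=2$, yielding $f_{c_i}\geq 2$ and hence $f_{\tilde\gamma}\geq 2$.

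The main obstacle will be the non-block-respecting case of the second inequality at $p=2$, where $\alpha$ or $\beta$ mixes the two $\tilde\gamma$-blocks, since the join $(\tilde\gamma^{-1}\alpha)\vee(\tilde\gamma^{-1}\beta^{-1})$ can collapse to a very coarse partition. I expect to handle this by a direct enumeration, using the symmetry group of $\tilde\gamma$ (which contains the swap of the two blocks, and the conjugation symmetry $(\alpha,\beta)\mapsto (\beta^{-1},\alpha^{-1})$) to reduce the number of distinct classes, and showing in each remaining case that any collapse of the join is compensated by a corresponding increase in $|\alpha\beta^{-1}|$.
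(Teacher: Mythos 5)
Your strategy for the two $p=3$ cases and for the first $p=2$ case is correct and matches the paper: the integrality upgrade of \cref{lem:permutation-optimization}(2) together with the exclusion of the equality class $\alpha=\beta\in S_{NC_{1,2}}(\sigma)$ (using the length bound $|\alpha|\le\lfloor p/2\rfloor$ from \cref{eq-NC12-bij}, which indeed is incompatible with the hypothesis) handles $p=3$, and the $p=2$ case of \cref{eq-C-mean1} reduces to the single pair $(\alpha,\beta)=((1\,2),e)$. The block-respecting subcase of \cref{eq-C-mean2} at $p=2$ is also correct: the additivity of $f_{\tilde\gamma}$ over the blocks and a pigeonhole argument reduce it to the already-proved one-block inequality.

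The genuine gap is the non-block-respecting subcase of \cref{eq-C-mean2} at $p=2$, which you explicitly defer to a ``direct enumeration'' that is never carried out. This is not a corner case: it is the bulk of $S_4\times S_4$ and it is precisely where the join can collapse, so it is the heart of the $p=2$ argument, and announcing an intention to enumerate is not a proof. Note that the split into block-respecting vs.\ non-block-respecting $(\alpha,\beta)$ is also not how the paper handles it; the paper sidesteps the case split entirely via a two-step argument: first one checks that $|\tilde\gamma^{-1}\alpha\vee\tilde\gamma^{-1}\beta^{-1}|\ge 1$ for any pair satisfying the hypothesis (equality would force $\alpha=\beta=\tilde\gamma$, contradicting $|\alpha|+|\alpha\beta^{-1}|>2$); then, assuming for contradiction that $f_{\tilde\gamma}\le 1$, one gets $|\alpha\beta^{-1}|=0$ so $\alpha=\beta$ and $|\tilde\gamma^{-1}\alpha\vee\tilde\gamma^{-1}\alpha^{-1}|=1$, which after a short analysis forces $\tilde\gamma^{-1}\alpha=\tilde\gamma^{-1}\alpha^{-1}$, hence $\alpha=\alpha^{-1}$ is an involution with $|\alpha|\le 2$, contradicting the hypothesis again. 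Replacing your deferred enumeration with something of this shape (or actually performing the enumeration) would close the gap.
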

\begin{proof}
Let us first consider the case $p=2$. The proof of \cref{eq-C-mean1} is simple: $|\alpha|+|\alpha\beta^{-1}|>1$ holds only if $\alpha=(1\, 2)$ and $\beta=\id$, and $|\gamma^{-1}\alpha\vee \gamma^{-1}\beta^{-1}|+|\alpha\beta^{-1}|=2$ in this case. Regarding \cref{eq-C-mean2}, note that we always have $|\tilde{\gamma}^{-1}\alpha\vee \tilde{\gamma}^{-1}\beta^{-1}|\geq 1$: otherwise, $\alpha=\beta=\tilde{\gamma}=(1\,2)(3\,4)$, but this contradicts the condition $|\alpha|+|\alpha\beta^{-1}|>2$. Now if we assume $\alpha=\beta$ and $|\tilde{\gamma}^{-1}\alpha\vee \tilde{\gamma}^{-1}\alpha^{-1}|=1$, then $\max(|\tilde{\gamma}^{-1}\alpha|,|\tilde{\gamma}^{-1}\alpha^{-1}|)=1$. Both the cases $|\tilde{\gamma}^{-1}\alpha|=0$ and $|\tilde{\gamma}^{-1}\alpha|=0$ imply that $\alpha=\beta=\tilde{\gamma}$ for which we already had a contradiction. If $|\tilde{\gamma}^{-1}\alpha|=|\tilde{\gamma}^{-1}\alpha^{-1}|=1=|\tilde{\gamma}^{-1}\alpha\vee \tilde{\gamma}^{-1}\alpha^{-1}|$, then we actually have $\tilde{\gamma}^{-1}\alpha=\tilde{\gamma}^{-1}\alpha^{-1}$ and $\alpha=\alpha^{-1}$. This again induces a contradiction since $|\alpha|+|\alpha\beta^{-1}|=|\alpha|\leq 2$. 

Now consider the case $p=3$. By \cref{lem:permutation-optimization} (2), we already know
\begin{align*}
    |\gamma^{-1}\alpha \vee \gamma^{-1}\beta^{-1}|+|\alpha\beta^{-1}| &\geq \frac{1}{2}|\gamma^2|=1
\end{align*}
for all $\alpha,\beta\in S_p$ and the equality holds only if $\alpha=\beta$ and $|\alpha|=1$ (\cref{eq-perm-opt2}) which does not satisfy $|\alpha|+|\alpha\beta^{-1}|>3/2$. This implies \cref{eq-C-mean1}. Similarly, we always have
\begin{align*}
    |\tilde{\gamma}^{-1}\alpha \vee \tilde{\gamma}^{-1}\beta^{-1}|+|\alpha\beta^{-1}| &\geq \frac{1}{2}|\tilde{\gamma}^2|=2,
\end{align*}
and the equality condition implies that $\alpha=\beta\in S_{NC_{1,2}}(\tilde{\gamma})$ and
    $$|\alpha|=|\tilde{\gamma}|-|\tilde{\gamma}^{-1}\alpha|=|\tilde{\gamma}|-\frac{1}{2}|\tilde{\gamma}^2|=2,$$
which contradicts the condition $|\alpha|+|\alpha\beta^{-1}|>3$. This implies \cref{eq-C-mean2}.
\end{proof}
\begin{remark}
By repeating the same arguments above, we can actually show that \cref{eq-C-mean1,eq-C-mean2} hold for all odd integers $p$. However, they are not true for even $p\geq 4$ in general: when $\alpha=\beta=(1\, 2\, 3\, 4)\in S_4$,
    $$|\gamma^{-1}\alpha \vee \gamma^{-1}\beta^{-1}|+|\alpha\beta^{-1}|=|(1\,3)(2\,4)|=2<\frac{5}{2},$$
and when $\alpha=\beta=(1\,2\,3\,4)(5\,6)(7\,8)\in S_8$,
    $$|\tilde{\gamma}^{-1}\alpha \vee \tilde{\gamma}^{-1}\beta^{-1}|+|\alpha\beta^{-1}|=|(1\,3)(2\,4)(5\,7)|=3<4.$$
\end{remark}

\subsection{Scalar parameters}\label{app:scalar}

This section contains results about the real parameters $\lambda_{1,2,3}^{(d,s)}$ which appear in the study of random (conjugate) unitary, orthogonal, and hyperoctahedral quantum channels from \cref{sec:conjugate-unitary-orthogonal-covariant,sec:hyperoctahedral-covariant}.

The following moment formulae characterize the distributions of these random variables.

\begin{lemma} \label{lem:moments-lambda-123}
The moments of the random variables $\lambda_{1,2,3}^{(d,s)}$ are given by
\begin{align*}
    \mathbb{E}\Big[\big(\lambda_1^{(d,s)}\big)^p\Big] &= s(s+1)\cdots (s+p-1)\sum_{\beta\in S_p} d^{\# \beta}{\Wg}_{ds}(\beta)\\
    \mathbb{E}\Big[\big(\lambda_2^{(d,s)}\big)^p\Big] &= \sum_{\alpha, \beta\in S_p} s^{\# \alpha} d^{\# (\alpha\beta)}{\Wg}_{ds}(\alpha\beta^{-1})\\
    \mathbb{E}\Big[\big(\lambda_3^{(d,s)}\big)^p\Big] &= \sum_{\alpha, \beta\in S_p} s^{\# \alpha} d^{\# (\alpha\vee\beta)}\Wg_{ds}(\alpha\beta^{-1}).
\end{align*}
\end{lemma}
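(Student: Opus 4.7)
The plan is to expand each $\lambda_i(\Phi_V)^p$ as a polynomial in the matrix entries of the Haar-distributed isometry $V:\mathbb{C}^d\to\mathbb{C}^d\otimes\mathbb{C}^s$, apply the Weingarten formula~\eqref{eq:Wg}, and track the resulting index constraints. Writing the entries of $V$ as $V_{(a,b),i}$ with $a\in[d]$, $b\in[s]$, $i\in[d]$, a direct computation from \eqref{eq:def-Choi-matrix}, \eqref{eq:def-max-ent-state}, \eqref{eq:def-flip} and the definitions of $\lambda_{1,2,3}$ yields
\begin{equation*}
\lambda_1(\Phi_V) = \sum_{a,a',b} V_{(a,b),a}\,\overline{V_{(a',b),a'}}, \quad \lambda_2(\Phi_V) = \sum_{r,r',b} V_{(r,b),r'}\,\overline{V_{(r',b),r}}, \quad \lambda_3(\Phi_V) = \sum_{k,b} |V_{(k,b),k}|^2.
\end{equation*}
Raising to the $p$-th power and applying \eqref{eq:Wg} produces in each case a double sum over $(\sigma,\tau)\in S_p^2$ weighted by $\Wg_{ds}(\tau\sigma^{-1})$, with Kronecker deltas coming from matching row indices (of dimension $ds$) via $\sigma$ and column indices (of dimension $d$) via $\tau$.

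For $\lambda_1$, the row matchings read $a_n=a'_{\sigma(n)}$ and $b_n=b_{\sigma(n)}$, while the column matching gives $a_n=a'_{\tau(n)}$; combining these forces $a'$ to be constant on the cycles of $\tau\sigma^{-1}$ and $b$ on the cycles of $\sigma$, with $a$ determined by $a'$. The index sum equals $d^{\#(\tau\sigma^{-1})}s^{\#\sigma}$, the substitution $\beta:=\tau\sigma^{-1}$ factorizes the double sum, and the remaining $\sum_\sigma s^{\#\sigma}$ evaluates to $s(s+1)\cdots(s+p-1)$ by \cref{lem-weingarten}(1). For $\lambda_2$, the analogous constraints $r_n=r'_{\sigma(n)}$, $b_n=b_{\sigma(n)}$, $r'_n=r_{\tau(n)}$ imply $r_n=r_{\tau\sigma(n)}$, so $r$ is constant on cycles of $\tau\sigma$, giving the factor $d^{\#(\tau\sigma)}s^{\#\sigma}$; relabelling $(\sigma,\tau)\to(\alpha,\beta)$ and using that $\Wg_{ds}$ and $\#(\cdot)$ depend only on cycle type (hence are invariant under inversion) yields $\Wg_{ds}(\tau\sigma^{-1})=\Wg_{ds}(\alpha\beta^{-1})$ and $\#(\tau\sigma)=\#(\alpha\beta)$, which matches the statement. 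Finally, for $\lambda_3$, the constraints $k_n=k_{\sigma(n)}=k_{\tau(n)}$ and $b_n=b_{\sigma(n)}$ force $k$ to be constant on the join partition $\pi(\sigma)\vee\pi(\tau)$, producing the factor $d^{\#(\sigma\vee\tau)}s^{\#\sigma}$, and the same cycle-type invariance yields the third formula.

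The only genuinely delicate aspect is the bookkeeping of which permutation product ($\tau\sigma^{-1}$ vs.\ $\alpha\beta^{-1}$, $\tau\sigma$ vs.\ $\alpha\beta$, etc.) appears at each stage; all the identifications rest on the invariance of $\Wg_{ds}$ and of the number-of-cycles function under inversion (and, when needed, conjugation) of the argument, which depend only on the cycle type. Once these permutation manipulations are set, the three formulas drop out from the counting described above.
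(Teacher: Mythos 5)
Your proof is correct, and it is essentially the same computation as in the paper: the paper applies the graphical Weingarten calculus and counts loops in the diagrams of \cref{fig:lambda-1-2,fig:lambda-3,fig-weingarten-lambda_3}, while you carry out the equivalent index bookkeeping algebraically from formula \eqref{eq:Wg}, tracking the Kronecker-delta constraints on the $d$- and $s$-type indices; the resulting loop counts ($d^{\#(\tau\sigma^{-1})}$, $d^{\#(\tau\sigma)}$, $d^{\#(\sigma\vee\tau)}$, together with $s^{\#\sigma}$) are exactly those read off from the diagrams. Your explicit expressions for $\lambda_{1,2,3}(\Phi_V)$ in terms of the entries $V_{(a,b),i}$ check out, and the final relabelling to match the statement correctly invokes both inversion and conjugation invariance of the cycle type (the latter being what is needed for $\#(\beta\alpha)=\#(\alpha\beta)$).
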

\begin{proof}
    To obtain the three formulas in the statement, we apply the graphical Weingarten formula for the diagram consisting of $p$ (disconnected) copies of the diagram for $\lambda_{1,2,3}^{(d,s)}$ (see \cref{fig:lambda-1-2,fig:lambda-3}). In the case of $\lambda_{1}^{(d,s)}$, we obtain
    $$\mathbb{E}\Big[\big(\lambda_1^{(d,s)}\big)^p\Big] = \sum_{\alpha,\beta \in S_p} d^{\#(\alpha^{-1}\beta)} s^{\#\alpha} \Wg_{ds}( \alpha^{-1}\beta)= \sum_{\sigma,\pi \in S_p} d^{\#\pi} s^{\#\sigma} \Wg_{ds}(\pi),$$
    since there are precisely $\#(\alpha^{-1}\beta)$ loops of dimension $d$ and $\#\alpha$ loops of dimension $s$. We use \cref{lem-weingarten} to compute explicitly the sum over $\sigma$.

    Counting the loops in a similar fashion gives the formulas for $\lambda_{2}^{(d,s)}$. Regarding the integration of $(\lambda_{3}^{(d,s)})^p$, we can understand the diagram of $\lambda_{3}^{(d,s)}$ by joining the two separated lines representing the tensor component of dimension $d$ as in \cref{fig-weingarten-lambda_3}. Note that after the graphical calculus in the diagram for $\alpha,\beta\in S_p$, the upper (external) and lower (internal) lines contributes $\#\, \alpha$ and $\#\,\beta$ loops of dimension $d$, respectively. However, we have to ``join'' them because we identify the the two kinds of loops (thick lines in the figure). Consequently, all the blocks in $\alpha\vee \beta$ contribute the same loop in the final diagram, and hence there are exactly $\#(\alpha\vee \beta)$ loops of dimension $d$. The term $s^{\# \alpha}$ appears similarly with the case $\lambda_{1,2}^{(d,s)}$, and this concludes the proof.

    \begin{figure}[htb!] 
    \centering
    \includegraphics{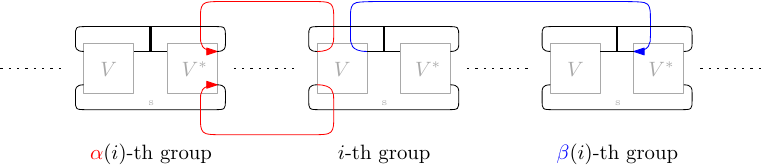}
    \caption{The way the permutations \textcolor{red}{$\alpha$} and \textcolor{blue}{$\beta$} connect the $V$ and the $V^*$ boxes in the diagram for $\mathbb{E}\Big[\big(\lambda_3^{(d,s)}\big)^p\Big]$. Note the thick vertical lines appearing in each diagram of $\lambda_3^{(d,s)}$ that are responsible for joining the otherwise disjoint \textcolor{red}{$\alpha$} and \textcolor{blue}{$\beta$} loops.}
    \label{fig-weingarten-lambda_3}
\end{figure}
\end{proof}

In the case $s=1$, the random variable $\lambda_1^{(d,1)}$ has an interesting interpretation. Indeed, in that case the random isometry $V$ is a Haar-distributed random unitary matrix $U \in \mathcal U(d)$, and thus 
     $$\lambda_1^{(d,s=1)} = |\Tr U|^2.$$
     Then, for any $p\leq d/2$, we have \cite[Theorem 2]{diaconis1994eigenvalues}
     $$\mathbb E \Big[ \big | \Tr U \big|^{2p} \Big] = p!.$$
In the general case. if we apply graphical Weingarten formula and use the fact that $\sigma\mapsto \Wg_{d}(\sigma)$ is the pseudo-inverse of $\sigma\mapsto d^{\#\sigma}$ in $\mathbb{C}[S_p]$, we can extend the same formula to all $d$, and in general we have
        $$\mathbb{E}\left[|\Tr U|^{2p}\right]=p!\sum_{\sigma\in S_p}d^{\# \sigma}\Wg_d(\sigma)=\sum_{\lambda\vdash p,\; l(\lambda)\leq d} d_{\lambda}^2,$$
where $d_{\lambda}$ is the dimension of the irreducible representation of $S_p$ associated to the partition $\lambda$, with the condition that the length of the diagram $\lambda$ is smaller than the dimension $d$.

\bigskip

We now prove the claims about the convergence of the random variables $\lambda_{1,2}^{(d,s)}$ from \cref{sec:conjugate-unitary-orthogonal-covariant}. 

\begin{proof}[Proof of \cref{prop-UUparaconv}]
    We start with the case where $s$ is fixed and we treat the $\lambda_1^{(d,s)}$ random variable. We have
    \begin{align*}
        \mathbb{E}\Big[\big(\lambda_1^{(d,s)}\big)^p\Big] &= s(s+1)\cdots (s+p-1)\sum_{\beta\in S_p} d^{\# \beta}{\Wg}_{ds}(\beta)\\
        &= s(s+1)\cdots (s+p-1)\sum_{\beta\in S_p} d^{p-|\beta|}(ds)^{-p-|\beta|}\Mob(\beta)(1+O((ds)^{-2}))\\
        &= s^{-p} \cdot s(s+1)\cdots (s+p-1) + O(d^{-1}).
    \end{align*}
    The assertion follows now from the observation that 
    $$s(s+1)\cdots (s+p-1)=\int_0^{\infty}\frac{x^{s+p-1}}{(s-1)!}e^{-x}\mathrm{d}x$$
    is the $p$-th moment of the Gamma distribution $\Gamma(s,1)$. 
    
    For the assertion regarding $\lambda_2^{(d,s)}$, compute
    \begin{align*}
        \mathbb{E}\left[(\lambda_2^{(d,s)})^p\right]&= \sum_{\alpha, \beta\in S_p} s^{\# \alpha} d^{\# (\alpha\beta)}{\Wg}_{ds}(\alpha\beta^{-1})\\
        &=\sum_{\alpha,\beta\in S_p}s^{p-|\alpha|}d^{p-|\alpha\beta|}(ds)^{-p-|\alpha\beta^{-1}|}(\Mob(\alpha\beta^{-1})+O((ds)^{-2}))\\
        &=\sum_{\substack{\alpha,\beta\in S_p\\ |\alpha\beta|=|\alpha\beta^{-1}|=0}}s^{-|\alpha|}+\sum_{\substack{\alpha,\beta\in S_p\\ |\alpha\beta|+|\alpha\beta^{-1}|=1}}s^{-|\alpha|-|\alpha\beta^{-1}|}d^{-1}\Mob(\alpha\beta^{-1})+O(d^{-2})\\
        &=\sum_{\substack{\alpha\in S_p\\\alpha^2 = \id}}s^{-|\alpha|}+\sum_{\substack{\alpha,\beta\in S_p\\ |\alpha\beta|=1,\,|\alpha\beta^{-1}|=0}}s^{-|\alpha|}d^{-1}- \sum_{\substack{\alpha,\beta\in S_p\\ |\alpha\beta|=0,\,|\alpha\beta^{-1}|=1}}s^{-|\alpha|-1}d^{-1}+O(d^{-2})\\
        &=\sum_{\substack{\alpha\in S_p\\\alpha^2 = \id}}s^{-|\alpha|}+O(d^{-2}).
    \end{align*}
    Here we used the fact that there is no pair $(\alpha,\beta)$ such that $|\alpha\beta|=1,|\alpha\beta^{-1}|=0$ or $|\alpha\beta|=0,|\alpha\beta^{-1}|=1$. Indeed, for both two cases we have $|\alpha^2|=1$ or equivalently, $\alpha^2$ is a transposition. However, this is impossible since ${\rm sgn}(\alpha^2)={\rm sgn}(\alpha)^2=1$.
    We have thus the moment limit
        $$\lim_{d\to \infty}\mathbb{E}[(s\lambda_2^{(d,s)})^p]=\sum_{\pi\in \mathcal{P}_{1,2}(p)}s^{\# \pi},$$
    where we have identified idempotent permutations $\alpha \in S_p$ with partitions $\pi=\pi(\alpha) \in \mathcal P(p)$ having only blocks of size $1$ or $2$. Recall now that the normal distribution $\mathcal{N}(s,s)$ has classical cumulants 
    $$c_n=\begin{cases}s \quad \text{ if $n=1,2$,}\\ 0 \quad \text{ otherwise.} \end{cases}$$
    Hence the moment limit above corresponds to the $p$-th moment of the normal distribution $\mathcal{N}(s,s)$ by the classical moment-cumulant formula, finishing the proof of the case $s$ fixed.

    \medskip

    Let us now prove the statement (2) of the proposition, regarding the convergence $\lambda_{1,2}^{(d,s)} \to 1$ as $d \to \infty$ and $s(d)\to \infty$. We start with the case of $\lambda_1^{(d,s)}$. By Lemma \ref{lem:moments-lambda-123} and \ref{lem-polycoeff}, we have
    \begin{align*}
        \mathbb{E}\left[(\lambda_1^{(d,s)})^p\right]&=s^p\sum_{k=0}^{\infty}a_{k,p}s^{-k}\sum_{\beta}d^{p-|\beta|}(ds)^{-p-|\beta|}(\Mob(\beta)+O((ds)^{-2}))\\
        &=\sum_{k=0}^{\infty}a_{k,p}s^{-k}\left(1-\frac{p(p-1)}{2}d^{-2}s^{-1}+O((ds)^{-2})\right)\\
        &=\sum_{k=0}^{\infty}a_{k,p}s^{-k}+O(d^{-2})\\
        &=\sum_{k=0}^{N-1}a_{k,p}s^{-k}+O(s^{-N})+O(d^{-2})
    \end{align*}
    for any integers $p,N\geq 0$. 
    Then by Lemma \ref{lem-binomformula}, we can further estimate
    \begin{align*}
        \mathbb{E}\left[(\lambda_1^{(d,s)}-1)^{2N}\right]&=\sum_{p=0}^{2N}(-1)^{2N-p}\mathbb{E}[(\lambda_1^{(d,s)})^p]\binom{2N}{p}\\
        &=\sum_{k=0}^{N-1}\sum_{p=0}^{2N}(-1)^{2N-p}a_{k,p}\binom{2N}{p}s^{-k}+O(s^{-N})+O(d^{-2})\\
        &=O(s^{-N})+O(d^{-2}).
    \end{align*}
    In particular, this implies $\|\lambda_1^{(d,s)}-1\|_{L^{2N}}\to 0$ for all integers $N\geq 1$, and hence $\|\lambda_1^{(d,s)}-1\|_{L^p}\to 0$ for all $p\in (0,\infty)$, whenever $s,d\to \infty$. Now assuming $s\gtrsim d^t$ for some $t$, we can choose $N$ such that $Nt\geq 2$ so that above estimation again implies
        $$\mathbb{E}\left[(\lambda_1^{(d,s)}-1)^{2N}\right]=O(d^{-2}).$$
    Therefore, we have for any $\epsilon>0$,
        $$\sum_{d=1}^{\infty}\mathbb P\left(|\lambda_1^{(d,s)}-1|\geq \epsilon\right)\leq \sum_{d=1}^{\infty}\frac{\mathbb{E}[(\lambda_1^{(d,s)}-1)^{2N}]}{\epsilon^{2N}}<\infty$$
    by the Markov inequality. Then the Borel-Cantelli lemma implies that $\lambda_1^{(d,s)}\to 1$ almost surely.

    The proof for $\lambda_2^{(d,s)}$ is identical to the one above, up to replacing the constants $a_{k,p}$ by the explicit constants $b_{k,p}$ from \cref{lem-polycoeff}.
    
    \medskip
    
    Finally, let us prove the statement (3) of \cref{prop-UUparaconv}. We show only the assertions for $\lambda_1^{(d,s)}$ since the one for $\lambda_2^{(d,s)}$ can be analogously shown. Lemma \ref{lem:moments-lambda-123} implies that for every $\epsilon>0$,
    $$\mathbb{P}(|d^{-t}\lambda_1^{(d,s)}|>\epsilon)\leq \frac{1}{d^{Nt}\epsilon^N}\mathbb{E}[(\lambda_1^{(d,s)})^{N}]=s^{1-p}(s+1)\cdots (s+N-1)O(d^{-Nt})=O(d^{-2}),$$
for any choice of large $N$ such that $Nt\geq 2$; this proves the claim in the regime $d \to \infty$. On the other hand, for the regime $s \to \infty$, Lemma \ref{lem:moments-lambda-123} again implies 
    $$\mathbb{E}[(\lambda_1^{(d,s)})^p]=1+\frac{p(p-1)}{2}(1-d^{-2})s^{-1}+O(s^{-2}).$$
Thus, we have
\begin{align*}
    \mathbb{P}(|d^{-t}(\lambda_1^{(d,s)}-1)|>\epsilon)&\leq \frac{1}{d^{4t}\epsilon^{4}}\mathbb{E}[(\lambda_1^{(d,s)}-1)^4]\\
    &=\sum_{p=0}^4 (-1)^{4-p}\left(1+\frac{p(p-1)}{2}(1-d^{-2}s^{-1})\right)+O(s^{-2})=O(s^{-2})
\end{align*}
by Lemma \ref{lem-binomformula}. Now the usual Borel-Cantelli arguments again show the almost sure convergence $d^{-t}(\lambda_1^{(d,s)}-1)\to 0$ both for the cases $d\to \infty$ and $s\to \infty$, and the proof is complete.
\end{proof}

\bigskip

We now show the convergence result for the random variable $\lambda_3^{(d,s)}$.

\begin{proof}[Proof of \cref{prop:convergence-lambda3}]
The proof proceeds by a moment computation using the formula for $\lambda_3^{(d,s)}$ from \cref{lem:moments-lambda-123}:

\begin{align*}
    \mathbb{E}\left[(\lambda_3^{(d,s)})^p\right]&=\sum_{\alpha, \beta\in S_p} s^{\# \alpha} d^{\# (\alpha\vee\beta)}\Wg_{ds}(\alpha\beta^{-1})= 1 + \sum_{\substack{\alpha, \beta\in S_p\\(\alpha,\beta) \neq (\id,\id)}} s^{\# \alpha} d^{\# (\alpha\vee\beta)}\Wg_{ds}(\alpha\beta^{-1})\\
    &=1 + \sum_{\substack{\alpha, \beta\in S_p\\(\alpha,\beta) \neq (\id,\id)}}s^{p-|\alpha|}d^{p-|\alpha\vee \beta|}(ds)^{-p-|\alpha\beta^{-1}|}(\Mob(\alpha\beta^{-1})+O((ds)^{-2}))\\
    &=1+\sum_{\substack{\alpha,\beta\in S_p\\(\alpha,\beta) \neq (\id,\id)\\ |\alpha\vee\beta|+|\alpha\beta^{-1}|=1}}s^{-|\alpha|-|\alpha\beta^{-1}|}d^{-1}\Mob(\alpha\beta^{-1})+O(d^{-2})\\
    &=1+\sum_{\substack{\alpha,\beta\in S_p\\(\alpha,\beta) \neq (\id,\id)\\ |\alpha\vee\beta|=1,\,|\alpha\beta^{-1}|=0}}s^{-|\alpha|}d^{-1}+O(d^{-2})\\
    &=1+\frac{p(p-1)}{2}(ds)^{-1}+O(d^{-2}).
\end{align*}
Using \cref{lem-binomformula}, we can show
    $$\mathbb{E}\left[(\lambda_3^{(d,s)}-1)^{4}\right]=O(d^{-2}).$$
The almost sure convergence $\lambda_3^{(d,s)}\to 1$ follows by the Markov inequality and the Borel-Cantelli lemma.
\end{proof}

\subsection{Matrix parameters}\label{app:matrix}

This section contains results about the matrix parameters $A,B,C$ which appear in the study of random CDUC, DUC, and DOC quantum channels from \cref{sec:DOC}.

We begin with the description of average moments. Note that for our analysis, only the asymptotic eigenvalue distribution and moments of the matrix $C$ is needed. We give, for the sake of completeness, the formulas for the three matrices. 

\begin{lemma} \label{lem-ranDOCmatrixmoment}
The average $p$th moment of the random matrices $A,B,C$ are given by
\begin{align*}
    \mathbb{E}[\Tr(A^p)]&=\sum_{\alpha,\beta\in S_p} s^{\# \alpha}d^{\#((\gamma^{-1}\alpha\gamma) \vee \beta)}{\rm Wg}_{ds}(\alpha\beta^{-1}),\\
    \mathbb{E}[\Tr(B^p)]&=\sum_{\alpha,\beta\in S_p} s^{\# \alpha}d^{\#((\gamma^{-1}\alpha) \vee (\gamma^{-1}\beta))}{\rm Wg}_{ds}(\alpha\beta^{-1}),\\
    \mathbb{E}[\Tr(C^p)]&=\sum_{\alpha,\beta\in S_p} s^{\# \alpha}d^{\#((\gamma^{-1}\alpha) \vee (\gamma^{-1}\beta^{-1}))}{\rm Wg}_{ds}(\alpha\beta^{-1}),
\end{align*}
where $\gamma=(p\;p-1\; \cdots \; 2 \; 1)\in S_p$ is the full cycle. 
\end{lemma}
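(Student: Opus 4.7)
The plan is to apply the graphical Weingarten calculus directly to each of the three trace moments. Writing
\[
\Tr(X^p) = \sum_{i_1, \ldots, i_p \in [d]} X_{i_1 i_2} X_{i_2 i_3} \cdots X_{i_p i_1}
\]
for $X \in \{A, B, C\}$ and substituting the defining entries from \cref{eq-DOCentry}, each trace unfolds into a cyclic product of $p$ copies of $V$ and $p$ copies of $\overline{V}$. Taking the expectation and applying the Weingarten formula for the Haar isometry $V : \C{d} \to \C{d} \otimes \C{s}$ (whose row dimension is $ds$ and column dimension is $d$) produces a double sum over $(\alpha, \beta) \in S_p \times S_p$ weighted by $\Wg_{ds}(\alpha \beta^{-1})$, in which $\alpha$ pairs the row indices of $V$ with those of $\overline{V}$ (living in the combined space $[s] \times [d]$) and $\beta$ pairs the column indices.

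Next I would count the free indices that survive these identifications. The $\vec k \in [s]^p$ indices are constrained only by $k_m = k_{\alpha(m)}$, contributing a factor $s^{\#\alpha}$, while the $\vec i \in [d]^p$ indices are simultaneously constrained by $\alpha$ (through the $[d]$-component of the row-index identification) and by $\beta$ (through the column-index identification), contributing a factor $d^{\#(\pi_1 \vee \pi_2)}$ for two partitions $\pi_1, \pi_2$ of $[p]$. The key observation that generates the full cycle $\gamma = (p\;p-1\;\cdots\;1)$ is that the trace structure identifies the column slot $m$ with the row slot $m+1 = \gamma^{-1}(m)$: whenever a permutation acts on a column-type slot, the resulting partition on $[p]$ appears conjugated by $\gamma^{\pm 1}$. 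The three matrices $A, B, C$ differ solely in the index pattern of their entries, $A_{ij} = \sum_k V^{(k)}_{ij}\overline{V^{(k)}_{ij}}$, $B_{ij} = \sum_k V^{(k)}_{ii}\overline{V^{(k)}_{jj}}$, $C_{ij} = \sum_k V^{(k)}_{ij}\overline{V^{(k)}_{ji}}$; each of these patterns determines a specific sequence of ``row'' vs.\ ``column'' slots for the Weingarten pairings, and hence produces a different combination of $\gamma$-conjugations in the two partitions being joined. Explicitly: for $A$, one side carries no conjugation while the other inherits a two-sided conjugation $\gamma^{-1}\cdot\gamma$, yielding $\gamma^{-1}\alpha\gamma \vee \beta$; for $B$ the diagonal entry pattern places both $\alpha$ and $\beta$ on row-type slots, giving the symmetric form $\gamma^{-1}\alpha \vee \gamma^{-1}\beta$; and for $C$ the swap of indices inside $\overline{V^{(k)}_{ji}}$ reverses the direction of the $\beta$-constraint, which, using the identity $\pi(g) = \pi(g^{-1})$, produces $\gamma^{-1}\alpha \vee \gamma^{-1}\beta^{-1}$.

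The main obstacle is the careful diagrammatic bookkeeping of which conjugation of $\alpha, \beta$ by $\gamma^{\pm 1}$ corresponds to each of the $4p$ index positions (one row and one column for each of the $2p$ tensors). This is most cleanly done by drawing the diagram of $X^p$ as a cyclic chain built from the building blocks in \cref{fig-DOCpara}, performing the Weingarten pairing by wiring $V$- and $\overline{V}$-boxes according to $\alpha$ (big side) and $\beta$ (small side), and reading off the number of loops in the large ($[ds]$) and small ($[d]$) spaces. Finally, the specific forms stated in the lemma reflect the convenient choice of representatives for the equivalence classes of permutations under the symmetries $\Wg_{ds}(g) = \Wg_{ds}(g^{-1}) = \Wg_{ds}(hgh^{-1})$ and $\pi(g) = \pi(g^{-1})$; these symmetries leave several cosmetically different but mathematically equivalent rewrites available, and the forms above are the ones that will be most convenient in the subsequent asymptotic analysis of \cref{prop-ranDOCmatrix,prop-ranDOCmatrix2,prop:DOC-entry-limit-fixed,prop:DOC-entry-limit}.
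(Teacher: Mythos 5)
Your proposal is correct and takes essentially the same approach as the paper: both expand $\Tr(X^p)$ as a tensor-network of $p$ copies of $V$ and $\overline{V}$, apply the (graphical) Weingarten formula, and read off the number of $s$-loops and $d$-loops to obtain the $s^{\#\alpha}$ and $d^{\#(\cdot\,\vee\,\cdot)}$ factors, with the full cycle $\gamma$ entering because the trace contraction ties the column slot of the $m$-th factor to the row slot of the $(m+1)$-th. You also correctly flag that the exact placement of the $\gamma^{\pm1}$-conjugations on $\alpha$ versus $\beta$ in the three formulas is sensitive to the orientation of the trace and the $V^*$-versus-$\overline{V}$ convention, and that the symmetries $\Wg(g)=\Wg(g^{-1})=\Wg(hgh^{-1})$ and $\pi(g)=\pi(g^{-1})$ allow equivalent rewrites, which is exactly what the paper's diagrammatic proof (spelled out only for $C$) relies on.
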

\begin{proof}
We shall prove the result only for the matrix $C$, since it is the only result we shall use later. The proofs for $A,B$ are similar and we leave them to the reader. 

To show the third formula above, we need to compute the expectation of the diagram in \cref{fig:E-Tr-C-p}, which is obtained by tracing the product of $p$ copies of the diagram for $C$ from \cref{fig-DOCpara}.

\begin{figure}[htb!] 
    \centering
    \includegraphics{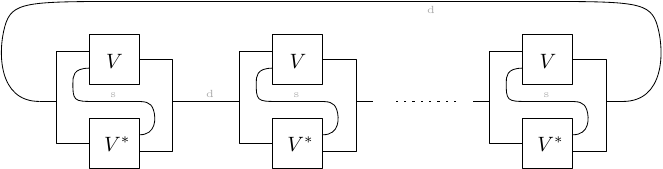}
    \caption{The diagram that corresponds to the quantity $\Tr(C^p)$.}
    \label{fig:E-Tr-C-p}
\end{figure}

To compute the expectation value of the diagram, we use the graphical Weingarten formula from \cite{collins2010randoma,CN16}. We obtain a sum indexed by a pair of permutations $\alpha, \beta \in S_p$ of diagrams $\mathcal D_{\alpha, \beta}$, weighted by the Weingarten function $\Wg_{ds}(\alpha\beta^{-1})$. Each diagram $\mathcal D_{\alpha, \beta}$ in the sum will be a collection of loops corresponding to the $\C{d}$ and $\C{s}$ vector spaces. Our goal is to count how many loops of each type appear in the diagram $\mathcal D_{\alpha, \beta}$. We refer to \cref{fig:D-alpha-beta-C-p} for the construction of the diagram $\mathcal D_{\alpha, \beta}$, obtained by erasing the $V$ and the $V^*$ boxes, and connecting the loose wires using $\alpha$ for the left side of the $V$ boxes and $\beta$ for the right side of the $V$ boxes. Note that in the Weingarten formula \eqref{eq:Wg}, the indices of $V$ and $\overline{V}$ are matched; in our setting, we connect to the opposite sides of $V^* = \overline{V}^\top$.

\begin{figure}[htb!] 
    \centering
    \includegraphics{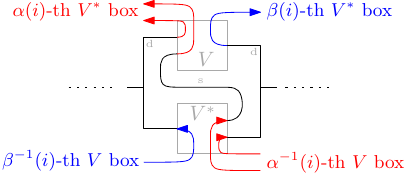}
    \caption{The way the $i$-th $V$ and $V^*$ boxes are connected, using the permutations $\textcolor{red}{\alpha},\textcolor{blue}{\beta} \in S_p$, to the rest of the diagram.}
    \label{fig:D-alpha-beta-C-p}
\end{figure}

The number of loops of type $s$ in $\mathcal D_{\alpha, \beta}$ is easily computed: only the permutation $\alpha$ is involved in connecting the $s$-type wires, and it is easy to see that each cycle of $\alpha$ yields a connected component (loop) in the end. We have thus a contribution of $s^{\#\alpha}$. The computation of the loops of size $d$ is more complicated, because of the presence of the ``vertical'' connection between the $i$-th $V$ and $V^*$ boxes. In the absence of this vertical connection, each $V$ box would have been connected to the previous $V^*$ box, resulting in a total number of $\#(\gamma^{-1}\alpha) +\#(\gamma^{-1}\beta^{-1})$ $d$-loops. The vertical wire connects some of the resulting loops, yielding the announced number of connected components (loops).  
\end{proof}

We now prove the formulas for the limiting eigenvalue distributions of the random matrix $C$.

\begin{proof}[Proof of \cref{prop-ranDOCmatrix}]
First of all, suppose $s/d\to 0$ as $d\to \infty$. Our method is to show that every mean $p$-th moment of $(ds)^{1/2}C$ converges to that of the semicircular distribution $SC_{0,1}$, and its variance converges to 0. First, by applying Lemma \ref{lem-ranDOCmatrixmoment},
{\small
\begin{align*}
    \frac{1}{d}\mathbb{E}\left[\Tr((ds)^{p/2}C^p)\right]&=\sum_{\alpha,\beta\in S_p}s^{p/2-|\alpha|-|\alpha\beta^{-1}|}d^{p/2-1-|\gamma^{-1}\alpha \vee \gamma^{-1}\beta^{-1}|-|\alpha\beta^{-1}|}\left(\Mob(\alpha\beta^{-1})+O((ds)^{-2})\right)\\
    &=\Bigg[\sum_{|\alpha|+|\alpha\beta^{-1}|\leq p/2} \left(\frac{s}{d}\right)^{p/2-|\alpha|-|\alpha\beta^{-1}|}d^{p-1-|\alpha|-|\gamma^{-1}\alpha \vee \gamma^{-1}\beta^{-1}|-2|\alpha\beta^{-1}|}\\
    &\!\!\!\!\!\!\!\!\!\!+ \sum_{|\alpha|+|\alpha\beta^{-1}|> p/2}s^{p/2-|\alpha|-|\alpha\beta^{-1}|}d^{p/2-1-|\gamma^{-1}\alpha \vee \gamma^{-1}\beta^{-1}|-|\alpha\beta^{-1}|}\Bigg]\left(\Mob(\alpha\beta^{-1})+O((ds)^{-2})\right)\\
\end{align*}
}
Note that by \cref{lem:permutation-optimization},
    $$\frac{p}{2}-1-|\gamma^{-1}\alpha \vee \gamma^{-1}\beta^{-1}|-|\alpha\beta^{-1}|\leq \frac{p}{2}-1-\frac{1}{2}|\gamma^{-2}|=\begin{cases}0 & \text{if $p$ is even,}\\ -1/2 & \text {if $p$ is odd,} \end{cases}$$
and when $p$ is even, the equality condition further implies $\alpha=\beta\in NC_{2}(p)$ (\cref{eq-perm-opt1}) which contradicts the condition $|\alpha|>p/2$. Therefore, the second sum becomes $O(s^{-1/2}d^{-1/2})$ regardless of $p$. Moreover, regarding the first sum,
    $$p-1-|\alpha|-|\gamma^{-1}\alpha \vee \gamma^{-1}\beta^{-1}|-2|\alpha\beta^{-1}|\leq p-1-|\gamma|=0,$$
and the equality holds iff $\alpha=\beta\in S_{NC}(\gamma)\cong NC_{1,2}(p)$. Therefore, we can further estimate
\begin{align*}
    \frac{1}{d}\mathbb{E}\left[\Tr((ds)^{p/2}C^p)\right]&=\sum_{\alpha\in NC_{1,2}(p)}\left(\frac{s}{d}\right)^{p/2-|\alpha|}+O(d^{-2})+O((ds)^{-1/2})\\
    &\to \begin{cases} \sum_{\alpha\in NC_2(p)}1={\rm Cat}_{p/2} & \text{if $p$ is even,}\\ 0  & \text{if $p$ is odd,}     
    \end{cases}
\end{align*}
as $d\to \infty$. On the other hand, if we define $\tilde{\gamma}:=(2p\,2p-1\;\ldots\; p+1)(p\;p-1\;\ldots\; 2\;1)\in S_{2p}$ and apply the Weingarten calculus, we have
    $$\mathbb{E}\left[\frac{1}{d^2}\left(\Tr((ds)^{p/2}C^p)\right)^2\right]=\sum_{\alpha,\beta\in S_{2p}}s^{p-|\alpha|-|\alpha\beta^{-1}|}d^{p-2-|\tilde{\gamma}^{-1}\alpha \vee \tilde{\gamma}^{-1}\beta^{-1}|-|\alpha\beta^{-1}|}\left(\Mob(\alpha\beta^{-1})+O((ds)^{-2})\right).$$
By repeating the prior argument, we can show that
\begin{align*}
    \mathbb{E}\left[\frac{1}{d^2}\left(\Tr((ds)^{p/2}C^p)\right)^2\right]&=\sum_{\alpha\in S_{NC_{1,2}}(\tilde{\gamma})}\left(\frac{s}{d}\right)^{p-|\alpha|}+O(d^{-2})+O((ds)^{-1})\\
    &=\left(\sum_{\pi\in NC_{1,2}(p)}\left(\frac{s}{d}\right)^{p/2-|\pi|}\right)^2+O(d^{-1}),
\end{align*}
where the last equality follows from $S_{NC_{1,2}}(\tilde{\gamma})\cong N_{1,2}(p)\times N_{1,2}(p)$ (\cref{eq-NC12-bij}). Therefore, we obtain
    $$\mathbb{E}\left[\left(\frac{1}{d}\Tr((ds)^{p/2}C^p)\right)^2\right]-\left(\frac{1}{d}\mathbb{E}\left[\Tr((ds)^{p/2}C^p)\right]\right)^2=O(d^{-1/2})\to 0$$
for all $p$, hence proving that for every $p\geq 1$,
    $$\frac{1}{d}\Tr((ds)^{p/2}C^p) \to \text{$p$-th moment of $SC_{0,1}$}$$
in probability.
    
Now let us show that the convergence above holds almost surely for $p=1,2,3$. First, note that \cref{eq-C-lambda3} and \cref{prop:convergence-lambda3} implies that
    $$\frac{1}{d}\Tr((ds)^{1/2}C)=\left(\frac{s}{d}\right)^{1/2}\lambda_3^{(d,s)}\to 0$$
almost surely, regardless of the scale of $s$. Furthermore, by combining the previous arguments with \cref{lem:permutation-optimization2}, we have that
\begin{align*}
    \frac{1}{d}\mathbb{E}\left[\Tr((ds)^{p/2}C^p)\right]&=\sum_{\alpha\in NC_{1,2}(p)}\left(\frac{s}{d}\right)^{p/2-|\alpha|}+O(d^{-3/2}),\\
    \mathbb{E}\left[\frac{1}{d^2}\left(\Tr((ds)^{p/2}C^p)\right)^2\right]&=\left(\sum_{\alpha\in NC_{1,2}(p)}\left(\frac{s}{d}\right)^{p/2-|\alpha|}\right)^2+O(d^{-2})
\end{align*}
hold for $p=2,3$. Therefore, the usual Borel-Cantelli technique shows the wanted almost sure convergences. 

\medskip

The remaining convergences can be obtained in analogous way. For the regime $s\sim cd$ for $c>0$, \cref{lem-ranDOCmatrixmoment} and \ref{lem:permutation-optimization} again imply
\begin{align*}
    \frac{1}{d}\mathbb{E}\left[\Tr(s^pC^p)\right]& = \sum_{\alpha,\beta\in S_p}\left(\frac{s}{d}\right)^{\# \alpha-|\alpha\beta^{-1}|}d^{p-1-|\alpha|-|\gamma^{-1}\alpha \vee \gamma^{-1}\beta^{-1}|-2|\alpha\beta^{-1}|}\left(\Mob(\alpha\beta^{-1})+O(d^{-4})\right)\\
    &=\sum_{\pi\in NC_{1,2}(p)}\left(\frac{s}{d}\right)^{\# \pi}+O(d^{-2})\\
    &\to \sum_{\pi\in NC_{1,2}(p)}c^{\# \pi} \text{ as $d\to \infty$},
\end{align*}
\begin{align*}
    \mathbb{E}\left[\left(\frac{1}{d}\Tr(s^pC^p)\right)^2\right]&= \sum_{\alpha,\beta\in S_{2p}}\left(\frac{s}{d}\right)^{\# \alpha-|\alpha\beta^{-1}|}d^{2p-2-|\alpha|-|\tilde{\gamma}^{-1}\alpha \vee \tilde{\gamma}^{-1}\beta^{-1}|-2|\alpha\beta^{-1}|}\left(\Mob(\alpha\beta^{-1})+O(d^{-4})\right)\\
    &=\left(\sum_{\pi\in NC_{1,2}(p)}\left(\frac{s}{d}\right)^{\# \pi}\right)^2+O(d^{-2}),
\end{align*}
and therefore,
    $$\mathbb{E}\left[\left(\frac{1}{d}\Tr((ds)^{p/2}C^p)\right)^2\right]-\left(\frac{1}{d}\mathbb{E}\left[\Tr((ds)^{p/2}C^p)\right]\right)^2=O(d^{-2}).$$
This shows that $\frac{1}{d}\Tr(s^pC^p)\to \sum_{\pi\in NC_{1,2}(p)}c^{\# \pi}$ almost surely for every $p\geq 0$. Note that this limit value is the $p$-th moment of the semicircular distribution ${\rm SC}_{c,c}$.

\medskip

Fianlly, in the regime $d\to \infty$ and $s/d\to \infty$, we similarly have
\begin{align}
    \frac{1}{d}\mathbb{E}\left[\Tr(d^pC^p)\right]&= \sum_{\alpha,\beta\in S_p}\left(\frac{s}{d}\right)^{-|\alpha|-|\alpha\beta^{-1}|}d^{p-1-|\alpha|-|\gamma^{-1}\alpha \vee \gamma^{-1}\beta^{-1}|-2|\alpha\beta^{-1}|}\left(\Mob(\alpha\beta^{-1})+O((ds)^{-2})\right) \nonumber \\
    &=\sum_{\alpha\in NC_{1,2}(p)}\left(\frac{s}{d}\right)^{-|\alpha|}+O((ds)^{-1}) \label{eq-C-eigenvalue}\\
    &\to 1 \text{ as $d\to \infty$} \nonumber,
\end{align}
\begin{align*}
    \mathbb{E}\left[\left(\frac{1}{d}\Tr(d^pC^p)\right)^2\right]&= \sum_{\alpha,\beta\in S_{2p}}\left(\frac{s}{d}\right)^{-|\alpha|-|\alpha\beta^{-1}|}d^{2p-2-|\alpha|-|\tilde{\gamma}^{-1}\alpha \vee \tilde{\gamma}^{-1}\beta^{-1}|-2|\alpha\beta^{-1}|}\left(\Mob(\alpha\beta^{-1})+O((ds)^{-2})\right)\\
    &=\left(\sum_{\pi\in NC_{1,2}(p)}\left(\frac{s}{d}\right)^{-|\pi|}\right)^2+O(d^{-2}),
\end{align*}
and therefore,
    $$\mathbb{E}\left[\left(\frac{1}{d}\Tr(d^pC^p)\right)^2\right]-\left(\frac{1}{d}\mathbb{E}\left[\Tr(d^pC^p)\right]\right)^2=O(d^{-2}).$$
\end{proof}

\begin{proof}[Proof of \cref{prop-ranDOCmatrix2}]
By the Borel-Cantelli argument, it suffices to show that for any $\epsilon>0$,
    $$\mathbb{P}\left(\|dC-I_d\|_{\infty}>\epsilon\right)=O(d^{-(1+t)}).$$
Let us begin with the mean moment formula \cref{eq-C-eigenvalue} of the random matrix $dC$. Applying \cref{lem-polycoeff} (3), we further obtain
\begin{align*}
    \frac{1}{d}\mathbb{E}\left[\Tr(d^pC^p)\right]    &=\sum_{k=0}^{\infty}c_{k,p}\left(\frac{s}{d}\right)^{-k}+O((ds)^{-1})\\
    &=\sum_{k=0}^{N-1}c_{k,p}\left(\frac{s}{d}\right)^{-k}+O\left((s/d)^{-N}\right)+O((ds)^{-1})
\end{align*}
For arbitrary $N\geq 1$, where $c_{k,p}={\rm Cat}_k\binom{p}{2k}$. If we choose $N$ such that $Nt\geq 2+t$, then
    $$\mathbb{E}\left[\Tr(d^pC^p)\right]=\sum_{k=0}^{N-1}c_{k,p}d\left(\frac{s}{d}\right)^{-k}+O(d^{-(1+t)}),$$
Therefore, \cref{lem-binomformula} implies that
\begin{align*}
    \mathbb{E}\left[\Tr ((dC-I_d)^{2N})\right]&=\sum_{p=0}^{2N}(-1)^{2N-p}\mathbb{E}\left[\Tr (dC)^p\right]\binom{2N}{p}\\
    &=\sum_{k=0}^{N-1}d\left(\frac{s}{d}\right)^{-k}\sum_{p=0}^{2N}(-1)^{2N-p}c_{k,p}\binom{2N}{p}+O(d^{-(1+t)})\\
    &=O(d^{-(1+t)}).
\end{align*}
Consequently, we obtain by Markov's inequality and the matrix norms $\|\cdot\|_{\infty}\leq \|\cdot\|_{2N}$,
\begin{align*}
    \mathbb{P}(\|dC-I_d\|_{\infty}>\epsilon)\leq \mathbb{P}\left(\Tr((dC-I_d)^{2N})>\epsilon^{2N}\right)\leq \frac{\mathbb{E}\left[\Tr((dC-I_d)^{2N})\right]}{\epsilon^{2N}}=O(d^{-(1+t)}).
\end{align*}
\end{proof}

Let us now consider the matrix elements of the three parameters $A,B,C$. We first compute the moments. 

\begin{lemma} \label{lem-ranDOCentry}
For $i,j$, the $p$th moment of the entry $A_{ij}$ is
    $$\mathbb{E}\left[(A_{ij})^p\right]=\frac{s(s+1)\cdots (s+p-1)}{ds(ds+1)\cdots (ds+p-1)}.$$
Moreover, for $i\neq j$, the $p$th moment of the values $|B_{ij}|^2$ and $|C_{ij}|^2$ are
    $$\mathbb{E}\left[|B_{ij}|^{2p}\right]= \mathbb{E}\left[|C_{ij}|^{2p}\right]
    =p!s(s+1)\cdots (s+p-1)\sum_{\beta\in \mathcal{S}} {\rm Wg}_{ds}(\beta),$$
where $\mathcal{S}=\set{\alpha\in S_{2p}: \alpha(i)\equiv i\;{\rm mod}\, 2 \text{ for all } i}$.
In particular, $|B_{ij}|$ and $|C_{ij}|$ have the same distribution.
\end{lemma}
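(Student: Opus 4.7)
The plan is to apply the graphical Weingarten formula \eqref{eq:Wg} directly to each of the three moment expansions, exploiting the block structure of the Haar-random isometry $V : \C{d} \to \C{d}\otimes \C{s}$: I will index its rows by pairs $(a,k)\in[d]\times[s]$ so that $V^{(k)}_{aj} = V_{(a,k),j}$, which rewrites \eqref{eq-DOCentry} as $A_{ij}=\sum_k |V_{(i,k),j}|^2$, $B_{ij}=\sum_k V_{(i,k),i}\overline{V_{(j,k),j}}$, and $C_{ij}=\sum_k V_{(i,k),j}\overline{V_{(j,k),i}}$.

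For $\mathbb{E}[A_{ij}^p]$ the computation will be nearly immediate: expanding and applying \eqref{eq:Wg}, every $V$ and every $\overline V$ carries the column index $j$, so the column Kronecker deltas are satisfied for every $\tau \in S_p$ and the row deltas force $k_r = k_{\sigma(r)}$; the two sums then decouple as
$$\mathbb{E}[A_{ij}^p] = \Bigl(\sum_{\sigma\in S_p} s^{\#\sigma}\Bigr)\Bigl(\sum_{\beta\in S_p} \Wg_{ds}(\beta)\Bigr),$$
and the announced formula follows from Lemma \ref{lem-weingarten}.

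For $|B_{ij}|^{2p}$ with $i \neq j$ I will expand as a sum over $(k_1,\ldots,k_p,k_1',\ldots,k_p')\in[s]^{2p}$ of a product of $2p$ copies of $V$ and $2p$ of $\overline V$, labelled by position $r\in[2p]$ so that odd positions $r=2\rho-1$ carry $V$ with row $(i,k_\rho)$, column $i$ and $\overline V$ with row $(j,k_\rho)$, column $j$, while even positions $r=2\rho$ carry $V$ with row $(j,k_\rho')$, column $j$ and $\overline V$ with row $(i,k_\rho')$, column $i$. In \eqref{eq:Wg} the column values alternate $i,j$ on the $V$ side and $j,i$ on the $\overline V$ side, so the column delta forces $\tau$ to swap parity; using $i\neq j$, the row delta likewise forces $\sigma$ to swap parity. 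Consequently $\beta:=\tau\sigma^{-1}$ preserves parity, i.e.\ $\beta\in\mathcal{S}$, and conversely every $\beta\in\mathcal{S}$ arises this way.

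Next I will parametrize a parity-swapping $\sigma$ by a pair $(\alpha,\gamma)\in S_p\times S_p$ via $\sigma(2\rho-1)=2\alpha(\rho)$ and $\sigma(2\rho)=2\gamma(\rho)-1$. The row constraints will then become $k_\rho = k'_{\alpha(\rho)}$ and $k'_\rho = k_{\gamma(\rho)}$, which are jointly compatible iff $k$ is constant on the cycles of $\gamma\alpha$ (and then $k'$ is determined by $k$), giving $s^{\#(\gamma\alpha)}$ valid $(\vec k,\vec k')$. Since for every $\eta\in S_p$ there are exactly $p!$ pairs $(\alpha,\gamma)$ with $\gamma\alpha=\eta$, Lemma \ref{lem-weingarten}(1) will yield $\sum_{\sigma}s^{\#(\gamma\alpha)} = p!\sum_{\eta}s^{\#\eta} = p!\,s(s+1)\cdots(s+p-1)$, a factor independent of $\beta$ which factors out of the $\beta$-sum to produce the claimed formula for $\mathbb{E}[|B_{ij}|^{2p}]$. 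Exactly the same parity analysis and bookkeeping will apply verbatim to $|C_{ij}|^{2p}$—the row and column deltas have an identical form, with only a harmless permutation of the roles of $i$ and $j$ inside each box—so the moment formulas of $|B_{ij}|$ and $|C_{ij}|$ will agree at every order, proving equidistribution. The main delicate point will be the parity bookkeeping: one must verify that $i\neq j$ kills all same-parity contributions in both row and column deltas, and that the reparametrization $\sigma\leftrightarrow(\alpha,\gamma)$ is precisely what decouples the $k$-sum from the $\beta$-sum.
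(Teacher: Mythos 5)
Your proof is correct and follows essentially the same route as the paper's: apply the Weingarten formula \eqref{eq:Wg}, observe that the row and column Kronecker deltas force both $\sigma$ and $\tau$ to swap parity (so $\beta=\tau\sigma^{-1}$ preserves parity, i.e.\ $\beta\in\mathcal S$), and collapse the $\sigma$-sum to $p!\sum_{\eta\in S_p}s^{\#\eta}=p!\,s(s+1)\cdots(s+p-1)$ via a $p!$-to-$1$ reparametrization. Your $(\alpha,\gamma)$-encoding of a parity-swapping $\sigma\in S_{2p}$, with the resulting $k$-constraint $k_\rho=k_{(\gamma\alpha)(\rho)}$, is exactly equivalent to the paper's map $\varphi(\alpha)=\alpha^2|_{\{1,3,\ldots,2p-1\}}$, which is shown there to be $p!$-to-$1$ and cycle-count preserving; the two bookkeepings are interchangeable.

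One small point to tighten: equality of all moments does not by itself give equality of the distributions of $|B_{ij}|$ and $|C_{ij}|$; you also need a moment-determinacy argument. The paper invokes Carleman's condition for this last step; alternatively, it is enough to note that $|B_{ij}|$ and $|C_{ij}|$ are bounded random variables (since $V$ is an isometry, each entry has modulus at most $1$), and bounded laws are determined by their moments.
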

\begin{proof}
\begin{figure}[htb!] 
    \centering
    \includegraphics{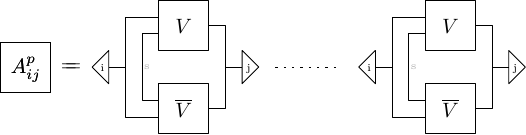}\\\bigskip
    \includegraphics{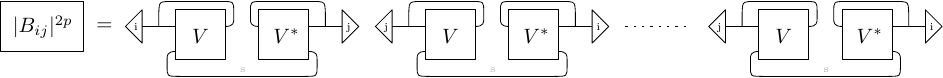}\\\bigskip
    \includegraphics{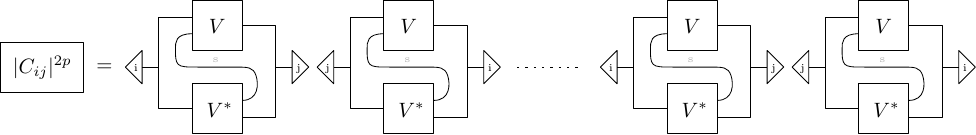}
    \caption{Diagrams for the for the moments of the entries of the random matrices $A,B,C$. The first diagram contains $p$ connected sub-diagrams, while the ones for $B$ and $C$ contain $2p$ connected sub-diagrams.}
    \label{fig-DOCentry}
\end{figure}

(Figure \ref{fig-DOCentry}) By Weingarten calculus and Lemma \ref{lem-weingarten}, we have
    $$\mathbb{E}\left[(A_{ij})^p\right]= \sum_{\alpha,\beta\in S_p}s^{\# \alpha} {\rm Wg}_{ds}(\alpha\beta^{-1})=\sum_{\alpha\in S_p}s^{\# \alpha}\sum_{\beta\in S_p} {\rm Wg}_{ds}(\beta)=\frac{s(s+1)\cdots (s+p-1)}{ds(ds+1)\cdots (ds+p-1)}.$$
Next, since $B_{ij}=\overline{B_{ji}}$, we have
    $$\mathbb{E}\left[|B_{ij}|^{2p}\right]= \mathbb{E}\left[(B_{ij}B_{ji})^p\right]=\sum_{\alpha,\beta\in \mathcal{T}}s^{\# \alpha} {\rm Wg}_{ds}(\alpha\beta^{-1})=\sum_{\alpha\in \mathcal{T}}s^{\# \alpha} \sum_{\beta\in \mathcal{S}} {\rm Wg}_{ds}(\beta),$$
where $\mathcal{T}=\set{\alpha\in S_{2p}: \alpha(i)\equiv i+1\;{\rm mod}\, 2 \text{ for all } i}$. Note the the last equality follows from the observation that for any $\alpha\in \mathcal{T}$, the map $\beta\in \mathcal{T}\mapsto \alpha\beta^{-1}\in \mathcal{S}$ is a bijection. We claim that the map 
    $$\varphi: \alpha\in \mathcal{T}\mapsto \alpha^2\big|_{\{1,3,\ldots, 2p-1\}}\in S(\{1,3,\ldots, 2p-1\})$$
is well-defined $p!$-to-$1$ surjection and $\#(\varphi(\alpha))=\# \alpha$. Indeed, the fact that the map $\varphi$ is well-defined is clear from the condition $\alpha(i)\equiv i+1\; {\rm mod\,} 2$. Moreover, for arbitrary $\sigma\in S(\{1,3,\ldots, 2p-1\})$ and a pair partition $\pi\in \mathcal{P}_2(2p)$ with $\pi(i)\equiv i+1 \; {\rm mod\,}2$ for all $i$, we can construct a unique $\alpha_{\sigma,\pi} \in \mathcal{T}$ satisfying
    $$\alpha_{\sigma,\pi}(2i-1)=\pi(2i-1), \quad \alpha_{\sigma,\pi}(\pi(2i-1))=\sigma(2i-1)$$
so that $\varphi(\alpha_{\sigma,\pi})=\sigma$. One can easily check
    $$\varphi^{-1}(\sigma)=\{\alpha_{\sigma,\pi}: \text{$\pi\in \mathcal{P}_2(2p)$ and $\pi(i)\equiv i+1 \; {\rm mod\,}2$ for all $i$} \}$$
and $\# (\alpha_{\sigma,\pi})=\# \sigma$ regardless of the choice of $\pi$, and therefore the claim is shown.
Now from the claim and \cref{lem-weingarten}, we obtain that
\begin{align*}
    \sum_{\alpha\in \mathcal{T}}s^{\# \alpha}=\sum_{\alpha\in \mathcal{T}}s^{\# \varphi(\alpha)}=p!\sum_{\sigma\in S_p} s^{\# \sigma}=p!s(s+1)\cdots (s+p-1),
\end{align*}
hence proving the moment formula $\mathbb{E}[|B_{ij}|^{2p}]$.

The moment $\mathbb{E}[|C_{ij}|^{2p}]$ is calculated by the same method. These moments satisfy Carleman's condition \cite[Chapter 2, Addenda and Problems, 11]{akhiezer2020classical}, so the distributions of $|B_{ij}|$ and $|C_{ij}|$ are the same by uniqueness of the moments.
\end{proof}

This allows us to prove the limit result for the entries of $A,B,C$.

\begin{proof}[Proof of \cref{prop:DOC-entry-limit-fixed}]
(1) and (2) is clear from
\begin{align*}
    \lim_{d\to \infty} \mathbb{E}[(dsA_{ij})^p]&=\lim_{d\to\infty}\frac{(ds)^p s(s+1)\cdots (s+p-1)}{ds(ds+1)\cdots (ds+p-1)}= s(s+1)\cdots (s+p-1),\\
    \lim_{d\to \infty} \mathbb{E}[|dsB_{ij}|^{2p}]=\lim_{d\to \infty} \mathbb{E}[|dsC_{ij}|^{2p}]
    &=\lim_{d\to \infty} p!(ds)^p s(s+1)\cdots (s+p-1)(ds)^{-p}(1+O((ds)^{-1}))\\
    &=p!s(s+1)\cdots (s+p-1),
\end{align*}
by \cref{lem-ranDOCentry}, and the fact that $s(s+1)\cdots (s+p-1)$ and $p!$ are $p$-th moment of the Gamma distribution $\Gamma(s,1)$ and the exponential distribution $\mathsf{Exp}(1)$, as noted in the proof of \cref{prop-UUparaconv}.

For the proof of (3), let us recall from \cref{eq-DOCentry} that the random variables
    $$A_{i_nj_n}=\sum_{k=1}^s |V_{i_nj_n}^{(k)}|^2, \quad A_{j_ni_n}=\sum_{k=1}^s |V_{i_nj_n}^{(k)}|^2, \quad B_{i_nj_n}=\sum_{k=1}^s V_{i_ni_n}^{(k)}\overline{V_{j_nj_n}^{(k)}},$$
for $n=1,\ldots, N$, are functions of the $2s\times 2N$-submatrix $V^{(2s\times 2N)}$ of $V$, where $V^{(2s\times 2N)}$ can be written as a block matrix $\left(V_{i_1 j_1   }^{(2s\times 2)}\; V_{i_2 j_2   }^{(2s\times 2)}\; \cdots \; V_{i_N j_N}^{(2s\times 2)}\right)$ for
    $$V_{ij}^{(2s\times 2)}:=\begin{pmatrix}
    V_{ii}^{(1)} & V_{ij}^{(1)} \\ \vdots & \vdots \\ V_{ii}^{(s)} & V_{ij}^{(s)} \\ V_{ji}^{(1)} & V_{jj}^{(1)} \\ \vdots & \vdots \\ V_{ji}^{(s)} & V_{jj}^{(s)} \end{pmatrix}.$$
However, it is known \cite{PR04, MT07} that if $s$ and $N$ are fixed, the random matrix $\sqrt{ds}V^{(2s\times 2N)}$ converges in distribution to the $(2s\times 2N)$ Ginibre ensemble whose entries are standard normal random variable as $d\to\infty$. In particular, the entries of $\sqrt{ds}V^{(2s\times 2N)}$ are asymptotically independent. Now the assertion follows from the observation that all the summands defining $\{A_{i_nj_n},A_{j_ni_n},B_{i_nj_n}\}_{n=1}^N$ are distinct.
\end{proof}

\begin{proof}[Proof of \cref{prop:DOC-entry-limit}]
(1) The proof is similar with that of \cref{prop-UUparaconv} (2). Applying Lemma \ref{lem-ranDOCentry} and Lemma \ref{lem-polycoeff},
\begin{align*}
    \mathbb{E}\left[(dA_{ij})^p\right]
    &=\frac{d^ps(s+1)\cdots (s+p-1)}{ds(ds+1)\cdots (ds+p-1)}\\
    &=\left(\sum_{k=0}^{\infty}a_{k,p}s^{-k}\right)\left(1+\sum_{k=1}^{\infty}a_{k,p}(ds)^{-k}\right)^{-1}\\
    &=\left(\sum_{k=0}^{\infty}a_{k,p}s^{-k}\right) \left(1-a_{1,p} (ds)^{-1}+(a_{1,p}^2-a_{2,p})(ds)^{-2}+O((ds)^{-3})\right)
\end{align*}
for $p\geq 0$. If we choose large $N$ such that $Nt\geq 3+t$, then we can further proceed
    $$\mathbb{E}\left[(dA_{ij})^p\right]=\sum_{k=0}^{N-1}s^{-k}a_{k,p}\left(1-a_{1,p}(ds)^{-1}+(a_{1,p}^2-a_{2,p})(ds)^{-2}\right)+O(d^{-(3+t)}).$$
Therefore, Lemma \ref{lem-binomformula} implies that
\begin{align*}
    \mathbb{E}\left[(dA_{ij}-1)^{2N+4}\right]&=\sum_{p=0}^{2N+4}(-1)^{2N+4-p}\mathbb{E}[(dA_{ij})^p]\binom{2N+4}{p}\\
    &=\sum_{k=0}^{N-1}s^{-k}\sum_{p=0}^{2N+4}(-1)^{2N+4-p}a_{k,p}F(p)\binom{2N+4}{p}+O(d^{-(3+t)})\\
    &=O(d^{-(3+t)}),
\end{align*}
for all $i,j$, where $F(p)=1-a_{1,p}(ds)^{-1}+(a_{1,p}^2-a_{2,p})(ds)^{-2}$ is a polynomial in $p$ of degree $4$. Now by Markov's inequality,
\begin{align*}
    \mathbb{P}\left(\max_{i,j}|dA_{ij}-1|\geq \epsilon\right)\leq \sum_{i,j}\mathbb{P}\left(|dA_{ij}-1|\geq \epsilon\right)\leq \sum_{i,j}\frac{\mathbb{E}[(dA_{ij}-1)^{2N+4}]}{\epsilon^{2N+4}}=O\left(d^{-(1+t)}\right)
\end{align*}
for all $\epsilon>0$. Therefore, $\max_{i,j}|dA_{ij}-1|\to 0$ (and hence $\max_{i,j} dA_{ij}\to 1$ and $\min_{i,j} dA_{ij}\to 1$) almost surely by the Borel-Cantelli lemma.

For (2) and (3), we again apply Lemma \ref{lem-ranDOCentry} to have
\begin{align*}
    \mathbb{E}\left[|d^{1+t'/2}B_{ij}|^{2p}\right]=\mathbb{E}\left[|d^{1+t'/2}C_{ij}|^{2p}\right]&=d^{p(2+t')}s(s+1)
    \cdots (s+p-1)\sum_{\beta\in \mathcal{S}} {\rm Wg}_{ds}(\beta)\\
    &=d^{p(2+t')}p!s^p(1+O(s^{-1}))(ds)^{-2p}(1+O((ds)^{-2})\\
    &=O(d^{pt'}s^{-p})=O(d^{-p(t-t')}).
\end{align*}
for $p\geq 0$. If we choose $N$ such that $N(t-t')\geq 4$, we further have
\begin{align*}
    \mathbb{P}\left(\max_{i,j}|d^{1+t'/2}B_{ij}|\geq \epsilon\right)&\leq \sum_{i,j}\mathbb{P}\left(|d^{1+t'/2}B_{ij}|\geq \epsilon\right)\\
    &\leq \sum_{i,j}\frac{\mathbb{E}[|d^{1+t'/2}B_{ij}|^{2N}]}{\epsilon^{2N}}=d^2 O(d^{-N(t-t')})=O(d^{-2}),
\end{align*}
and similarly $\mathbb{P}\left(\max_{i,j}|d^{1+t'/2}C_{ij}|\geq \epsilon\right)=O(d^{-2})$ for all $\epsilon>0$. This concludes the proof.
\end{proof}

\bibliography{references}
\bibliographystyle{alpha}
\bigskip
\hrule
\bigskip

\end{document}